%
%
%
%

\documentclass[a4paper,12pt]{amsart}

\usepackage[english]{babel}
\usepackage{amsmath}
\usepackage{amssymb}
\usepackage{url}
\usepackage{amsthm}
\usepackage{amsxtra}
\usepackage{mathrsfs}
\usepackage{fancyhdr}
\usepackage{hyperref}
\usepackage{verbatim}

\setlength{\textwidth}{17.4cm}
\setlength{\textheight}{22.5cm}
\setlength\parindent{0.5cm}
\setlength{\oddsidemargin}{0cm}
\setlength{\evensidemargin}{0cm}
\setlength{\marginparwidth}{2cm}
\hoffset=-0.7truecm
\voffset=0.2truecm
\footskip = 30pt
\marginparsep=-0.1cm

\newtheorem{theorem}{Theorem}
\newtheorem{lemma}{Lemma}[section]
\newtheorem{proposition}[lemma]{Proposition}

\theoremstyle{definition}


\numberwithin{equation}{section}



\newcommand{\RE}{\mathbb R}
\newcommand{\CO}{\mathbb C}
\newcommand{\NA}{\mathbb N}
\newcommand{\ID}{\mathbb I}
\newcommand{\E}{\mathbb E}

\newcommand{\EE}{\mathcal{E}}

\newcommand{\LL}{\mathcal{L}}
\newcommand{\MM}{\mathcal{M}}
\newcommand{\NN}{\mathcal{N}}
\newcommand{\PP}{\mathcal{P}}

\newcommand{\II}{{\bf 1}}
\newcommand{\wt}{\widetilde}

\newcommand{\supp}{\operatorname{supp}\,}

\newcommand{\Tr}{\operatorname{Tr}}
\newcommand{\ve}{\varepsilon}
\newcommand{\al}{\alpha}

\newcommand{\ga}{\gamma}
\newcommand{\La}{\Lambda}
\newcommand{\la}{\lambda}
\newcommand{\de}{\delta}

\renewcommand{\aa}{{\bf a}}

\newcommand{\kk}{{\bf k}}

\newcommand{\uu}{{\bf u}}

\renewcommand{\Im}{\operatorname{Im}\,}
\renewcommand{\Re}{\operatorname{Re}\,}

\renewcommand{\P}{\mathbb{P}}
\newcommand{\fP}{\mathfrak{P}}
\newcommand{\g}{{\bf b}}
\newcommand{\T}{\mathbb T}
\newcommand{\dd}{\delta}
\newcommand{\od}{\gamma}

\newcommand{\SR}{\mathcal{SR}}
\newcommand{\bsigma}{\rho}


\newcommand{\tildeeta}{\tilde\eta}


\title[]{Bounds for the Stieltjes Transform and the Density of States of Wigner Matrices}

\author[C.~Cacciapuoti]{Claudio Cacciapuoti}
\address{Hausdorff Center for Mathematics\\
Institute for Applied Mathematics, University of Bonn\\
Endenicher Allee 60, 53115 Bonn, Germany}
\curraddr{Dipartimento di Scienza e Alta Tecnologia, Universit\`a dell'Insubria, Via Valleggio 11, 22100 Como, Italy}
\email{claudio.cacciapuoti@uninsubria.it}

\author[A.~Maltsev]{Anna Maltsev}
\address{Department of Mathematics, University of Bristol\\
University Walk, Clifton, Bristol BS8 1TW, U.K.}
\email{annavmaltsev@gmail.com}

\author[B. Schlein]{Benjamin Schlein}
\address{Hausdorff Center for Mathematics\\
Institute for Applied Mathematics, University of Bonn\\
Endenicher Allee 60, 53115 Bonn, Germany}
\email{benjamin.schlein@hcm.uni-bonn.de}

\keywords{Random matrices, Wigner matrices, rigidity of the eigenvalues, rate of convergence}
\subjclass[2010]{60B20, 60B12, 47B80}

\date{\today}

\begin{document}

\begin{abstract}
We consider ensembles of Wigner matrices, whose entries are (up to the symmetry constraints) independent and identically distributed random variables. We show the convergence of the Stieltjes transform towards the Stieltjes transform of the semicircle law on optimal scales and with the optimal rate. Our bounds improve previous results, in particular from \cite{EYY,ekyy-rxv12}, by removing the logarithmic corrections. As applications, we establish the convergence of the eigenvalue counting functions with the rate $(\log N)/N$ and the rigidity of the eigenvalues of Wigner matrices on the same scale. These bounds improve the results of \cite{EYY,ekyy-rxv12,GT}.
\end{abstract}

\maketitle

\section{Introduction and main results}

Let $H$ be an $N \times N$ Wigner matrix, whose entries are independent (up to the symmetry constraints) and identically distributed random variables with zero mean and a fixed variance.
We are interested in the statistical properties of the eigenvalues of $H$ in the limit of large $N$. As already shown by Wigner in \cite{W}, the density of the eigenvalues of $H$ converges, after appropriate normalization, towards the famous semicircle law
\[ \rho_{sc} (x) = \left\{ \begin{array}{ll} \frac{1}{\pi} \sqrt{1- \frac{x^2}{4}} \quad & \text{if } |x| \leq 2 \\
0 \quad & \text{if } |x|  > 2 \end{array}\right. \]
Wigner's result concerns the density of states of $H$ on intervals with length of order one, containing typically order $N$ eigenvalues (we normalize Wigner matrices so that the typical distance between eigenvalues in the bulk of the spectrum is of the order $1/N$). It is natural to ask what happens on smaller intervals, in which the typical number of eigenvalues is large, but not macroscopically large. This question was first addressed in \cite{ESY1,ESY2,ESY3}, where the density of states of $H$ was proven to converge towards the semicircle law on microscopic intervals, containing a large but fixed number of eigenvalues (independent of $N$). In order to establish the convergence towards the semicircle law, it is useful to study the Stieltjes transform $m(z)$ of the Wigner matrix $H$, defined for $\text{Im z} > 0$ by
\begin{equation}\label{eq:mdef} m (z) = \frac{1}{N} \Tr \; \frac{1}{H-z} = \frac{1}{N} \sum_{\alpha=1}^N \frac{1}{\lambda_\alpha - z} \end{equation}
where $\lambda_\alpha$, $\alpha=1,\dots , N$, denote the eigenvalues of $H$. We observe that
\begin{equation}\label{eq:Imm} \text{Im } m (E+i\eta) = \frac{1}{N} \sum_{\alpha=1}^N \frac{\eta}{(\lambda_\alpha - E)^2 + \eta^2} \, . \end{equation}
{F}rom (\ref{eq:Imm}) it is clear that the density of states in an interval of size $\eta$ around $E$ is closely related to the imaginary part
of the Stieltjes transform at the point $z = E+i\eta$. To get information on the density of states on a small scale of order $\eta$, we need to control the Stieltjes transform $m$ at distances of order $\eta$ from the real axis. In the limit of large $N$, $m(z)$ approaches the Stieltjes transform of the semicircle law, given for all $x \in \CO \backslash [-2;2]$ by
\begin{equation}\label{eq:msc} m_{sc} (z) = \int dx \, \frac{\rho_{sc}(x)}{x-z} \,. \end{equation}
There are many reasons for analyzing the Stieltjes transform instead of looking directly at the density of states. Since every eigenvalue contributes to $m$, it seems that $m$ should be more stable with respect to small fluctuations of the eigenvalues. More importantly, the Stieltjes transform $m_{sc}$ of the semicircle law satisfies the simple fixed point equation
\begin{equation}\label{eq:fix} m_{sc} (z) + \frac{1}{z + m_{sc}(z)} = 0\,. \end{equation}
As a consequence, to prove that $m$ is close to $m_{sc}$, it is enough to show that $m$ is an approximate solution of (\ref{eq:fix}). Using this strategy and assuming the entries of $H$ to have subgaussian tails, it was shown in \cite{ESY3} (improving previous results from \cite{ESY1,ESY2}) that, for every $\kappa > 0$, there exist constants $C,c > 0$ such that
\begin{equation}\label{eq:PESY3} \P \left( |m(E+i\eta) - m_{sc}(E+i\eta)| \geq \delta \right) \leq C e^{-c\delta \sqrt{N\eta}} \end{equation}
for all $|E| \leq 2 - \kappa$ (i.e. for $E$ in the bulk of the spectrum). The estimate (\ref{eq:PESY3}) establishes the convergence of the Stieltjes transform $m(E+i\eta)$ for all $\eta \gg 1/N$ and implies therefore the convergence of the density of states on all intervals of size $\eta \gg 1/N$. It is important to observe that the scale $\eta \simeq 1/N$ is optimal. For $\eta \lesssim 1/N$, the number of eigenvalues contained in an interval of size $\eta$ is small and therefore the fluctuations of the density of states cannot be neglected. It is clear therefore that on scales $\eta \ll 1/N$ it is impossible to have convergence in probability for the Stieltjes transform and for the density of states (one has, however, convergence in expectation of the density of states, if the probability density of the matrix entries is sufficiently regular; see \cite{MS}). While (\ref{eq:PESY3}) is optimal in the sense that it gives convergence for all $\eta \gg1/N$, it turns out that it is not optimal in the size of the fluctuations around $m_{sc}$. {F}rom (\ref{eq:PESY3}), fluctuations are at most of the order $(N\eta)^{-1/2}$. This bound was substantially improved in \cite{EYY}, where the authors showed that, with high probability,
\begin{equation}\label{eq:rate}  |m(E+i\eta) -m_{sc} (E+i\eta)| \lesssim \frac{(\log N)^c}{N\eta}  \end{equation}
uniformly in $E$. In fact, the results of \cite{EYY} also apply to generalized Wigner matrices, whose entries are not required to be identically distributed. Hence, up to logarithmic corrections, the fluctuations around $m_{sc}$ are of the order $(N\eta)^{-1}$. An important observation which was used in \cite{EYY} to show (\ref{eq:rate}) is the fact that not only the Stieltjes transform,
given by the average of the diagonal entries $G_{jj} (z) = (H-z)^{-1}$ of the resolvent, converges
towards $m_{sc}$, but that also each $G_{jj} (z)$ approaches the same limit. Making use of a vectorial fixed point equations for the diagonal entries of the resolvent, it was proved in \cite{EYY} that
\begin{equation}\label{eq:Gii-con} | G_{jj} (E+i\eta) - m_{sc}(E+i\eta)| \lesssim \frac{(\log N)^c}{\sqrt{N\eta}}\,. \end{equation}
For the diagonal entries $G_{jj}$ of the resolvent, the rate predicted by (\ref{eq:Gii-con}) turns out to be optimal, up to the logarithmic corrections. In order to improve the bound (\ref{eq:Gii-con}) to the stronger rate on the r.h.s. of (\ref{eq:rate}), two important observations are needed (which go back to \cite{EYY}); first of all, the expectation of $G_{jj} - m_{sc}$ turns out to be of order $(N\eta)^{-1}$, hence much smaller than the typical size of $|G_{jj} - m_{sc}|$. On the other hand, the different resolvent entries are only weakly dependent on each other. As a consequence, the average
\[ m (E+i\eta) - m_{sc} (E+i\eta) = \frac{1}{N} \sum_{j=1}^N  \left(G_{jj} (E+i\eta) - m_{sc} (E+i\eta) \right)   \] exhibits substantial cancellations, which, similarly to what happens in the central limit theorem, make its typical size much smaller than that of the single summands (in fact the argument used in \cite{EYY} is slightly different, controlling $m-m_{sc}$ in terms of the average of certain error terms $Z_j$ which are not exactly the same as $G_{jj} - m_{sc}$; while each $Z_j$ has a typical size of order $(N\eta)^{-1/2}$, the average is much smaller because they are only weakly dependent random variables). To establish the validity of this picture, terms arising in high moments of $m-m_{sc}$ are organized in \cite{EYY} in contributions which are either small or independent of an increasing number of variables. More recently, a simpler expansion algorithm leading to bounds of the form (\ref{eq:rate}) has been developed in \cite{ekyy-rxv12}; this paper is the basis for our analysis.

\medskip

Our goal is to obtain estimates of the form (\ref{eq:rate}) but without logarithmic corrections. To simplify the analysis, we restrict our attention to Wigner matrices with identically distributed entries. We believe, however, that similar arguments would also work for generalized Wigner matrices, as considered in \cite{EYY, ekyy-rxv12}. The bounds in Theorem \ref{t:main} below establish the convergence of the imaginary part of the Stieltjes transform of Wigner matrices towards the imaginary part of the Stieltjes transform of the semicircle law on the optimal
scale $\eta \simeq 1/N$ and with the optimal rate $(N\eta)^{-1}$, for arbitrary energies $E \in \RE$. For energies inside the spectrum of $H$, Theorem \ref{t:main} also gives convergence of the real part of the Stieltjes transform on the optimal scale and with the optimal rate; more precisely, it shows that
\[ \left| m (E+i\eta) - m_{sc} (E+i\eta) \right| \lesssim \frac{1}{N\eta} \]
for all $|E| \leq 2 + \eta$. The reason why we cannot prove the convergence of the real part of the Stieltjes transform on the optimal scale and with the optimal rate outside the spectrum of $H$ (i.e. for $|E| > 2 + \eta$) is an instability of the equation for the difference $\Lambda (E+i\eta) = m (E+i\eta) - m_{sc} (E+i\eta)$; we will discuss this point after Proposition \ref{p:lambda}. If we assume $E$ to be well inside the bulk of the spectrum of $H$ (i.e. $|E| \leq 2-\kappa$, for an arbitrary but fixed $\kappa > 0$) then Theorem \ref{t:bulk} strengthens  the result of Theorem \ref{t:main}, showing a faster decay of the probability. To prove Theorem \ref{t:main} and Theorem \ref{t:bulk}, we follow the general strategy of \cite{EYY,ekyy-rxv12}. Since we do not allow logarithmic corrections, however, we need to modify several steps of the proof and, in particular, of the expansion algorithm. Since our small probabilities are independent of $N$, we cannot estimate the probability of unions of many bad events with the trivial union bound. Compared with \cite{EYY,ekyy-rxv12}, we have to work with high moments, instead of taking probabilities; similar ideas were used in \cite{ESY3} to prove the convergence towards the semicircle law on the optimal scale (but, in that case, not with the optimal rate).

\vspace{.3cm}

As an application of the optimal convergence of the Stieltjes transform, we obtain in Theorem~\ref{t:counting} bounds on the rate of convergence of the eigenvalue counting function $n (E) = N^{-1} |\{ \alpha =1 , \dots , N  : \lambda_\alpha \leq E \}|$ and of the density of states. More precisely we show that
\begin{equation}\label{eq:nnsc} |n(E) - n_{sc} (E)| \lesssim \frac{\log N}{N} \end{equation}
for all $E \in \RE$. Notice that, to prove (\ref{eq:nnsc}), convergence of the imaginary part of $m(E+i\eta)$ is not enough, we really need convergence of the full Stieltjes transform, including its real part. Fortunately, Theorem \ref{t:main} does imply convergence of $\text{Re } m (E+i\eta)$, for $|E| \leq 2$. This allows us to prove that, with high probability, there are at most $K \log N$ eigenvalues outside the interval $[-2;2]$; hence, we can establish convergence of the eigenvalue counting function $n(E)$ for all $E \in \RE$.

\vspace{.3cm}

Another application of the convergence of the Stieltjes transform is the rigidity of the eigenvalues of Wigner matrices. In Theorem \ref{t:rigidity} we show that, with high probability, the distance between an eigenvalue $\lambda_\alpha$ of the Wigner matrix $H$ and the location $\gamma_\alpha$ of the same eigenvalue as predicted by the semicircle law is smaller than $(\log N)/N$. The results that we obtain in Theorem \ref{t:counting} and in Theorem \ref{t:rigidity} on the eigenvalue counting function, the density of states and the rigidity of the eigenvalues improve similar results from \cite{EYY,ekyy-rxv12} by removing all logarithmic factors but one. While it is known that these bounds cannot hold true without logarithmic corrections, one expects eigenvalues to fluctuate on the scale $\sqrt{\log N} / N$, rather than on the scale $(\log N) / N$ obtained in Theorems \ref{t:counting} and \ref{t:rigidity} (for GUE and for hermitian ensembles whose entries have four moments matching GUE, this follows from the results of \cite{Gu} and, respectively, \cite{tv09}, where the eigenvalues are shown to be, asymptotically, gaussian). We remark, moreover, that bounds for the rate of convergence of the eigenvalue counting function, as well as rigidity estimates for the eigenvalues of Wigner matrices, have been recently obtained in \cite{GT}. Roughly speaking, the authors of \cite{GT} show with methods which are different from the ones of \cite{EYY,ekyy-rxv12} and of the present paper, that, with high probability, $|n (E) - n_{sc} (E)| \lesssim (\log N)^4 / N$.

\vspace{.3cm}

In the last years, there has been a lot of progress in the mathematical understanding of the statistical properties of the eigenvalues of Wigner matrices. As indicated above, the validity of the semicircle law for the density of states on microscopic scales was established in \cite{ESY1,ESY2,ESY3}. In the same works, the local semicircle law was applied to show the complete delocalization of the eigenvectors of Wigner matrices and to prove the repulsion among the eigenvalues. In \cite{EPRSY}, the validity of the semicircle law on microscopic intervals was used (in combination with \cite{Jo}) to prove the (bulk) universality of the local eigenvalue statistics of hermitian Wigner matrices; independently of the choice of the law of the entries, the eigenvalue correlation functions always converge towards the same Wigner-Dyson distribution observed for the Gaussian Unitary Ensemble.  At the same time, a different proof of universality for hermitian Wigner matrices was obtained in \cite{tv09} (the two results have been combined in \cite{ERSTVY}). Soon after, a new argument, based on the introduction of a relaxation flow approximating Dyson Brownian motion, was introduced in \cite{ESY6} to prove bulk universality of Wigner matrices with arbitrary symmetry. In all these proofs of universality, the local convergence of the density of states was a crucial ingredient. In \cite{EYY1},  local convergence towards the semicircle law and universality were extended to generalized Wigner matrices. Universality at the edge of the spectrum was proven in \cite{S} and more recently in \cite{tv10cmp,EYY}; a necessary and sufficient condition on the decay of the matrix entries to obtain edge universality has been proven in \cite{ly}. For sample covariance matrices, local convergence towards the Marchenko-Pastur law \cite{MP67} and universality of the local eigenvalue statistics were determined in the bulk \cite{ESYY,tv-rxv09} and at the soft edge \cite{WangSoftEdge,py12}. Local convergence towards the Marchenko-Pastur law at the hard edge of the spectrum was proven in \cite{CMS} and, implicitly, in \cite{BYY,TV4}. The local convergence towards the circular law for the density of states of non-symmetric random matrices with independent entries was established in \cite{BYY,TV4,BYY2,Y}. We notice that the local convergence of the density of states and delocalization of eigenvectors have also been obtained, in the last years, for more structured ensembles, such as the adjacency matrices of Erd{\H o}s-R{\'e}nyi graphs \cite{spa1, spa2} and band matrices \cite{ban,ban2,alt1,alt2}. A nice overview on these and other results can be found in \cite{E}.

\vspace{.3cm}

Next, we present our results in more details. Let $H = (h_{\ell j} )$ be an $N \times N$ hermitian Wigner matrix, with entries $h_{jj} = x_{jj} / \sqrt{N}$ on the diagonal and
\[ h_{\ell j} = \frac{1}{\sqrt{N}} (x_{\ell j} + i y_{\ell j}) \] for all $\ell < j$ where $x_{jj}, x_{\ell j}, y_{\ell j}$ are independent random variables. The off-diagonal variables $x_{\ell j}, y_{\ell j}$ are identically distributed with $\E \, x_{\ell j} = 0$ and $\E \, x_{\ell j}^2 = 1/2$. The diagonal entries $x_{ii}$ are also identically distributed with $\E \, x_{ii} =0$ and $\E x_{ii}^2 = 1$ (in fact the variance of the diagonal entries is not important, it should only be finite). We assume that the common distribution $\nu$ of $x_{\ell j}, x_{jj}$ and $y_{\ell j}$ has subgaussian decay, i.e., there exists $\de_0>0$ such that
\begin{equation}
\label{e:gd}
\int_\RE e^{\de_0 x^2} d\nu(x) < \infty\,.
\end{equation}
This implies that  $\E|x_{ij}|^{2q}\leq (Cq)^q$ for any $q\geq1$. We define the Stieltjes transform $m(z)$ of $H$ by (\ref{eq:mdef}) and the Stieltjes transform $m_{sc} (z)$ of the semicircle law by (\ref{eq:msc}).
%
%
%
%
\begin{theorem}\label{t:main}
Assume (\ref{e:gd}) and fix $\tildeeta > 0$.
\begin{itemize}
\item[i)] There exist constants $M_0, N_0, C, c , c_0 > 0$ such that
\begin{equation}
\label{e:mmsc}
\P\left(|m (E+i\eta) -m_{sc} (E+i\eta)|\geq \frac{K}{N\eta}\right)\leq \frac{(Cq)^{cq^2}}{K^q}
\end{equation}
for all $\eta \leq \tildeeta$, $|E| \leq 2 + \eta$,  $K > 0$, $N > N_0$ such that $N\eta \geq M_0$, $q \in \NA$ with $q \leq c_0 (N\eta)^{1/8}$.
\item[ii)] For any $\tilde{E} > 0$ there exist constants $M_0, N_0, C, c , c_0 > 0$ such that
\begin{equation}
\label{e:mmscim}
\P\left(|\Im m (E+i\eta) -\Im m_{sc} (E+i\eta)|\geq \frac{K}{N\eta}\right)\leq \frac{(Cq)^{cq^2}}{K^q}
\end{equation}
for all $\eta \leq \tildeeta$, $|E| \leq \tilde E$, $K > 0$, $N > N_0$ such that $N\eta \geq M_0$, $q \in \NA$ with $q \leq c_0 (N\eta)^{1/8}$.
\end{itemize}
\end{theorem}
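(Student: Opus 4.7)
The plan is to follow the Schur complement / self-consistent equation strategy of \cite{EYY,ekyy-rxv12}, but replace every probabilistic argument that would introduce a union bound (and hence a logarithm) by a direct estimate of high moments. Concretely, let $G = (H-z)^{-1}$ and, for each $j$, let $H^{(j)}$ denote the minor obtained by removing the $j$-th row and column, with resolvent $G^{(j)}$. The Schur complement formula gives
\[
G_{jj} = \frac{1}{h_{jj} - z - \sum_{k,\ell \neq j} \overline{h_{kj}} G^{(j)}_{k\ell} h_{\ell j}}
= \frac{1}{-z - m(z) + \Upsilon_j},
\]
where $\Upsilon_j$ collects (a) the centered quadratic form $Z_j := \sum_{k,\ell \neq j} \overline{h_{kj}} G^{(j)}_{k\ell} h_{\ell j} - \frac{1}{N}\sum_{k\neq j} G^{(j)}_{kk}$, (b) the interlacing difference $\frac{1}{N}(\Tr G - \Tr G^{(j)})$, and (c) the diagonal term $h_{jj}$. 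Setting $\La_j := G_{jj} - m_{sc}$ and $\La := m - m_{sc}$, a Taylor expansion of the Schur identity together with the fixed-point relation (\ref{eq:fix}) yields, for each $j$,
\[
\La_j = m_{sc}^2 \La + m_{sc}^2 \Upsilon_j + \text{(terms of order } \La^2 + \La\,\Upsilon_j\text{)}.
\]
Averaging over $j$ and using $(1 - m_{sc}^2)\La \approx m_{sc}^2 \, \overline{\Upsilon}$, where $\overline{\Upsilon} = N^{-1}\sum_j \Upsilon_j$, gives the approximate self-consistent equation for $\La$ that drives everything.

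The first technical task is to estimate, for all admissible $q$, the moments $\E |Z_j|^{2q}$ conditionally on $H^{(j)}$. Writing $Z_j$ as a sum of a diagonal and an off-diagonal Gaussian chaos, I would apply a Hanson--Wright / Burkholder-type bound, together with the subgaussian hypothesis (\ref{e:gd}) and the deterministic identity $\sum_{k\ell}|G^{(j)}_{k\ell}|^2 = \eta^{-1}\Im\Tr G^{(j)}$, to obtain
\[
\E\bigl[|Z_j|^{2q}\bigm|H^{(j)}\bigr] \leq (Cq)^{cq}\left(\frac{\Im m^{(j)} + |\La^{(j)}|}{N\eta}\right)^q,
\]
plus a smaller contribution from $h_{jj}$. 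The term $h_{jj}^{2q}$ itself is controlled by the subgaussian moment bound following from (\ref{e:gd}). The interlacing piece is bounded deterministically by $1/(N\eta)$. The key observation, which keeps the moment bound self-improving, is that $\Im m^{(j)}$ is essentially $\Im m_{sc} + \Im \La$; hence a bound of the form $\E|\La|^{2q} \leq X_q$ feeds back into a bound of the form $\E|Z_j|^{2q} \leq (Cq)^{cq}(N\eta)^{-q}(\Im m_{sc} + X_q^{1/q})^q$.

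Now I would take $2q$-th moments of the averaged self-consistent equation. A crucial point is that the $Z_j$ are only weakly correlated: after conditioning on $H^{(j)}$, each $Z_j$ is centered in the appropriate entries, so high moments of $\overline{\Upsilon}$ gain an extra factor $1/\sqrt{N}$ over the naive bound, exactly as in the CLT for independent variables. To turn this into a quantitative gain in high moments, I would run the expansion algorithm of \cite{ekyy-rxv12} but organized entirely on the level of moments rather than on bad events, so that there are no union bounds and no logarithms. This yields a bootstrap inequality of the form
\[
\E|\La|^{2q} \leq \frac{(Cq)^{cq^2}}{(N\eta)^{2q}} + \frac{(Cq)^{cq}}{(N\eta)^q}\,\E|\La|^{2q-1} + |1-m_{sc}^2|^{-2q}\,\cdot\,\text{(stability factor)},
\]
from which the desired $\E|\La|^{2q} \leq (Cq)^{cq^2}/(N\eta)^{2q}$ follows by solving the resulting quadratic; Markov's inequality then gives (\ref{e:mmsc}). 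The dichotomy between (i) and (ii) arises because the scalar stability factor $(1-m_{sc}^2)^{-1}$ is bounded for $|E|\leq 2+\eta$ (so the full equation can be inverted, giving both real and imaginary parts), whereas outside the spectrum only the imaginary part of the equation remains stably invertible; this is exactly what yields part (ii) at arbitrary energies $|E|\leq\tilde E$.

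The main obstacle I expect is the bootstrap step: to avoid logarithmic corrections, the expansion must be controlled uniformly in $q$ up to $q \lesssim (N\eta)^{1/8}$, which forces very sharp tracking of the combinatorial constants $(Cq)^{cq^2}$ through each iteration of the self-consistent equation, and careful handling of the terms quadratic in $\La$ (which can otherwise blow the bound up by factors of $\log N$ when fed back naively). The rest of the argument -- Schur expansion, Hanson--Wright bounds for $Z_j$, and solving the scalar stability problem -- is essentially routine machinery from \cite{EYY,ekyy-rxv12} applied to moments.
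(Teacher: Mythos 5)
Your overall strategy is the same as the paper's: a moment-level version of the \cite{EYY,ekyy-rxv12} scheme (Schur complement, Hanson--Wright for the centered quadratic forms $Z_j$, a self-consistent equation for $\Lambda$, and the expansion algorithm run on high moments rather than on bad events, with $q\lesssim (N\eta)^{1/8}$ and constants $(Cq)^{cq^2}$). However, there is a genuine gap in your treatment of the edge and of the dichotomy between parts (i) and (ii). You assert that the stability factor $(1-m_{sc}^2)^{-1}$ is bounded for $|E|\leq 2+\eta$; it is not --- near $E=\pm 2$ one has $|1-m_{sc}^2|\sim\sqrt{\kappa+\eta}$, which can be as small as $\sqrt{\eta}$, so a bootstrap inequality carrying a factor $|1-m_{sc}^2|^{-2q}$ as you wrote it loses the optimal rate $K/(N\eta)$ precisely in the edge regime that part (i) must cover. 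The paper avoids this by solving the quadratic equation for $\Lambda$ explicitly and proving the deterministic two-sided bound $|\Lambda|\leq C\min\{|R|/|m_{sc}^2-1|,\sqrt{|R|}\}$ (Proposition \ref{p:lambda}); in the final step it splits according to whether $\E|\Im\Lambda|^{2q}\leq(\Im m_{sc})^{2q}+\eta^{2q}$, and the elementary inequalities $\Im m_{sc}/|m_{sc}^2-1|\leq 1$ and $\eta/|m_{sc}^2-1|\leq\sqrt{\eta}(1+\sqrt{\eta})$ (Proposition \ref{p:msc}) are exactly what absorbs the degenerating denominator; your proposal contains no substitute for this mechanism.

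Relatedly, your explanation of why (i) is restricted to $|E|\leq 2+\eta$ is not the actual obstruction. It is not that ``the full equation can be inverted'' there (the linear coefficient still degenerates at the edge); it is that the relevant solution is $\Lambda=-\sqrt{z^2/4-1}+\sqrt{z^2/4-1-R}$, and the real part of the chosen branch of the square root is discontinuous across the positive real axis: for $|E|>2+\eta$ a tiny $R$ can push $z^2/4-1-R$ to the other side of the cut and make $\Re\Lambda$ of order one, while the imaginary part of the square root is continuous, which is what rescues part (ii) at all energies. Two further points you should repair: the quantity fed back through Hanson--Wright is $\Tr|G^{(j)}|^2/N^2=\Im m^{(j)}/(N\eta)+O((N\eta)^{-2})$, so the control parameter must involve only $\Im\Lambda$ (your version with $|\Lambda^{(j)}|$ does not close outside $[-2-\eta,2+\eta]$, where $|\Lambda|$ is not small); and the restriction $q\lesssim(N\eta)^{1/8}$ does not come merely from ``tracking combinatorial constants'' but from a priori moment bounds on $|G_{jj}|$ and on $|(\E_j G_{jj}^{-1})^{-1}|$ (Lemmas \ref{l:boundG11} and \ref{lwidetildeG}), which are themselves proved by a bootstrap in $\eta$ and are needed before the expansion can be estimated.
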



\medskip

In the bulk of the spectrum, away from the edges, we can improve the bound (\ref{e:mmscim}), showing that it holds for all $q \in \NA$ (in contrast to (\ref{e:mmscim}), which we only prove under the additional condition
$q \leq c (N\eta)^{1/8}$).
\begin{theorem}\label{t:bulk}
Assume (\ref{e:gd}), fix $\tildeeta>0$ and $\kappa > 0$. Then there exist constants
$M_0, N_0, C, c  > 0$ such that
\[
\P\left(|m (E+i\eta) -m_{sc} (E+i\eta)|\geq \frac{K}{N\eta}\right)\leq \frac{(Cq)^{cq^2}}{K^q}
\]
for all $E \in [-2+\kappa ; 2-\kappa]$, $K > 0$, $N > N_0$, $\eta \leq \tildeeta$ such that $N\eta \geq M_0$ and $q \in \NA$.
\end{theorem}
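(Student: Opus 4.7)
The plan is to follow the overall strategy used for Theorem \ref{t:main}---a bootstrap based on the self-consistent equation for $\Lambda(z):=m(z)-m_{sc}(z)$ combined with high-moment bounds on quadratic-form error terms---but to replace the delicate edge-stability argument by the much stronger linear stability available strictly inside the bulk. For $|E|\leq 2-\kappa$ and $\eta\leq \tildeeta$ one has $|z+2m_{sc}(z)|\geq c(\kappa)>0$ and $\Im m_{sc}(z)\geq c(\kappa)>0$ uniformly in $\eta$. This is in sharp contrast to the edge regime, where only $|z+2m_{sc}|\gtrsim \sqrt{\eta}$ is available, and where the $\eta$-dependent loss this induces in the bootstrap is exactly what forces the restriction $q\leq c_0(N\eta)^{1/8}$ in Theorem \ref{t:main}.

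First I would apply Theorem \ref{t:main}(i) with $q_0$ of order $(N\eta)^{1/8}$ to obtain the a priori estimate $|\Lambda(E+i\eta)|\leq \varepsilon_0$ with overwhelming probability, for any prescribed small absolute constant $\varepsilon_0>0$. On this event the self-consistent equation for $\Lambda$, schematically of the form $(z+2m_{sc})\Lambda+\Lambda^{2}=\mathcal{E}$ with $\mathcal{E}$ an average of the Schur-complement error terms $Z_j$, can be linearised to yield the deterministic bound $|\Lambda|\leq C(\kappa)|\mathcal{E}|$. The problem thereby reduces to proving the high-moment estimate $\E|\mathcal{E}|^{2q}\leq (Cq)^{cq^{2}}/(N\eta)^{2q}$ valid for all $q\in\NA$.

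To obtain this, I would expand the $2q$-fold product of $\mathcal{E}$ along the graph-theoretic scheme of \cite{ekyy-rxv12} already exploited in the proof of Theorem \ref{t:main}. Each $Z_j$ is a quadratic form in subgaussian entries, so Hanson--Wright-type inequalities give $\E|Z_j|^{2q}\leq (Cq/(N\eta))^{q}$ once the bulk bound $\Im m\leq C(\kappa)$ is inserted on the a priori event. The additional gain in $\E|N^{-1}\sum_j Z_j|^{2q}$ comes from the vanishing-expectation identities and the near-independence of the $Z_j$; crucially, the stability constants appearing in all the auxiliary fixed-point equations in this bookkeeping are $\kappa$-dependent but $\eta$-independent, so the resulting moment bound does not degrade as $\eta\to 0$ and therefore holds for arbitrarily large $q$. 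A Markov inequality, together with the deterministic bound $|\Lambda|\leq 2/\eta$ used to handle the complementary event, then delivers the probability estimate claimed in Theorem \ref{t:bulk}.

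The main obstacle is precisely the combinatorial bookkeeping of this last step: one must verify that, in the bulk, every vertex contraction and every intermediate stability constant in the expansion can be bounded by $\kappa$-dependent quantities independent of $\eta$, with combinatorial factors growing at most like $(Cq)^{cq^{2}}$. This is what allows the expansion to close uniformly in $q$ rather than only up to $q\simeq (N\eta)^{1/8}$ as was necessary near the edge. A secondary subtlety is that each invocation of the a priori bound $\Im m\leq C(\kappa)$ inside the moment expansion must itself be controlled by a probability estimate (ultimately supplied by Theorem \ref{t:main}(i)) strong enough that the failure event contributes only a negligible tail to the final moment bound.
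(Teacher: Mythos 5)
There is a genuine gap, and it starts with a misdiagnosis of where the restriction $q\leq c_0(N\eta)^{1/8}$ in Theorem \ref{t:main} comes from. It is not caused by the weaker stability $|z+2m_{sc}|\gtrsim\sqrt{\kappa+\eta}$ near the edge: part i) of Theorem \ref{t:main} already covers bulk energies $|E|\leq 2-\kappa$ and still carries the same restriction on $q$. The restriction originates in the moment bounds for individual diagonal resolvent entries, $\E|G_{11}|^q$ (Lemma \ref{l:boundG11}) and $\E|(\E_1 G_{11}^{-1})^{-1}|^q$ (Lemma \ref{lwidetildeG}), which are proved by a bootstrap in $\eta$ that only works for $q\lesssim (N\eta)^{1/4}$; the expansion behind Lemma \ref{l:R3} needs moments of order $\sim q^2$ of such diagonal entries (they populate the factors $\widetilde m_s,\widetilde M_s,\widetilde L_s$), whence $q^2\lesssim (N\eta)^{1/4}$. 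Your proposal never addresses how to control these high moments for arbitrary $q$; the phrase ``the stability constants in the auxiliary fixed-point equations are $\kappa$-dependent but $\eta$-independent'' does not touch the actual obstruction. The new ingredient in the paper's proof of Theorem \ref{t:bulk} is precisely this: in the bulk one proves $\E|G_{11}|^q\leq (Cq)^{cq}$ and $\E|(\E_1 G_{11}^{-1})^{-1}|^q\leq (Cq)^{cq}$ for \emph{all} $q$ (Lemmas \ref{l:bulk2} and \ref{l:bulk1}), via an upper bound on the local density of states and a no-large-gap estimate (Propositions \ref{p:wwb} and \ref{p:gap}, resting on the exponential-probability results of \cite{ESY3}), which yield a stretched-exponential tail $\P(|G_{11}|\geq t)\leq Ce^{-ct^{1/24}}$ uniform in $\eta>1/N$; with these inputs the proof of Theorem \ref{t:main} goes through verbatim without any cap on $q$.

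Your proposed substitute---conditioning on an a priori event $\{|\Lambda|\leq\varepsilon_0\}$ supplied by Theorem \ref{t:main}(i) and using the deterministic bound $|\Lambda|\leq 2/\eta$ (or $|G_{11}|\leq 1/\eta$) on the complement---cannot close. The probability that Theorem \ref{t:main}(i) provides for the bad event is $(Cq_0)^{cq_0^2}K^{-q_0}$ with $q_0\leq c_0(N\eta)^{1/8}$ and $K\simeq\varepsilon_0 N\eta$; optimizing over admissible $q_0$ this is never smaller than roughly $\exp(-C(\log N\eta)^2)$, and is far from overwhelming. For a moment bound of order $2q$ with arbitrary $q$ (in applications $q\sim\log N$ and larger), the bad-event contribution $(C/\eta)^{2q}\P(\mathrm{bad})$ then swamps the target $(Cq)^{cq^2}(N\eta)^{-2q}$, so the expansion cannot be made to hold ``uniformly in $q$'' by this route. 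The same problem recurs inside the expansion every time you invoke ``$\Im m\leq C(\kappa)$ on the a priori event'': what is needed is a tail bound in the size variable of the resolvent entry with constants uniform in $\eta$, not a good/bad decomposition whose bad part carries an $\eta^{-1}$-sized deterministic bound. That tail bound is exactly what the eigenvalue-counting and gap arguments of Section \ref{sec:bulk} supply and what is missing from your plan.
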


\medskip

As an application of Theorem \ref{t:main}, we prove the convergence of the counting function of the eigenvalues. We define
\[
 n(E) = \frac1N |\{\al:\, \la_\al\leq E\}|
\]
and compare it with the cumulative distribution of the semicircle law, defined by
\[
 n_{sc}(E) = \int_{-\infty}^E \rho_{sc}(x) dx\,.
\]

\begin{theorem}\label{t:counting}
Assume (\ref{e:gd}). Then there exists constants $N_0, C,c > 0$ such that
\begin{equation}\label{eq:count}
\P\left(|n(E) -n_{sc}(E)| \geq \frac{K\log N}{N}\right) \leq \frac{(Cq)^{cq^2}}{K^q}
\end{equation}
for all $E\in\RE$, $K > 0$, $N > N_0$, $q \in \NA$.
\end{theorem}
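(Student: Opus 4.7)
The plan is to use the Helffer--Sj\"ostrand formula to express $n(E) - n_{sc}(E)$ as a contour integral of $m(z) - m_{sc}(z)$, and then to control this integral using Theorem~\ref{t:main}. Set $\eta_0 = M_0/N$, the smallest scale at which Theorem~\ref{t:main}(i) applies. The logarithmic factor in (\ref{eq:count}) will arise naturally from integrating the pointwise bound $|m(z) - m_{sc}(z)| \lesssim K/(Ny)$ over $y \in [\eta_0, 1]$, producing $\int_{\eta_0}^1 dy/y \sim \log N$.

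For each $E$, I would introduce a smooth cutoff $\chi_E$ approximating the indicator $\II_{(-\infty, E]}$ on scale $\eta_0$, and a second-order almost-analytic extension $\widetilde{\chi_E}(x+iy) = \psi(y)[\chi_E(x) + iy\,\chi_E'(x) - (y^2/2)\chi_E''(x)]$ with $\psi$ a smooth cutoff to $|y| \leq 1$. The Helffer--Sj\"ostrand formula then gives
\[ \frac{1}{N}\Tr \chi_E(H) - \int \chi_E(x) \rho_{sc}(x)\,dx = -\frac{1}{\pi} \int_{\CO} \partial_{\bar z} \widetilde{\chi_E}(z)\,(m(z) - m_{sc}(z))\,d^2 z, \]
and the left-hand side differs from $n(E) - n_{sc}(E)$ by at most $N^{-1}$ times the number of eigenvalues in $[E, E+\eta_0]$, plus an $O(\eta_0)$ deterministic error from $n_{sc}$. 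I would split the right-hand integral into three regions: (a) $|y| \geq 1$, where $\psi'$ is supported and $|m - m_{sc}| \lesssim 1/N$ trivially; (b) $\eta_0 \leq |y| \leq 1$, the main region, where $|\partial_{\bar z}\widetilde{\chi_E}| \lesssim y^2/\eta_0^3$ is supported on $|x-E| \lesssim \eta_0$ and Theorem~\ref{t:main}(i) yields $|m(z) - m_{sc}(z)| \leq K/(Ny)$, so that the integral produces the main contribution of size $(K\log N)/N$; (c) $|y| \leq \eta_0$, handled via the monotonicity of $y \mapsto y\,\Im m(x+iy)$ together with the $y^2$ gain from the second-order extension, which tames the small-$y$ divergence.

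For $|E| > 2 + \eta_0$, where Theorem~\ref{t:main}(i) does not apply directly, I would exploit $n_{sc}(E) \in \{0,1\}$ and reduce the claim to counting outlier eigenvalues. The subgaussian assumption (\ref{e:gd}) together with a standard moment bound gives $\|H\|_{\mathrm{op}} \leq 3$ with any polynomial probability, so it suffices to bound the number of eigenvalues in $[-3, -2-\eta_0] \cup [2+\eta_0, 3]$; this follows from applying the contour-integral argument to a cutoff supported outside $[-2,2]$ and invoking the bounds at $E = \pm(2+\eta_0)$. The main technical obstacle is the restriction $q \leq c_0(N\eta)^{1/8}$ in (\ref{e:mmsc}), which at $\eta = \eta_0 = M_0/N$ permits only constant $q$, while Theorem~\ref{t:counting} asserts the moment bound for all $q \in \NA$. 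This is resolved by a dyadic decomposition of the $y$-integral: on the scale $y \in [2^k\eta_0, 2^{k+1}\eta_0]$ the constraint becomes $q \leq c_0 (2^k M_0)^{1/8}$, which is permissive for all but the bottommost scales; the first few dyadic layers, where the constraint is restrictive, contribute at most $O(1/N)$ thanks to the $y^2$ factor in $|\partial_{\bar z}\widetilde{\chi_E}|$, so arbitrary $q$ can be accommodated at the cost of only the logarithm already present in the bound.
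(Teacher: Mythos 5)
Your overall architecture (Helffer--Sj\"ostrand representation, monotonicity of $y\mapsto y\,\Im m(x+iy)$ near the real axis, the logarithm from $\int dy/y$, and the reduction of $|E|>2$ to counting outliers) matches the paper's proof, but the step you single out as the main obstacle --- the restriction $q\leq c_0(N\eta)^{1/8}$ in (\ref{e:mmsc}) --- is not resolved by your dyadic argument, and this is a genuine gap. You fix the smoothing scale at $\eta_0=M_0/N$ independently of $q$, so the layers $2^k\eta_0\leq y\leq 2^{k+1}\eta_0$ with $2^k M_0\lesssim q^8$ (i.e.\ $\eta_0\leq y\lesssim q^8/N$) are exactly those where Theorem \ref{t:main} cannot be applied with the exponent $q$ you need. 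Your claim that these layers contribute $O(1/N)$ ``thanks to the $y^2$ factor'' is false: on the support of $\chi_E'''$ one has $|\partial_{\bar z}\widetilde{\chi_E}|\sim y^2|\chi_E'''|\sim y^2/\eta_0^3$ over an $x$-interval of length $\sim\eta_0$, so the deterministic weight of the $k$-th layer is of order $2^{3k}\eta_0$, and with the only bounds available there (the trivial $|\Lambda|\lesssim 1/y$, or the non-optimal Lemma \ref{l:no}, which itself requires $q\leq (Ny)^{1/4}$) the layer contributes $O(2^{2k})$, i.e.\ up to $O(q^{16})$ in total, with no $N^{-1}$ decay whatsoever. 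Nor can you apply Theorem \ref{t:main} there with a smaller exponent $q'\leq c_0(2^kM_0)^{1/8}$: that only yields tails $K^{-q'}$, which cannot be upgraded to the $K^{-q}$ decay asserted in (\ref{eq:count}). A related caution: your phrase ``Theorem \ref{t:main}(i) yields $|m-m_{sc}|\leq K/(Ny)$'' suggests a pointwise high-probability bound over a continuum of $z$; since the probabilities here do not improve with $N$, one must work with moments $\E|\Lambda(x+iy)|^q$ inside the integral (via Minkowski/H\"older), as the paper does, rather than union bounds.

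The paper's resolution is different and is the one ingredient your proposal is missing: the smoothing scale is chosen $q$-dependent, $\tilde\eta=Cq^8/N$ (with the statement trivial when $N<q^8$, since $0\leq n\leq 1$). Then every height $y\geq\tilde\eta$ appearing in the integral satisfies $q\leq c_0(Ny)^{1/8}$, so Theorem \ref{t:main}(i) applies with the full exponent $q$ throughout, the region $y<\tilde\eta$ is handled by the monotonicity trick exactly as in your region (c), and all the extra powers of $q$ (e.g.\ the smoothing error $\sim\tilde\eta\sim q^8/N$) are absorbed into the prefactor $(Cq)^{cq^2}$. Your proof could be repaired in the same spirit by pushing the monotonicity argument up to height $\sim q^8/N$ instead of only up to $\eta_0$ --- note that with your second-order extension the low-$y$ term involves only $\Im\Lambda$ after taking real parts, so monotonicity is applicable --- but that amounts to making the effective regularization scale $q$-dependent, which is precisely the paper's device; as written, with the restrictive layers dismissed deterministically, the argument does not close.
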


{\it Remark:} Theorem \ref{t:counting} immediately implies the convergence of the density of states. Let $\mathcal{N} [a;b]$ denote the number of eigenvalues in the interval $[a;b]$. {F}rom (\ref{eq:count}), we find
\begin{equation}\label{eq:DOS} \P \left( \left| \frac{\mathcal{N} [E - \frac{\xi}{2N}; E+ \frac{\xi}{2N}]}{\xi} - \rho_{sc} (E) \right| \geq \frac{K \log N}{\xi} \right) \leq \frac{(Cq)^{cq^2}}{K^q} \, \end{equation}
for all $\xi > 0$.



\medskip

Another application of Theorem \ref{t:main} is the rigidity of the eigenvalues of $H$, as stated in the following theorem. Recall that the classical location of the $\alpha$-th eigenvalue,  here denoted by $\gamma_\al$,  is defined by
\[
\int_{-\infty}^{\gamma_{\al}} \rho_{sc}(x) dx = \frac{\al}{N} \qquad 1\leq\al\leq N
\]
\begin{theorem}\label{t:rigidity}
Assume (\ref{e:gd}). For $\alpha = 1, \dots , N$, let $\hat \al = \min\{\al, N+1-\al\}$. Then there exist constants $C, c, N_0,\varepsilon > 0$ such that
\begin{equation}\label{eq:rigidity} \P\left(|\la_\al - \ga_\al| \geq \frac{K\log N}{N}  \left(\frac{N}{\hat \al}\right)^{\frac13}\right) \leq    \frac{(Cq)^{cq^2}}{K^q} \, \end{equation}
for all $N > N_0$,$K > 0$, $q \in \NA$ with $q \leq N^\varepsilon$.
\end{theorem}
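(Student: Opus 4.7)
The plan is to derive Theorem~\ref{t:rigidity} from the counting-function bound of Theorem~\ref{t:counting} by inverting the classical relation $\alpha/N = n_{sc}(\gamma_\alpha)$. The key observation is that if $\lambda_\alpha$ is too far from $\gamma_\alpha$, then the empirical counting function $n$ must disagree with $n_{sc}$ at an energy near $\gamma_\alpha$ by more than the rate \eqref{eq:count} allows. So it should suffice to apply Theorem~\ref{t:counting} at two deterministic energies $E_\pm := \gamma_\alpha \pm \delta$, where $\delta := \tfrac{K\log N}{N}\bigl(\tfrac{N}{\hat\alpha}\bigr)^{1/3}$ is the target bound.

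Concretely, fix $\alpha$ and assume without loss of generality $\hat\alpha = \alpha$. If $\lambda_\alpha > E_+$, then at most $\alpha-1$ eigenvalues lie at or below $E_+$, so $n(E_+) \leq (\alpha-1)/N$; combined with $n_{sc}(\gamma_\alpha) = \alpha/N$, this gives
\[
n_{sc}(E_+) - n(E_+) \geq \frac{1}{N} + \int_{\gamma_\alpha}^{\gamma_\alpha + \delta} \rho_{sc}(x) \, dx.
\]
From the asymptotic $2 - |\gamma_\alpha| \asymp (\hat\alpha/N)^{2/3}$ one has $\rho_{sc}(x) \gtrsim (\hat\alpha/N)^{1/3}$ on a neighborhood of $\gamma_\alpha$ of size comparable to $(\hat\alpha/N)^{2/3}$; when $\delta$ lies inside this neighborhood, the integral is bounded below by $c\delta (\hat\alpha/N)^{1/3} = cK(\log N)/N$, contradicting \eqref{eq:count} for a suitably chosen universal constant. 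The lower tail $\{\lambda_\alpha < \gamma_\alpha - \delta\}$ is treated symmetrically at $E_-$. The union-bound factor $2$ from the two applications of Theorem~\ref{t:counting} is absorbed into the constants.

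The main obstacle is the extreme edge regime, where $\hat\alpha \lesssim \log N$ and $\delta$ exceeds the scale $(\hat\alpha/N)^{2/3}$ on which the above linearization of $\rho_{sc}$ is valid. Two sub-issues arise. For the "inward" displacement the point $\gamma_\alpha \pm \delta$ is pushed farther into the bulk, but the full square-root behavior $\rho_{sc}(x) \asymp \sqrt{(2-|x|)_+}$ still gives $\int_{\gamma_\alpha}^{\gamma_\alpha + \delta} \rho_{sc} \gtrsim \delta^{3/2}$, which for $\delta$ of order $K\log N / N^{2/3}$ is of order $(K\log N)^{3/2}/N$, comfortably above the counting-function threshold $(\log N)/N$. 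For the "outward" displacement, however, $\lambda_\alpha$ may leave $[-2,2]$ and then $n_{sc}$ is flat, so a single stray eigenvalue contributes only $1/N$ to $|n - n_{sc}|$, well below the $(\log N)/N$ level of \eqref{eq:count}. To close this case one must show separately that no eigenvalue lies below $-2 - K\log N/N^{2/3}$ or above $2 + K\log N/N^{2/3}$. A natural route is to apply Theorem~\ref{t:main}(i) at energies $E$ just outside the spectrum with $\eta$ of order $\delta$: an eigenvalue within distance $\eta$ of $E$ forces $\Im m(E+i\eta) \geq c/(N\eta)$, contradicting the optimal bound of Theorem~\ref{t:main}. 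This edge-repulsion argument, combined with a discretization over $N^{\OO(1)}$ choices of $E$ (whose cost is absorbed by the condition $q \leq N^\varepsilon$ for $\varepsilon$ small enough), is the technically delicate part of the proof.
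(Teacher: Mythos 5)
Your bulk argument is essentially the paper's: the paper also reduces rigidity to Theorem \ref{t:counting} via the mean value theorem, using $\rho_{sc}(x^*)\gtrsim(\hat\alpha/N)^{1/3}$ near $\gamma_\alpha$, and treats upward and downward deviations separately. The genuine gap is in your treatment of the edge case, i.e.\ the claim that one can rule out eigenvalues beyond $2+K\log N/N^{2/3}$ by applying Theorem \ref{t:main} at energies just outside the spectrum. A single eigenvalue within distance $\eta$ of $E$ produces $\Im m(E+i\eta)\geq \tfrac{1}{2N\eta}$, which is a deviation of size \emph{exactly} $(N\eta)^{-1}$ — the scale of fluctuations that Theorem \ref{t:main} permits. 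In the notation of (\ref{e:mmscim}) this corresponds to $K$ of order one, for which the bound $(Cq)^{cq^2}/K^q$ is useless; no choice of $\eta$ fixes this, because the eigenvalue's contribution never exceeds $O((N\eta)^{-1})$. (Also, for $|E|>2+\eta$ only part ii) of Theorem \ref{t:main} is available, not part i).) So "contradicting the optimal bound of Theorem \ref{t:main}" does not go through as stated, and this is precisely the technically new ingredient you would be missing.

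What the paper does instead is to upgrade the estimate \emph{outside} the spectrum: using $\Im m_{sc}(E+i\eta)\asymp \eta/\sqrt{\kappa+\eta}$ for $|E|\geq 2$ (see (\ref{msc1-2})) and the already-proven bound $\E|\Im\Lambda|^{2q}\lesssim (N\eta)^{-2q}$ as input, it re-runs the self-consistent equation (the bound (\ref{e:R}) fed into (\ref{e:Lambda1}), with the factor $|m_{sc}^2-1|^{-1}\asymp(\kappa+\eta)^{-1/2}$) to obtain the improved estimate (\ref{stronger}), in which the fluctuation of $\Im m$ is much smaller than $(N\eta)^{-1}$ when $\kappa\gg\eta$. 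Only then does the presence of one eigenvalue near $E$ become a large deviation: Lemma \ref{l:extremalevs} chooses $\kappa_j=(K+j)/N^{2/3}$, $\eta_j=(K+j)^{1/8}/N^{2/3}$ so that $\tfrac{1}{4N\eta_j}$ exceeds the allowed fluctuation by a factor $(K+j)^{1/2}$, and sums over a dyadic/unit grid of deterministic energies. A rough a priori bound, Eq.~(\ref{eq:lm72}), is still needed to control the region far outside the spectrum and the regime of very large $K$; it is this input (together with Lemma \ref{l:extremalevs}) that forces the restriction $q\leq N^\varepsilon$, not the cardinality of your discretization. If you replace your edge-repulsion step by a derivation of (\ref{stronger}) and an argument along the lines of Lemma \ref{l:extremalevs}, the rest of your proof matches the paper's.
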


{\it Remark:} In contrast with (\ref{e:mmsc}) and (\ref{e:mmscim}), the bounds (\ref{eq:count}), (\ref{eq:DOS}) and (\ref{eq:rigidity}) are not optimal. By the results of \cite{Gu} for GUE, the eigenvalues are expected to fluctuate on the scale $\sqrt{\log N}/N$.

\medskip

The plan of the paper is as follows. In Section \ref{sec:preli} we collect some preliminary results which will be needed in our analysis. In particular, we prove some bounds on the Stieltjes transform of the semicircle law, and we discuss certain algebraic identities concerning the entries of the resolvent $G(z) = (H-z)^{-1}$. In Section \ref{sec:no-opt}, we derive a first non-optimal estimate on the fluctuations of the Stieltjes transform $m$ around $m_{sc}$, see, in particular, Lemma \ref{l:no}.
In Section \ref{sec:opt}, we prove Theorem \ref{t:main}, assuming a bound for the high moments of certain error terms, as stated in Lemma \ref{l:R3}. Section \ref{sec:pf} is then devoted to the proof of Lemma \ref{l:R3}; it is here that, to obtain optimal bounds on $m-m_{sc}$, we make use of an expansion algorithm similar to the one introduced in \cite{ekyy-rxv12}. In Section \ref{sec:bulk}, we prove Theorem \ref{t:bulk}. Finally, in Section \ref{sec:counting} and Section \ref{sec:rigidity}, we prove Theorem \ref{t:counting} and, respectively, Theorem \ref{t:rigidity}.

\section{Preliminary results}
\label{sec:preli}

\subsection{Properties of the Stieltjes transform of the semicircle law}

In the next proposition, we give simple proofs of several useful bounds on the Stieltjes transform $m_{sc}$ of the semicircle law, as defined in (\ref{eq:msc}).
\begin{proposition}\label{p:msc}
Let $z=E+i\eta$. For arbitrary $E\in \RE$ and $\eta > 0$, we have $(1+|z|)^{-1} < |m_{sc} (z)| < 1$ and the bounds
\begin{equation}\label{eq:msc-bds}
\begin{split}
\frac{\Im m_{sc} (z)}{|m_{sc}^2 (z) - 1 |}  &\leq 1 \qquad \textrm{and} \quad
\frac{\eta}{|m_{sc}^2 (z) - 1|} \leq \sqrt \eta \, (1+\sqrt \eta )\,.
\end{split}
\end{equation}
\end{proposition}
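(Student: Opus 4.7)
The plan is to exploit the fixed point equation $m_{sc}^2+zm_{sc}+1=0$ together with the elementary algebraic identity
\[
|a^2-1|^2=(1-|a|^2)^2+4(\Im a)^2,
\]
valid for every $a\in\CO$ (one simply expands $a=\Re a + i\Im a$). Essentially everything in the proposition will fall out of this identity combined with the identity obtained by taking imaginary parts of $m_{sc}+1/m_{sc}=-z$.

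I would first dispatch the bounds on $|m_{sc}|$. Taking imaginary parts of $m_{sc}+1/m_{sc}=-z$ gives
\[
(\Im m_{sc})(1-|m_{sc}|^2)=\eta|m_{sc}|^2,
\]
which forces $|m_{sc}|<1$ because $\Im m_{sc}>0$ for $\eta>0$. Rewriting the fixed point equation as $|m_{sc}||z+m_{sc}|=1$ and using $|z+m_{sc}|\le|z|+|m_{sc}|<1+|z|$ yields $|m_{sc}|>(1+|z|)^{-1}$. Applied with $a=m_{sc}(z)$, the algebraic identity then delivers at once the two lower bounds
\[
|m_{sc}^2-1|\ge 2\,\Im m_{sc}\qquad\text{and}\qquad |m_{sc}^2-1|\ge 1-|m_{sc}|^2,
\]
the first of which is already (a factor of $2$) stronger than the first inequality in \eqref{eq:msc-bds}.

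For the second inequality the idea is to plug both of the above lower bounds into the imaginary-part identity to get
\[
\eta|m_{sc}|^2=(\Im m_{sc})(1-|m_{sc}|^2)\le|m_{sc}^2-1|^2,
\]
i.e.\ $|m_{sc}^2-1|\ge\sqrt\eta\,|m_{sc}|$. I would then split into two cases according to whether $|m_{sc}|\ge(1+\sqrt\eta)^{-1}$ or $|m_{sc}|<(1+\sqrt\eta)^{-1}$. In the first case the previous display at once gives $|m_{sc}^2-1|\ge\sqrt\eta/(1+\sqrt\eta)$, equivalent to the claim. In the second case $1-|m_{sc}|^2>\sqrt\eta(2+\sqrt\eta)/(1+\sqrt\eta)^2$, hence $|m_{sc}^2-1|\ge 1-|m_{sc}|^2>\sqrt\eta(2+\sqrt\eta)/(1+\sqrt\eta)^2$, and a short calculation using $(1+\sqrt\eta)/(2+\sqrt\eta)\le 1$ finishes.

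The only mild subtlety is the observation that neither of the two lower bounds on $|m_{sc}^2-1|$ is strong enough by itself: $|m_{sc}^2-1|\ge\sqrt\eta|m_{sc}|$ degenerates when $|m_{sc}|\to 0$, i.e.\ for $|z|$ large, while $|m_{sc}^2-1|\ge 1-|m_{sc}|^2$ degenerates when $|m_{sc}|\to 1$, i.e.\ near the spectral edge $E\approx\pm 2$, $\eta\to 0$. The dichotomy on $|m_{sc}|$ versus $(1+\sqrt\eta)^{-1}$ is designed precisely to select the winning bound in each regime, and is the single nonroutine step of the argument.
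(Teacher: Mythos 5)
Your proposal is correct. The bounds $(1+|z|)^{-1}<|m_{sc}|<1$ are obtained exactly as in the paper, from the imaginary part of $m_{sc}+1/m_{sc}=-z$ (the identity $(1-|m_{sc}|^2)\,\Im m_{sc}=\eta|m_{sc}|^2$) and from the fixed point equation. Where you diverge is in how the two lower bounds on $|m_{sc}^2-1|$ are produced and combined. The paper proves $|m_{sc}^2-1|\geq \Im m_{sc}$ by factoring $|m_{sc}^2-1|=|m_{sc}-1|\,|m_{sc}+1|$ and splitting into the cases $|\Re m_{sc}-1|\geq 1$ and $|\Re m_{sc}-1|\leq 1$; your exact identity $|a^2-1|^2=(1-|a|^2)^2+4(\Im a)^2$ (which is easily verified by expansion) yields both $|m_{sc}^2-1|\geq 2\,\Im m_{sc}$ and $|m_{sc}^2-1|\geq 1-|m_{sc}|^2$ in one stroke, with no case analysis and with a better constant in the first bound. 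For the second inequality of \eqref{eq:msc-bds}, the paper splits on $\Im m_{sc}\gtrless\sqrt{\eta}$, using $|m_{sc}^2-1|\geq \Im m_{sc}$ in one case and deducing $|m_{sc}|^2\leq (1+\sqrt{\eta})^{-1}$ from \eqref{msc2} together with $|m_{sc}^2-1|\geq 1-|m_{sc}|^2$ in the other; you instead derive the intermediate bound $|m_{sc}^2-1|\geq\sqrt{\eta}\,|m_{sc}|$ by inserting both lower bounds into \eqref{msc2}, and then split on $|m_{sc}|\gtrless(1+\sqrt{\eta})^{-1}$. The two dichotomies rest on the same identity and deliver the same conclusion; your version is marginally cleaner in that the single algebraic identity serves as the common source of all the estimates, while the paper's version avoids introducing the auxiliary bound $|m_{sc}^2-1|\geq\sqrt{\eta}\,|m_{sc}|$. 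All the elementary computations you invoke (the identity, the estimate $1-(1+\sqrt{\eta})^{-2}=\sqrt{\eta}(2+\sqrt{\eta})/(1+\sqrt{\eta})^2\geq\sqrt{\eta}/(1+\sqrt{\eta})$) check out, so the argument is complete.
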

\begin{proof}
We rewrite the equation for $m_{sc} (z)$ as
\begin{equation}\label{eq:fixmsc} z+m_{sc} (z) = -\frac{1}{m_{sc} (z)} \,.\end{equation}
Writing explicitly the real and imaginary part we get
\[ \left(\Re m_{sc} (z) \right) \, \left( 1+\frac{1}{|m_{sc} (z)|^2}\right) + i \left(\Im m_{sc} (z) \right)\, \left(1-\frac{1}{|m_{sc} (z)|^2}\right) = -E-i\eta \,.
\]
Therefore, we get
\begin{equation}\label{msc2}
 (1-|m_{sc} (z)|^2)  \Im m_{sc} (z) =\eta |m_{sc} (z)|^2\,.
\end{equation}
Since $\Im m_{sc} (z) > 0$  for $\eta>0 $, it follows that $|m_{sc} (z)| < 1$. The bound $|m_{sc}(z)| > (1+|z|)^{-1}$ follows then immediately from (\ref{eq:fixmsc}).

\medskip

To prove the first bound in (\ref{eq:msc-bds}), we rewrite
\[
|m_{sc}^2 (z) -1|=|m_{sc} (z) -1||m_{sc} (z) +1| \,.
\]
We distinguish two cases. If $|\Re m_{sc} (z) - 1| \geq 1$, we use $|m_{sc} (z) -1|\geq 1$ and $|m_{sc} (z) +1| \geq \Im m_{sc} (z)$. If  $|\Re m_{sc} (z)-1| \leq 1$ we use $|m_{sc} (z)+1| \geq |\Re m_{sc} (z) +1|\geq 2 - |\Re m_{sc} (z) -1| \geq 1 $ and $|m_{sc} (z)-1| \geq \Im m_{sc} (z)$. In both cases, we find $|m_{sc}^2 (z) -1| \geq \Im m_{sc} (z)$.

\medskip

Also to prove the second bound in (\ref{eq:msc-bds}), we distinguish two cases. If $\Im m_{sc} (z) \geq \sqrt \eta$, we use \[ |m_{sc}^2 (z)-1| \geq \Im m_{sc} (z) \geq \sqrt{\eta} \geq \frac{\sqrt{\eta}}{1+\sqrt{\eta}} \,. \]
If, on the other hand, $\Im m_{sc} (z) \leq \sqrt \eta$, Eq. \eqref{msc2} gives
\[ \eta \, |m_{sc} (z)|^2 = (1-|m_{sc}(z)|^2) \, \text{Im } m_{sc} (z) \leq \sqrt{\eta} \, (1-|m_{sc} (z)|^2) \]
and therefore
\[ |m_{sc} (z)|^2 \leq \frac{1}{1+\sqrt\eta}\,. \]
The desired bound follows then from $|m_{sc}^2 (z)-1|\geq 1- |m_{sc} (z)|^2$.
\end{proof}

\subsection{Identities for resolvent entries and for the Stieltjes transform} We use the following notation
\[
G (z) = (H-z)^{-1}
\]
for the resolvent of the Wigner matrix $H$, so that
\[
m (z)=\frac1N \Tr G (z) \,.
\]
To shorten the notation, we will often omit from the notation the dependence on $z$ from the resolvent $G(z)$ and from the Stieltjes transform $m(z)$ (and also from the Stieltjes transform of the semicircle law $m_{sc} (z)$), writing $G \equiv G(z)$, $m \equiv m(z)$ (and $m_{sc} \equiv m_{sc} (z)$).

\medskip

We denote by $H^{(j)}$ the $j$-th minor of the matrix $H$ and by $G^{(j)}$ the matrix $G^{(j)} = (H^{(j)}-z)^{-1}$. For any $j =1 ,\dots , N$, we can express the $(j,j)-$entry of the resolvent $G$ as
\begin{equation}\label{Gjj}
G_{jj}= \frac{1}{h_{jj} - z - \aa_j^* G^{(j)} \aa_j}
\end{equation}
where $\aa_j \in \CO^{N-1}$ is a vector, whose components are the off-diagonal entries of the $j$-th column of $H$. We will label the components of $\aa_j$ with indices in $\{1,...,N\}\backslash\{j\}$.

\medskip

To compare the diagonal entries of $G$ with the diagonal entries of $G^{(j)}$, we will make use of the identity
\begin{equation}\label{eq:G-Gj}
 G_{kk} = G_{kk}^{(j)} + \frac{G_{jk}G_{kj}}{G_{jj}} \end{equation}
valid for all $k\neq j$. A proof of (\ref{Gjj}) and of (\ref{eq:G-Gj}) can be found, for example in \cite{E}.

\medskip

Let \[\Lambda (z) = m (z) - m_{sc} (z) .\] Also in this case, we will often omit the $z$ dependence from the notation, writing just $\Lambda \equiv \Lambda (z)$. Next, we are going to obtain a self-consistent equation for $\Lambda$; we follow here Section 5.1 in \cite{ekyy-rxv12}. {F}rom (\ref{Gjj}), we have
\[
G_{jj}= - \frac{1}{-h_{jj} + z + (\ID - \E_j) \, \aa_j^* G^{(j)} \aa_j + \E_j \, \aa_j^* G^{(j)} \aa_j}
\]
where $\E_j$ denotes the expectation with respect to $\aa_j$. Since the components of $\aa_j$ are independent of $G^{(j)}$, we find
\[ \E_j \, \aa_j^* G^{(j)} \aa_j = \frac{1}{N} \sum_{k \not = j} G^{(j)}_{kk} = m + \frac{1}{N} \sum_{k \not = j} (G^{(j)}_{kk} - G_{kk}) - \frac{1}{N} G_{jj}\,.
\]
Hence
\begin{equation}\label{3.9}
G_{jj} = -\frac{1}{z + m_{sc} + \Lambda + \Upsilon_j}
\end{equation}
with
\begin{equation}\label{eq:Ups}\begin{split}
 \Upsilon_j &  = -h_{jj}  -\frac{1}{N}G_{jj} - \frac{1}{N}\sum_{k\neq j}(G_{kk} - G^{(j)}_{kk}) + (\ID - \E_j) \aa_j^* G^{(j)} \aa_j \\
            & =  -h_{jj}  - \frac{1}{G_{jj}}\frac{1}{N}\sum_{k}G_{jk}G_{kj} + (\ID - \E_j) \aa_j^* G^{(j)} \aa_j \,.
\end{split}\end{equation}

\medskip

Setting
\[g_j = G_{jj} - m_{sc}\,, \]
we find from (\ref{3.9}) that
\begin{equation}\label{eq:gj}
 g_j = m_{sc}(\La + \Upsilon_j)G_{jj}\,.
\end{equation}
Averaging over $j$, we find
\[ \Lambda = \frac{1}{N} \sum_{j=1}^N g_j = \frac{m_{sc} \Lambda}{N} \sum_{j=1}^N G_{jj} + \frac{m_{sc}}{N} \sum_{j=1}^N \Upsilon_j G_{jj} =   m_{sc}^2 \Lambda + m_{sc} \Lambda^2 + \frac{m_{sc}}{N} \sum_{j=1}^N \Upsilon_j G_{jj} \]
and hence, rearranging the terms and dividing by $m_{sc}$,
\begin{equation}\label{eqLa1}
 \Lambda^2 + \frac{m_{sc}^2-1}{m_{sc}} \Lambda + R = 0
\end{equation}
where we set \begin{equation}\label{R} R =  \frac{1}{N}\sum_{j=1}^N \Upsilon_j \, G_{jj}\, . \end{equation}
Using Eq. \eqref{eq:fix}, we can rewrite (\ref{eqLa1}) as
\begin{equation}\label{eqLa2}
 \Lambda^2 + (2m_{sc}+z) \Lambda + R = 0
\end{equation}
This quadratic equation for $\Lambda$ has two solutions. However, only one of them satisfies the condition $\text{Im } \Lambda > - \text{Im } m_{sc}$. We find
\begin{equation}\label{lambda} \Lambda  = - m_{sc} - \frac{z}{2} + \sqrt{\left(m_{sc}+\frac{z}{2}\right)^2 - R}  \end{equation}
where, for any $w \in \CO$, we denote by $\sqrt{w}$ the square root of $w$ with $\text{Im } w \geq 0$ (for $w$ real and positive, we choose $\sqrt{w} \geq 0$). Notice that, since
\[ m_{sc} (z) = - \frac{z}{2} + \sqrt{\frac{z^2}{4}-1} \, . \]
for any $z$ with $\text{Im } z > 0$, we can write
\begin{equation}\label{lambda2} \Lambda  = - \sqrt{\frac{z^2}{4} - 1} + \sqrt{\frac{z^2}{4} -1 - R}  \end{equation}

\medskip

In the following proposition we use the algebraic identity (\ref{lambda}) to bound $|\Lambda|$ and $|\text{Im } \Lambda|$ in terms of $|R|$. These bounds are crucial in the proof of Theorem \ref{t:main}.
\begin{proposition}\label{p:lambda}
Let $z=E+i\eta$. As in (\ref{R}) and (\ref{lambda}), (\ref{lambda2}) we use the shorthand notations $\Lambda = \Lambda (z)$ and $R = R(z)$. There exists a constant $C>0$ such that
\begin{equation}\label{e:Lambda2} |\Lambda| \leq C  \min \left\{\frac{|R|}{| m_{sc}^2-1|}, \sqrt{|R|} \right\} \end{equation}
for all $\eta > 0$ and $|E| \leq 2 + \eta$ and such that
\begin{align}\label{e:Lambda1} | \Im \Lambda| &\leq C  \min \left\{\frac{|R|}{| m_{sc}^2-1|}, \sqrt{|R|} \right\}
\\ \label{e:Lambda3}
\min \left( |\Lambda| , |\Lambda+ (2m_{sc}+z)| \right) &\leq
 C \sqrt{|R|} \end{align}
 for all $\eta > 0$ and $E \in \RE$.
\end{proposition}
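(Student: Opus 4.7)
The approach is to work from the identity $\Lambda = -A + \sqrt{A^2-R}$ with $A := m_{sc}+z/2$, equivalently $\Lambda(\Lambda+2A) = -R$. Two algebraic facts will be used throughout. From \eqref{eq:fix}, $2A = 2m_{sc}+z = (m_{sc}^2-1)/m_{sc}$, so Proposition~\ref{p:msc} (via $|m_{sc}|\leq 1$) yields $|A|\geq |m_{sc}^2-1|/2$; and in the range $|E|\leq 2+\eta\leq 2+\tildeeta$, $|m_{sc}|\geq (1+|z|)^{-1}$ is bounded below by a constant depending only on $\tildeeta$, so additionally $|A|\lesssim |m_{sc}^2-1|$. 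The bound \eqref{e:Lambda3} is then immediate from Vieta: the two roots $\Lambda_\pm$ of $\Lambda^2+2A\Lambda+R=0$ satisfy $\Lambda_+\Lambda_- = R$, so $\min(|\Lambda_+|,|\Lambda_-|)\leq\sqrt{|R|}$, and the identification $\Lambda_- = -\Lambda_+ - (2m_{sc}+z)$ converts this into $\min(|\Lambda|, |\Lambda+(2m_{sc}+z)|)\leq \sqrt{|R|}$.

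For \eqref{e:Lambda2}, I would split at $|R| = |A|^2/4$. In the small-$R$ regime $|R|\leq |A|^2/4$, the crucial first step is to show $|\Lambda|\leq |A|$: if instead $|\Lambda|>|A|$, then $|\Lambda+2A| = |R|/|\Lambda| < |A|/4$, so $\Lambda$ lies within $|A|/4$ of $-2A$ and $\Im\Lambda\leq -2\Im A + |A|/4$; requiring this to be compatible with $\Im\Lambda > -\Im m_{sc}$ (which follows from $\Im m>0$) forces $|A|/4 > \Im m_{sc}+\eta$, and this is exactly ruled out by $|E|\leq 2+\eta$---a direct calculation from $A^2 = z^2/4-1$ shows $|A|\leq 4(\Im m_{sc}+\eta)$ on this range (both inside the bulk, where $|A|\simeq\Im m_{sc}\simeq 1$, and near the edge, where $|A|\simeq\Im m_{sc}\simeq\sqrt\eta$). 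With $|\Lambda|\leq |A|$ in hand, $|\Lambda+2A|\geq 2|A|-|\Lambda|\geq |A|$, so $|\Lambda| = |R|/|\Lambda+2A| \leq |R|/|A| \lesssim |R|/|m_{sc}^2-1|$, and moreover $|R|/|A|\leq\sqrt{|R|}/2$ (since $\sqrt{|R|}\leq |A|/2$). In the complementary large-$R$ regime $|R|>|A|^2/4$, $|A|\leq 2\sqrt{|R|}$ and the triangle inequality gives $|\Lambda| \leq |A| + \sqrt{|A^2-R|} \leq |A| + \sqrt{|A|^2+|R|} \leq C\sqrt{|R|}$, while $\sqrt{|R|}\leq 2|R|/|A|\lesssim |R|/|m_{sc}^2-1|$. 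Combining the two regimes yields \eqref{e:Lambda2}.

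For \eqref{e:Lambda1}, the case $|E|\leq 2+\eta$ is immediate from $|\Im\Lambda|\leq |\Lambda|$ and \eqref{e:Lambda2}. For general $E$ the branch-flip scenario (where $\Lambda$ is close to $-2A$ rather than to $0$) can occur, but one recovers control of the imaginary part through the key observation
\[
|\Lambda+2A| = |A+\sqrt{A^2-R}| \geq \Im\bigl(A+\sqrt{A^2-R}\bigr) \geq \Im A.
\]
Splitting via \eqref{e:Lambda3}: if $|\Lambda|\leq\sqrt{|R|}$ then $|\Im\Lambda|\leq \sqrt{|R|}$ directly, and the small/large-$R$ regime analysis above (which needs only $|A|\geq |m_{sc}^2-1|/2$) gives the $|R|/|m_{sc}^2-1|$ half. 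If instead $|\Lambda+2A|\leq \sqrt{|R|}$, the observation forces $\Im A\leq \sqrt{|R|}$, so $|\Im\Lambda|\leq |\Lambda+2A|+2\Im A\leq 3\sqrt{|R|}$; and a short case analysis on whether $|R|\leq |A|^2$ (using $|\Lambda+2A|=|R|/|\Lambda|$ and $|\Lambda|\geq 2|A|-|\Lambda+2A|$ together with the Proposition~\ref{p:msc} consequences $\Im m_{sc}\leq|m_{sc}^2-1|$ and $\eta\lesssim|m_{sc}^2-1|$ for $\eta\leq\tildeeta$) gives $\Im A\lesssim |R|/|m_{sc}^2-1|$.

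The main obstacle is precisely the first step in the proof of \eqref{e:Lambda2}, namely ruling out $|\Lambda|>|A|$ in the small-$R$ regime. This is what the hypothesis $|E|\leq 2+\eta$ buys us, and the branch-flip that occurs when it fails is the instability of the equation for $\Lambda$ alluded to after the proposition. Once past this step, everything reduces to manipulating the product identity $\Lambda(\Lambda+2A)=-R$ together with the lower bound $|\Lambda+2A|\geq\Im A$ from the square-root branch convention.
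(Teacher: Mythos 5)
Your proof is correct, but it takes a genuinely different route from the paper's. The paper isolates two perturbation estimates for the complex square root, namely $|\sqrt{a+b}-\sqrt{a}|\le C|b|/\sqrt{|a|+|b|}$ under the sector condition $|\Im a|\ge c\,\Re a$ (its \eqref{rootbound2}) and the same bound for $|\Im\sqrt{a+b}-\Im\sqrt{a}|$ for arbitrary $a,b\in\CO$ (its \eqref{rootbound1}), and then applies them to \eqref{lambda}, \eqref{lambda2} with $a=(m_{sc}+z/2)^2$, $b=-R$, using $\sqrt{|a|}\ge|m_{sc}^2-1|$ and the observation that $|E|\le 2+\eta$ is precisely the sector condition for $a=z^2/4-1$; the single bound $C|b|/\sqrt{|a|+|b|}$ then delivers the $|R|/|m_{sc}^2-1|$ and $\sqrt{|R|}$ halves simultaneously. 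You instead work directly on the Vieta identity $\Lambda(\Lambda+2A)=-R$: the hypothesis $|E|\le 2+\eta$ enters only to exclude the root near $-2A$ (via $\Im\Lambda>-\Im m_{sc}$ together with $|A|\lesssim \Im m_{sc}+\eta$ on that range), and the unconditional control of $\Im\Lambda$ comes from $|\Lambda+2A|=|A+\sqrt{A^2-R}|\ge\Im A$, which replaces the paper's lemma \eqref{rootbound1} (whose proof needs the somewhat delicate interpolation to a real point $a+\lambda b>0$). Your version is more elementary and makes explicit where the instability across the positive real axis is used; the paper's yields reusable square-root lemmas and a shorter simultaneous derivation of the two bounds.

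Two points to tidy up, neither a real gap. First, the auxiliary inequalities $|A|\lesssim|m_{sc}^2-1|$ and $\eta\lesssim|m_{sc}^2-1|$ hold only for $\eta\le\tildeeta$, with constants depending on $\tildeeta$, while the proposition is stated uniformly in $\eta>0$; fortunately neither is needed: the first is never used (all your $|R|/|m_{sc}^2-1|$ bounds rely only on $|A|\ge|m_{sc}^2-1|/2$), and in the sub-case $|\Lambda+2A|\le\sqrt{|R|}$, $|R|>|A|^2$ of \eqref{e:Lambda1} you can conclude directly from $\Im A\le\sqrt{|R|}<|R|/|A|\le 2|R|/|m_{sc}^2-1|$, so the whole argument is uniform in $\eta$. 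Second, the ``direct calculation'' that $|A|\le 4(\Im m_{sc}+\eta)$ for $|E|\le 2+\eta$ deserves a line: if $\Re(z^2/4-1)>0$ then $|E|>2$ and $E^2-\eta^2-4\le 4\eta\le 2|E|\eta$, so $|\Im(z^2/4-1)|\ge\Re(z^2/4-1)$; hence $\arg A^2$ stays at distance at least $\pi/4$ from the positive real axis, giving $\Im A\ge \sin(\pi/8)\,|A|$ and thus $|A|\le 3(\Im m_{sc}+\eta)$ — the same sector fact the paper uses inside its proof of \eqref{rootbound2}.
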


{\it Remark:} Note that, for $|E| > 2 + \eta$, Proposition \ref{p:lambda} only controls the imaginary part of $\Lambda$ in terms of $R$. This is the origin of the distinction between part i) and part ii) of Theorem~\ref{t:main}. The reason why we cannot control the real part of $\Lambda$ for $|E| > 2 + \eta$ is clear from (\ref{lambda2}). Recall that $\sqrt{w}$ denotes the square root of $w$ with positive imaginary part, which is discontinuous across the positive real axis. For $|E| > 2 + \eta$, $\text{Re } (z^2/4-1) > 0$ and $|\text{Im } (z^2/4-1)| = |E| \eta /2$ can be very small (for small $\eta$). For this reason even for very small $|R|$ it is possible that $z^2/4-1$ and $z^2/4-1-R$ are on opposite sides of the positive real axis. In this case, $\Lambda \simeq -2 \sqrt{z^2/4-1}$ would not be small. Notice, however, that the discontinuity of the square root concerns only its real part; the imaginary part is continuous on the whole complex plane. This is why we still get control of the imaginary part of $\Lambda$ in (\ref{e:Lambda1}) also for $|E| > 2 + \eta$.

\begin{proof}
As before we denote by $\sqrt{w}$ the square root of $w \in \CO$ with $\text{Im } w \geq 0$ (and with $\sqrt{w} >0$ for $w >0$ real). We claim that
\begin{itemize}
\item  for any fixed $c > 0$, there exists a constant $C >0$ such that
\begin{equation}\label{rootbound2} \left| \sqrt{a+b} - \sqrt{a} \right| \leq C\frac{|b|}{\sqrt{|a|+|b|}} \end{equation}
for all $a,b \in \CO$ with $|\text{Im } a| \geq c \, \text{Re } a$.
\item there exists a constant $C > 0$ such that
\begin{equation}\label{rootbound1} \left| \Im \sqrt{a+b} - \Im \sqrt{a} \right| \leq C\frac{|b|}{\sqrt{|a|+|b|}}  \end{equation}
for all $a,b\in\CO$.
\end{itemize}

The bounds \eqref{e:Lambda2} and \eqref{e:Lambda1} follow directly from (\ref{lambda}), (\ref{lambda2}), applying \eqref{rootbound1} and, respectively, \eqref{rootbound2} with $a= (m_{sc} + z/2)^2 = (m_{sc}^2-1)^2/ m_{sc}^2$ and $b=-R$ (since $|m_{sc}| \leq 1$; see Proposition~\ref{p:msc}). Here, to prove (\ref{e:Lambda2}), we use the fact that
$a = z^2/4 -1$, and therefore, with $z=E+i\eta$, that
\[ \text{Re } a = \frac{E^2 - \eta^2}{4} - 1 \quad \text{and } \quad \text{Im } a = \frac{E \eta}{2} \]
This implies that $|\text{Im } a| \geq c \, \text{Re } a$ for all $|E| \leq 2 + \eta$ (for $|E| \leq 2$, $\text{Re } a < 0$ and the bound $| \text{Im } a| \geq c \, \text{Re } a$ is trivial).

\medskip

To prove \eqref{e:Lambda3}, we observe from (\ref{eqLa2}) that $\Lambda (\Lambda + (2m_{sc}+z)) = - R$. This implies that
\[ |\Lambda+ (2m_{sc}+z)| = \frac{|R|}{|\Lambda|} \]
and hence that either $|\Lambda| \leq \sqrt{|R|}$ or $|\Lambda+ (2m_{sc}+z)| \leq \sqrt{|R|}$.

\medskip

We conclude with the proof of the bounds \eqref{rootbound2} and \eqref{rootbound1}. Both estimates clearly hold when $|b| > |a|/2$, since then
\[ \left| \sqrt{a+b} - \sqrt{a} \right| \leq C |b|^{1/2} \leq C \frac{|b|}{\sqrt{|a| + |b|}} \]
Hence, we can assume that $|b| \leq |a|/2$. If $|\text{Im } a|  \geq c  \,\text{Re } a$, we have $\text{Im } \sqrt{a} \geq \wt{c} \, |a|^{1/2}$ for an appropriate constant $\wt{c} >0$. This implies that
\[ \left| \sqrt{a+b} - \sqrt{a} \right| = \left| \frac{b}{\sqrt{a+b} + \sqrt{a}} \right| \leq \frac{|b|}{\text{Im } \sqrt{a}  + \text{Im } \sqrt{a+b}} \leq C \frac{|b|}{|a|^{1/2}} \leq C \frac{|b|}{\sqrt{|a| + |b|}} \]
from the assumption $|b| \leq |a|/2$ and because $\text{Im } \sqrt{a+b} \geq 0$.

\medskip

It remains to prove (\ref{rootbound1}) for $|\text{Im } a|  < c \, \text{Re } a$ and $|b| \leq |a|/2$. To this end, we consider first the case $a > 0$ real. If $\text{Im } b \geq 0$, we find
\[\begin{split}  \left| \text{Im } \sqrt{a+b} - \text{Im } \sqrt{a} \right| &\leq \left| \sqrt{a+b} - \sqrt{a} \right| \leq \frac{|b|}{|\sqrt{a+b} + \sqrt{a}|} \\& \leq  \frac{|b|}{\text{Re } \sqrt{a+b} + \sqrt{a}}  \leq \frac{|b|}{\sqrt{a}} \leq C \frac{|b|}{\sqrt{|a| + |b|}} \end{split} \]
since $\text{Re } \sqrt{a+b} \geq 0$ and $|b| < c \, |a|$. If instead $\text{Im } b < 0$, we have (again, for $a>0$ real)
\[ \begin{split}  \left| \text{Im } \sqrt{a+b} - \text{Im } \sqrt{a} \right| &= \left| \text{Im } \sqrt{a+b} + \text{Im } \sqrt{a} \right| \leq \left| \sqrt{a+b} + \sqrt{a} \right| \\ & \leq \frac{|b|}{|\sqrt{a} - \sqrt{a+b}|} \leq \frac{|b|}{\sqrt{|a|}} \leq \frac{|b|}{\sqrt{|a| + |b|}} \end{split} \]
because, in this case, $\text{Re } \sqrt{a+b} < 0$.

\medskip

Now, we consider general $a,b \in \CO$ with $|\text{Im } a|  < c \, \text{Re } a$ and $|b| \leq c \, |a|$. Again, we distinguish two cases. If $\text{Im } (a+b) > 0$, we have $\text{Re } \sqrt{a+b} > 0$ and therefore
\[ \left| \text{Im } \sqrt{a+b} - \text{Im } \sqrt{a} \right| \leq \frac{|b|}{\text{Re } \sqrt{a}} \leq \frac{|b|}{\sqrt{|a|}} \leq  \frac{|b|}{\sqrt{|a|+|b|}} \]
since the assumption $|\text{Im } a|  < c \,  \text{Re } a$ implies that $\text{Re } \sqrt{a} \geq c \sqrt{|a|}$. Finally, let us assume that $\text{Im } (a+b) < 0$. Then we find $\lambda \in (0,1)$ such that $a+\lambda b = d > 0$ is real. Hence, we can estimate
\[ \begin{split} \left| \text{Im } \sqrt{a+b} - \text{Im } \sqrt{a} \right| &\leq \left| \text{Im } \sqrt{a+b} - \text{Im } \sqrt{d}\right| + \left| \text{Im } \sqrt{d} - \text{Im } \sqrt{a} \right| \\ &= \left|\text{Im } \sqrt{d+ (1-\lambda)b} - \text{Im } \sqrt{d} \right| + \left| \text{Im } \sqrt{d-\lambda b} - \text{Im } \sqrt{d} \right| \\ &\leq C \frac{|b|}{\sqrt{|d|+|b|}} \leq C\frac{|b|}{\sqrt{|a|+|b|}}  \end{split} \]
where we applied the bounds obtained above for $a > 0$ real.

\end{proof}

The inequalities in Proposition \ref{p:lambda} are the starting point for all our estimates on the random variable $\Lambda$. In the next section, we will prove a non-optimal bound on $\Lambda$, based on the second bound in \eqref{e:Lambda2}, proportional to $\sqrt{|R|}$ (for $z=E+i\eta$ with $|E| > 2 +\eta$, we cannot apply (\ref{e:Lambda2}); instead, we will control the imaginary part of $\Lambda$ using (\ref{e:Lambda1}) and we will get some bound on its real part using (\ref{e:Lambda3})). Afterwards, in Section \ref{sec:opt}, we will prove Theorem \ref{t:main}, which gives an optimal estimate on $|\Lambda|$. To achieve this goal, we will apply the first bound in \eqref{e:Lambda2}, proportional to $|R|$ in the bulk of the spectrum (where the denominator $|m^2_{sc}-1|$ is of order one) and we will use the second bound in \eqref{e:Lambda2}, proportional to $\sqrt{|R|}$, close to the edges of the spectrum (where $|m^2_{sc}-1|$ is small). Also in this case, for $|E| > 2 + \eta$ we cannot apply (\ref{e:Lambda2}); instead, in this region we will apply (\ref{e:Lambda1}) (getting only a bound for the imaginary part of $\Lambda$).

\section{Non-optimal bound on $\Lambda = m - m_{sc}$}
\label{sec:no-opt}

\subsection{Bound on moments of $\Upsilon_j$.}

According to \eqref{e:Lambda1} and \eqref{e:Lambda2}, in order to show smallness of $\Lambda$ we need bounds on the quantity $R = N^{-1} \sum_{j=1}^N \Upsilon_j G_{jj}$, introduced in (\ref{R}). We will prove in Lemma \ref{l:boundG11} that the diagonal entries $G_{jj}$ of the resolvent are bounded with high probability, hence it is possible to bound $R$ by controlling the coefficients $\Upsilon_j$.  This is the goal of the next lemma.
\begin{lemma}
Assume (\ref{e:gd}). Let $\Upsilon_1$ be defined as in (\ref{eq:Ups}). Then there exists a universal constant $C$ such that
\begin{equation}\label{boundUps}
\E|\Upsilon_1 |^{2q} \leq  (Cq)^{2q} \left(\frac{(\Im m_{sc})^{q} + \E|\Im\Lambda|^{q}}{(N\eta)^{q}}+\frac1{(N\eta)^{2q}} + \frac1{N^q} \right)
\end{equation}
for all $E \in \RE$, $N \geq 1$, $\eta \geq 1/N$ and $q \in \NA$.
\end{lemma}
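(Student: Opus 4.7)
The plan is to insert the second form of $\Upsilon_j$ from \eqref{eq:Ups} into $\|\cdot\|_{L^{2q}}$ and bound the three summands separately. The subgaussian assumption \eqref{e:gd} immediately gives $\E|h_{jj}|^{2q}\leq (Cq)^q/N^q$, which accounts for the $1/N^q$ contribution. For the middle term $(NG_{jj})^{-1}\sum_k G_{jk}G_{kj}$, I would use $\sum_k G_{jk}G_{kj}=(G^2)_{jj}$ together with the Ward identity $\sum_k|G_{jk}|^2=(GG^*)_{jj}=\Im G_{jj}/\eta$ (and the analogous identity for $\sum_k|G_{kj}|^2$) to obtain $|(G^2)_{jj}|\leq \Im G_{jj}/\eta$ via Cauchy--Schwarz; combined with the trivial $|G_{jj}|\geq \Im G_{jj}$, this yields the deterministic bound $|(NG_{jj})^{-1}\sum_k G_{jk}G_{kj}|\leq 1/(N\eta)$, hence a $1/(N\eta)^{2q}$ contribution after raising to the $2q$-th power.

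\medskip

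The main step is the fluctuation term $(\ID-\E_j)\aa_j^*G^{(j)}\aa_j$. The key point is that $\aa_j$ is independent of $G^{(j)}$, so I would condition on $G^{(j)}$ and apply a Hanson--Wright-type moment inequality for quadratic forms of independent subgaussian random variables. Since the entries of $\aa_j$ have mean zero, variance $1/N$ and subgaussian norm of order $1/\sqrt N$, this yields
\[
\E_j\bigl|(\ID-\E_j)\aa_j^*G^{(j)}\aa_j\bigr|^{2q} \leq \frac{(Cq)^q}{N^{2q}}\|G^{(j)}\|_{HS}^{2q} + \frac{(Cq)^{2q}}{N^{2q}}\|G^{(j)}\|_{\mathrm{op}}^{2q}.
\]
The Ward identity applied to the minor $H^{(j)}$ gives $\|G^{(j)}\|_{HS}^2 = \Im\Tr G^{(j)}/\eta \leq N\Im\tilde m^{(j)}/\eta$ with $\tilde m^{(j)}:=N^{-1}\Tr G^{(j)}$, while $\|G^{(j)}\|_{\mathrm{op}}\leq 1/\eta$. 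Substituting yields
\[
\E_j\bigl|(\ID-\E_j)\aa_j^*G^{(j)}\aa_j\bigr|^{2q} \leq \frac{(Cq)^q(\Im\tilde m^{(j)})^q}{(N\eta)^q} + \frac{(Cq)^{2q}}{(N\eta)^{2q}}.
\]

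\medskip

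To close the estimate I would take outer expectation and relate $\tilde m^{(j)}$ back to $m$ and $\Lambda$. Cauchy interlacing of the eigenvalues of $H$ and $H^{(j)}$ gives $|\Im\tilde m^{(j)}-\Im m|\leq C/(N\eta)$, and since $\Im m$ and $\Im m_{sc}$ are both positive, the decomposition $\Im m = \Im m_{sc}+\Im\Lambda$ implies the elementary bound $\Im m\leq \Im m_{sc}+|\Im\Lambda|$ regardless of the sign of $\Im\Lambda$. Hence
\[
\E(\Im\tilde m^{(j)})^q \leq C^q\left[(\Im m_{sc})^q + \E|\Im\Lambda|^q + \frac{1}{(N\eta)^q}\right],
\]
and summing the three contributions, after absorbing the smaller $(Cq)^q$ prefactor into $(Cq)^{2q}$, recovers exactly \eqref{boundUps}.

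\medskip

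The main technical obstacle is the Hanson--Wright-type moment bound with the correct $(Cq)^{2q}$ dependence for complex-valued subgaussian quadratic forms; everything else is algebraic bookkeeping based on the identities of \S\ref{sec:preli} and Cauchy interlacing. A minor subtlety is that the bound is stated for the single index $j=1$ rather than for an average, which is why the argument has to introduce the global quantity $\E|\Im\Lambda|^q$ (via the interlacing step on the whole trace) rather than control an individual $g_j$.
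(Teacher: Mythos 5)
Your proposal is correct and follows essentially the same route as the paper: the same three-way splitting of $\Upsilon_1$, the same Ward-identity argument giving the deterministic bound $|(G^2)_{11}|/(N|G_{11}|)\leq (N\eta)^{-1}$, and a Hanson--Wright estimate for $Z_1=(\ID-\E_1)\,\aa_1^* G^{(1)}\aa_1$ conditionally on $G^{(1)}$, with $\Tr|G^{(1)}|^2$ controlled by $\Im \Tr G^{(1)}/\eta$ and then by $(\Im m_{sc})+|\Im\Lambda|$ plus an $(N\eta)^{-1}$ error. The only (harmless) deviations are that you pass from $\Im m^{(1)}$ back to $\Im m$ via eigenvalue interlacing rather than via the exact identity (\ref{eq:G-Gj}) with its deterministic $(N\eta)^{-1}$ correction (the paper itself uses the interlacing bound for general minors later on), and that you invoke a two-term moment form of Hanson--Wright instead of integrating the tail bound of Proposition \ref{p:HW}.
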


{\it Remarks:} Of course, the same bound is valid for $\Upsilon_j$, for any $j=2, \dots , N$, since all these variables have the same law. {F}rom (\ref{boundUps}), we conclude that the typical size of $|\Upsilon_j|$ is of the order $(N\eta)^{-1/2}$ (assuming that $\text{Im } \Lambda|$ is at least bounded by a constant). Combined with an upper bound on the diagonal entries of the resolvent, this would allow us to show that $R=N^{-1} \sum_{j} \Upsilon_j G_{jj}$ is at most of the order $(N\eta)^{-1/2}$. {F}rom (\ref{e:Lambda2}), this would imply (at least in the bulk of the spectrum) that $|\Lambda| \lesssim (N\eta)^{-1/2}$, with high probability. This result, however, is still far from the optimal bound $|\Lambda| \lesssim (N\eta)^{-1}$ that we are aiming at. To obtain the optimal bound, it is crucial to use the cancellations among the different terms in the average defining $R$. We will do so in Sections \ref{sec:opt} and \ref{sec:pf}.

\begin{proof}
We observe that
\[ \left| \frac{1}{G_{11}} \frac{1}{N} \sum_{k=1}^N G_{1k} G_{k1} \right| = \frac{|(G^2)_{11}|}{N |G_{11}|}\,. \]
Since
\[ |(G^2(z))_{11} | = |\langle e_1, G^2 (z) e_1 \rangle| \leq \| G^* (z) e_1 \| \| G (z) e_1 \| \leq  |G (z)|^2_{11} \]
and
\begin{equation}\label{eq:GG*} |G (z)|^2 = G(z) G(\bar{z}) = \frac{\text{Im } G(z)}{\eta} \end{equation}
we obtain
\[ |(G^2(z))_{11} | \leq \frac{\text{Im } G_{11} (z)}{\eta} \]
and therefore
\[ \left| \frac{1}{G_{11}}\frac1N \sum_{k=1}^N G_{1k} G_{k1} \right| \leq \frac{1}{N\eta} \]
and hence, using also the assumption (\ref{e:gd}), we find
\[
\E \, |\Upsilon_j|^{2q} \leq \frac{(Cq)^q}{N^q} + \frac{1}{(N\eta)^{2q}} + C^q \, \E  |Z_j|^{2q}
\]
where we defined
\begin{equation}\label{eq:Zj} Z_j = (\ID-\E_j) \, \aa^*_j G^{(j)} \aa_j \, . \end{equation}
The claim now follows from Proposition \ref{p:Zkq} below, where we bound the moments of $Z_j$.
\end{proof}

In the next proposition, we bound the moments of the variable $Z_1$, defined in (\ref{eq:Zj}), in terms of the moments of  $\Im \Lambda$.
\begin{proposition}\label{p:Zkq}
Assume (\ref{e:gd}) and let $Z_1$ be defined as in (\ref{eq:Zj}). Then there exists a constant  $C > 0$ such that
\begin{equation}\label{boundZk}
\E \, |Z_1|^{2q} \leq (Cq)^{2q} \left(\frac{(\Im m_{sc})^{q} + \E|\Im\Lambda|^{q}}{(N\eta)^{q}}+\frac1{(N\eta)^{2q}} \right)\,,
\end{equation}
for all $E \in \RE$, $N \geq 1$, $\eta \geq 1/N$ and $q \in \NA$.
\end{proposition}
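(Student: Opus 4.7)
The plan has three main steps. First, I decompose
\[ Z_1 = \sum_{k \neq l} \bar a_k a_l G^{(1)}_{kl} + \sum_k (|a_k|^2 - 1/N) G^{(1)}_{kk} =: Z_1^{od} + Z_1^{d}, \]
where $a_k = (\aa_1)_k$ are independent of $G^{(1)}$, satisfy $\E a_k = \E a_k^2 = 0$ and $\E |a_k|^2 = 1/N$, and by (\ref{e:gd}) have subgaussian norm of order $1/\sqrt N$. Conditioning on $G^{(1)}$, my goal is a Hanson--Wright-type moment bound
\[ \E_1|Z_1|^{2q} \leq (Cq)^{2q}\left[\left(\frac{1}{N^2}\sum_{k,l}|G^{(1)}_{kl}|^2\right)^q + \frac{1}{(N\eta)^{2q}}\right]. \]
For $Z_1^{d}$ I would apply Rosenthal's inequality to the independent mean-zero summands $(|a_k|^2 - 1/N)G^{(1)}_{kk}$: the variance part gives $(Cq)^q (N^{-2}\sum_k |G^{(1)}_{kk}|^2)^q$, and the sum of $2q$-th moments is controlled via $\sum_k|G^{(1)}_{kk}|^{2q} \leq \eta^{-2(q-1)}\sum_k |G^{(1)}_{kk}|^2$ together with $\|G^{(1)}\|_{op}\leq 1/\eta$, producing the $(N\eta)^{-2q}$ tail. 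For $Z_1^{od}$ I would expand $\E_1|Z_1^{od}|^{2q}$ as a sum over pairings of the $2q$ occurrences of $a$'s with the $2q$ occurrences of $\bar a$'s; the vanishing of $\E a_k$ and $\E a_k^2$ forces each $a_k$ to be matched with a $\bar a_k$. The leading pairings produce $(N^{-2}\sum_{k \neq l}|G^{(1)}_{kl}|^2)^q$, while pairings with index coincidences contribute smaller terms absorbed by $\|G^{(1)}\|_{op}^{2q}/N^{2q} \leq (N\eta)^{-2q}$.

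The second step uses the Ward identity (\ref{eq:GG*}) applied to the minor to rewrite the Frobenius-norm contribution in terms of $\Im m^{(1)} := N^{-1}\Im \Tr G^{(1)}$:
\[ \frac{1}{N^2}\sum_{k,l}|G^{(1)}_{kl}|^2 = \frac{1}{N^2}\Tr\big( G^{(1)}(G^{(1)})^* \big) = \frac{\Im m^{(1)}}{N\eta}. \]
By Cauchy interlacing between the eigenvalues of $H$ and of $H^{(1)}$ we have $|\Tr G - \Tr G^{(1)}| \leq 1/\eta$, so
\[ \Im m^{(1)} \leq \Im m + \frac{1}{N\eta} \leq \Im m_{sc} + |\Im \Lambda| + \frac{1}{N\eta}. \]
Raising this to the $q$-th power (using $(a+b+c)^q \leq 3^{q-1}(a^q+b^q+c^q)$) and taking expectation over $G^{(1)}$ in the Hanson--Wright bound above yields (\ref{boundZk}), with the resulting $1/(N\eta)^{2q}$ piece arising from the interlacing correction absorbed into the final $(N\eta)^{-2q}$ contribution.

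The main obstacle is obtaining the sharp constant $(Cq)^q$ for the Frobenius-norm piece (rather than $(Cq)^{2q}$): this is what permits, after raising to the $q$-th power, the overall factor $(Cq)^{2q}$ to multiply $\Im m^{(1)}$ only to the first power inside each $(\cdot)^q$. This requires carefully separating the ``Gaussian-like'' chaos fluctuations (which enjoy the Khintchine-type constant $(Cq)^q$) from the subexponential tail (which pays $(Cq)^{2q}$ but only multiplies the already-small factor $\|G^{(1)}\|_{op}^{2q}/N^{2q} \leq (N\eta)^{-2q}$). For $Z_1^{od}$ this is a combinatorial accounting of the pairings of $4q$ indices, keeping track of which index coincidences are forced by the pairing structure; for $Z_1^{d}$ it is Rosenthal's inequality with sharp constants applied to the subexponential summands $|a_k|^2 - 1/N$.
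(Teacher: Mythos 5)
Your proposal follows the same overall strategy as the paper's proof: condition on the minor $H^{(1)}$, establish a Hanson--Wright-type bound for $\E_1|Z_1|^{2q}$ in terms of the Hilbert--Schmidt norm of $G^{(1)}$, convert $N^{-2}\Tr|G^{(1)}|^2$ into $\Im m^{(1)}/(N\eta)$ via the Ward identity, and then compare $\Im m^{(1)}$ with $\Im m=\Im m_{sc}+\Im\La$. The two local differences are: (i) the paper does not reprove Hanson--Wright but quotes the tail estimate of Proposition \ref{p:HW} (from \cite{ESY3}) and integrates it, so no Rosenthal inequality or pairing combinatorics is needed, whereas you rebuild a moment version by hand — more work, but legitimate; (ii) the paper compares $\sum_{k\neq 1}G^{(1)}_{kk}$ with $\Tr G$ through the identity (\ref{eq:G-Gj}) together with $|N^{-1}G_{11}^{-1}\sum_k G_{1k}G_{k1}|\leq (N\eta)^{-1}$, while you invoke eigenvalue interlacing, $|\Tr G-\Tr G^{(1)}|\leq C/\eta$; both give $\Im m^{(1)}\leq \Im m+C/(N\eta)$, and the interlacing route is used by the paper itself in other places (see the remark after Lemma \ref{l:Gkkinv}), so this substitution is harmless.

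Two remarks. First, the ``main obstacle'' you identify is not actually an obstacle: the statement carries $(Cq)^{2q}$ in front of the entire bracket, so you do not need the Khintchine-sharp constant $(Cq)^{q}$ on the Frobenius-norm piece — the crude constant $(Cq)^{2q}$ multiplying $\E\bigl(N^{-2}\Tr|G^{(1)}|^2\bigr)^q$, which the subexponential tail (or your moment computation) gives directly, is exactly what the paper uses. Second, the Rosenthal tail term for $Z_1^{d}$ is mishandled as written: from $\sum_k|G^{(1)}_{kk}|^{2q}\leq \eta^{-2(q-1)}\sum_k|G^{(1)}_{kk}|^2$ and $\|G^{(1)}\|_{op}\leq 1/\eta$ you only get $N^{-2q}\sum_k|G^{(1)}_{kk}|^{2q}\leq N(N\eta)^{-2q}$, which is off by a factor $N$ from the claimed $(N\eta)^{-2q}$. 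The repair is immediate: use instead $\sum_k|G^{(1)}_{kk}|^{2q}\leq\bigl(\sum_k|G^{(1)}_{kk}|^2\bigr)^{q}\leq\bigl(\Tr|G^{(1)}|^2\bigr)^{q}$, so this contribution is absorbed into the Frobenius-norm piece rather than into the $(N\eta)^{-2q}$ piece. With these adjustments your argument closes and yields (\ref{boundZk}).
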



\begin{proof}
We keep the randomness in the entries of the minor $H^{(1)}$ fixed. By the Hanson-Wright large deviation estimate, the fluctuations of the quadratic form around its mean can be controlled by the Hilbert-Schmidt norm of $G^{(1)}$. In fact, it follows from Prop. \ref{p:HW} that
\[ \P_1 \left( \left|(\ID - \E_1) \, \aa_1^* G^{(1)} \aa_1 \right| \geq t \right) \leq C \exp \, \left( - \frac{c t N}{\left(\Tr \, |G^{(1)}|^2\right)^{1/2}} \right) \]
and therefore that
\[
\E \, |Z_1|^{2q} = \E \int_0^{+\infty} \P_1\left( |(\ID - \E_1) \aa_1^* G^{(1)} \aa_1 |^{2q} \geq t \right)  dt
 \leq (Cq)^{2q} \, \E \left(\frac{\Tr|G^{(1)}|^2}{N^2}\right)^{q}\,.
\]
To conclude the proof of (\ref{boundZk}), we notice that, similarly to (\ref{eq:GG*}),
\begin{equation}\label{eq:|G|2} |G^{(1)} (z)|^2  = \frac{\Im G^{(1)} (z)}{\eta} \end{equation}
and hence that
\begin{equation}\label{eq:trG2} \begin{split}
 \frac{\Tr|G^{(1)}|^2}{N^2} =& \frac{1}{N} \frac{1}{N\eta} \Im \sum_{k\neq1}G^{(1)}_{kk}\\
 = & -\frac{1}{N}  \frac{1}{N\eta}  \Im \frac{1}{G_{11}}\sum_{k} G_{k1}G_{1k} + \frac{\Im m}{N\eta} \\
\leq & \frac{1}{(N\eta)^2} + \frac{\Im m_{sc} + |\Im\La|}{N\eta}\,.
\end{split}\end{equation}
\end{proof}

\subsection{Bound on moments of $\Lambda$ in terms of moments of $G_{11}$}

In the next lemma, we estimate the moments of $|\text{Im } \Lambda|$ in terms of the moments of the diagonal entries of the resolvent. In the region $|E|\leq 2+\eta$, the lemma also gives a bound on the moments of $|\Lambda|$ (including also the real part of $\Lambda$).
\begin{lemma}\label{l:baby_bound_at_edge}
Assume (\ref{e:gd}), fix $\tildeeta> 0$ and set $z = E+i\eta$.
\begin{itemize}
\item[i)] There exists a constant $C>0$ such that
\begin{equation}\label{e:baby_bound_in_bulk}
\E |\Lambda|^{2q}
 \leq \frac{(Cq)^{4q/3}}{(N\eta)^{q/2}} \left[ (\E |G_{11}|^{2q})^{2/3} + 1 \right] \, ,
\end{equation}
for all $|E| \leq 2 + \eta$, $\eta \leq \tildeeta$, $N \geq 1$ such that $N \eta \geq 1$, and for all $q\in \NA$.
\item[ii)] There exists a constant $C>0$ such that
\begin{equation}\label{e:baby_bound_at_edge}
\E |\Im \Lambda|^{2q}
 \leq \frac{(Cq)^{4q/3}}{(N\eta)^{q/2}} \left[ (\E |G_{11}|^{2q})^{2/3} + 1 \right] \, ,
\end{equation}
for all $E\in\RE$, $\eta \leq \tildeeta$, $N \geq 1$ such that $N \eta \geq 1$, and for all $q\in \NA$.
\end{itemize}
\end{lemma}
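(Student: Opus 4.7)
The proof uses the quadratic relation $\Lambda^2 + (2m_{sc}+z)\Lambda + R = 0$ from (\ref{eqLa2}) together with the moment bound (\ref{boundUps}) on $\Upsilon_j$. The central idea is that both bounds in Proposition~\ref{p:lambda} give $|\Lambda|^{2q}$ or $|\Im\Lambda|^{2q}$ controlled by $|R|^q$, and by Jensen and Cauchy--Schwarz $\E|R|^q$ factorizes through $\E|\Upsilon_1|^{2q}$ and $\E|G_{11}|^{2q}$; but $\E|\Upsilon_1|^{2q}$ feeds back $\E|\Im\Lambda|^q$, producing a self-bootstrapping inequality that we close by a two-case analysis.

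First, for part i), I would invoke the $\sqrt{|R|}$ bound in (\ref{e:Lambda2}), which gives $|\Lambda|^{2q} \leq C^{2q}|R|^q$ on $|E|\leq 2+\eta$; for part ii) the analogous bound (\ref{e:Lambda1}) gives $|\Im\Lambda|^{2q}\leq C^{2q}|R|^q$ for all $E\in\RE$. Then, since $R = N^{-1}\sum_j \Upsilon_j G_{jj}$, Jensen's inequality applied to $|x|^q$ ($q\geq 1$) yields
\[
\E|R|^q \;\leq\; \frac{1}{N}\sum_j \E|\Upsilon_j G_{jj}|^q \;\leq\; \bigl(\E|\Upsilon_1|^{2q}\bigr)^{1/2}\bigl(\E|G_{11}|^{2q}\bigr)^{1/2},
\]
using Cauchy--Schwarz and the identical distribution of the $(\Upsilon_j,G_{jj})$ pairs.

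Next, I plug in (\ref{boundUps}). Since $\eta\leq\tildeeta$ the quantity $\Im m_{sc}$ is $O(1)$, and since $N\eta\geq 1$ and $1/N = \eta/(N\eta)\leq \tildeeta/(N\eta)$ the terms $1/(N\eta)^{2q}$ and $1/N^q$ are dominated by $1/(N\eta)^q$; hence
\[
\E|\Upsilon_1|^{2q} \;\leq\; \frac{(Cq)^{2q}}{(N\eta)^q}\bigl(1 + \E|\Im\Lambda|^q\bigr).
\]
Combining, with $(1+x)^{1/2}\leq 1+x^{1/2}$ and $\E|\Im\Lambda|^q \leq (\E|\Im\Lambda|^{2q})^{1/2}$ (and likewise $\E|\Im\Lambda|^q\leq (\E|\Lambda|^{2q})^{1/2}$ for part i)),
\[
\E|R|^q \;\leq\; \frac{(Cq)^q}{(N\eta)^{q/2}}\bigl(1 + A^{1/4}\bigr)\,B^{1/2},
\]
where $A$ denotes either $\E|\Lambda|^{2q}$ (part i) or $\E|\Im\Lambda|^{2q}$ (part ii), and $B=\E|G_{11}|^{2q}$.

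Finally, the bounds from Proposition~\ref{p:lambda} give $A \leq C^{2q}\E|R|^q$, so
\[
A \;\leq\; \frac{(Cq)^q}{(N\eta)^{q/2}}\bigl(1+A^{1/4}\bigr)B^{1/2}.
\]
I close the bootstrap by a dichotomy. If $A^{1/4}\leq 1$, then $A \leq 2(Cq)^q B^{1/2}/(N\eta)^{q/2}$, and since $B^{1/2}\leq B^{2/3}+1$ (both when $B\geq 1$ and when $B\leq 1$), the claim follows. If instead $A^{1/4}>1$, then $A \leq 2(Cq)^q A^{1/4}B^{1/2}/(N\eta)^{q/2}$, i.e.\ $A^{3/4}\leq 2(Cq)^q B^{1/2}/(N\eta)^{q/2}$, which raised to the $4/3$ power gives $A \leq (Cq)^{4q/3} B^{2/3}/(N\eta)^{2q/3}$; since $N\eta\geq 1$, this is stronger than the asserted bound $(Cq)^{4q/3}(B^{2/3}+1)/(N\eta)^{q/2}$.

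The only mildly delicate point is the self-reference: the moment of $\Upsilon_1$ depends on $\E|\Im\Lambda|^q$, which is exactly what we want to bound. The dichotomy above resolves this, with essentially no loss since $A^{1/4}$ enters to a power less than one; there is no need to iterate.
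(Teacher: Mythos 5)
Your proposal is correct and follows essentially the same route as the paper: bound $|\Lambda|^{2q}$ (resp.\ $|\Im\Lambda|^{2q}$) by $C^q|R|^q$ via the $\sqrt{|R|}$ estimates of Proposition~\ref{p:lambda}, reduce $\E|R|^q$ to $(\E|\Upsilon_1|^{2q})^{1/2}(\E|G_{11}|^{2q})^{1/2}$ by Jensen/Cauchy--Schwarz and identical distribution, insert (\ref{boundUps}), and close the resulting self-consistent inequality in $\E|\Lambda|^{2q}$. The only cosmetic difference is how the bootstrap is closed: the paper uses the elementary inequality $x^{1/4}\leq \delta x+\delta^{-1/3}$, whereas you use an equivalent two-case dichotomy on $A^{1/4}$; both yield the stated $(Cq)^{4q/3}$ bound.
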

 \begin{proof}
We show (\ref{e:baby_bound_at_edge}). To this end, we use the bound (\ref{e:Lambda1})
proportional to $\sqrt{|R|}$. We find
\[
\begin{split}
 \E |\text{Im } \Lambda|^{2q} &\leq C^q \E |R|^q
 \leq C^q \, \E \left|  \frac{1}{N}\sum_j \Upsilon_j G_{jj}\right|^q
 \leq C^q \, \E |\Upsilon_1 G_{11}|^q
 \\
 &
 \leq C^q \sqrt{\E |\Upsilon_1|^{2q} \, \E |G_{11}|^{2q}}
 \\
 &
 \leq  (Cq)^q \, \sqrt{\E |G_{11}|^{2q}} \, \sqrt{\frac{(\Im m_{sc})^{q} + \E |\Im\Lambda|^{q}}{(N\eta)^q} + \frac{1}{N^{q}}+ \frac{1}{(N\eta)^{2q}}} \,.
 \end{split}
\]
By Cauchy-Schwarz we find
\[
\E |\Im\Lambda|^{2q}
\leq  \frac{(Cq)^q \, \sqrt{ \E |G_{11}|^{2q} }}{(N \eta)^{q/2}}\left(\sqrt[4]{ \, \E |\Im\Lambda|^{2q}} + 1 \right)\,.
\]
Since $x^{1/4} \leq \delta x + \delta^{-1/3}$ for all $x,\delta >0$, we find
\[ \begin{split}  \E |\Im\Lambda|^{2q} &\leq (Cq)^{4q/3}  \left[  \left(\frac{\E |G_{11}|^{2q}}{(N\eta)^q} \right)^{2/3} +  \left(\frac{\E |G_{11}|^{2q}}{(N\eta)^q} \right)^{1/2} \right] \leq \frac{(Cq)^{4q/3}}{(N\eta)^{q/2}} \left[ (\E |G_{11}|^{2q})^{2/3} + 1 \right] \, . \end{split} \]
The bound (\ref{e:baby_bound_in_bulk}) follows analogously, using \eqref{e:Lambda2} instead of \eqref{e:Lambda1}.
\end{proof}

\subsection{Bound on moments of $G_{11}$}

To obtain a (non-optimal) bound on the moments of $|\text{Im} \Lambda|$ and (in the region $|E| \leq 2 + \eta$) of $|\Lambda|$, we still need control on the moments of the diagonal entries of the resolvent. This is the content of the following lemma, which makes use of the bound (\ref{e:Lambda3}) (in the region $|E| \leq 2$, we could also apply (\ref{e:Lambda2}) to arrive at the same conclusions).
\begin{lemma}\label{l:boundG11}
Assume (\ref{e:gd}), fix $\tilde E>0$ and $\tildeeta> 0$. We set $z = E+i\eta$ and use the shorthand notation $G_{jj} = G_{jj} (z)$. Then there exist constants $C_0, M >0$ such that
\begin{equation}\label{boundG11}
\E \, |G_{11}|^{q} \leq C_0^{q}\, ,
\end{equation}
for all $|E|<\tilde E$, $\eta \leq \tildeeta$, $N \geq1$ such that $N \eta \geq M$, and for all $q\in \NA$ with $q \leq (N\eta)^{1/4}$.
\end{lemma}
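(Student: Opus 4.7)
The plan is to run a self-improving bootstrap in $\eta$. The identity $G_{11} = m_{sc}(1 + (\Lambda+\Upsilon_1)G_{11})$ combined with $(a+b)^{2q}\le 2^{2q-1}(a^{2q}+b^{2q})$ and Cauchy--Schwarz yields
\begin{equation*}
\E|G_{11}|^{2q} \le 2^{2q-1}\Bigl(1 + \sqrt{\E|\Lambda+\Upsilon_1|^{4q}\,\E|G_{11}|^{4q}}\Bigr).
\end{equation*}
Whenever the square root is at most $1$, one concludes $\E|G_{11}|^{2q}\le 2^{2q}$ no matter how large the initial finite bound on $\E|G_{11}|^{2q}$ was. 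The needed smallness of $\E|\Lambda+\Upsilon_1|^{4q}$ is provided by Lemma~\ref{l:baby_bound_at_edge} and \eqref{boundUps}, which in turn need a priori moment bounds on $G_{11}$; the resulting circularity will be broken by a continuity argument in $\eta$, starting from the trivial bound $|G_{11}|\le 1/\tildeeta$ at $\eta=\tildeeta$.

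Fix $|E|<\tilde E$ and set $C_0 := \max(4,2/\tildeeta)$. I would consider
\begin{equation*}
\mathcal{S} := \bigl\{\eta^\dagger\in[\eta,\tildeeta] : \E|G_{11}(E+i\eta')|^{2p}\le C_0^{2p} \text{ for all } \eta'\in[\eta^\dagger,\tildeeta],\, p\in\NA,\, p\le c_0(N\eta')^{1/4}\bigr\},
\end{equation*}
which contains $\tildeeta$ by the trivial bound. The goal is to show $\inf\mathcal{S}=\eta$. Suppose $\eta^\dagger\in\mathcal{S}$ and $4q\le c_0(N\eta^\dagger)^{1/4}$. Applying Lemma~\ref{l:baby_bound_at_edge} with $2q$ in place of $q$ (for $|E|\le 2+\eta^\dagger$), or else Proposition~\ref{p:lambda}(\ref{e:Lambda3}) together with the lower bound $|\Lambda+(2m_{sc}+z)|\ge \Im(\Lambda+2m_{sc}+z)\ge\eta^\dagger$, which excludes the wrong root of the self-consistent quadratic when $C\sqrt{|R|}<\eta^\dagger$ and forces $|\Lambda|\le C\sqrt{|R|}$ (for $2+\eta^\dagger<|E|<\tilde E$), one obtains
\begin{equation*}
\E|\Lambda|^{4q}\le \frac{(C_1 C_0)^{8q/3}}{(N\eta^\dagger)^{q/3}},
\end{equation*}
after absorbing the combinatorial factor $(Cq)^{8q/3}$ from Lemma~\ref{l:baby_bound_at_edge} into $(N\eta^\dagger)^{2q/3}$ using $2q\le c_0(N\eta^\dagger)^{1/4}$. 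Feeding the corresponding bound on $\Im\Lambda$ into \eqref{boundUps} yields the analogous estimate on $\E|\Upsilon_1|^{4q}$. Combined with the bootstrap bound $\E|G_{11}|^{4q}\le C_0^{4q}$, the square root in the key inequality is bounded by $(C_2 C_0^{5/3})^{2q}/(N\eta^\dagger)^{q/6}$, which is at most $1$ as soon as $N\eta^\dagger\ge M_0 := (C_2 C_0^{5/3})^{12}$. Thus $\E|G_{11}(E+i\eta^\dagger)|^{2q}\le 2^{2q}$, strictly smaller than $C_0^{2q}$ since $C_0\ge 4>2$; by continuity of $\E|G_{11}(E+i\eta')|^{2q}$ in $\eta'$ (via dominated convergence and $|G_{11}|\le 1/\eta'$), this strict inequality persists in a neighbourhood of $\eta^\dagger$, so $\eta^\dagger$ can be decreased while staying in $\mathcal{S}$. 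Hence $\mathcal{S}=[\eta,\tildeeta]$ whenever $N\eta\ge M_0$.

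The main obstacle is the circular character of the moment estimates: Lemma~\ref{l:baby_bound_at_edge} and \eqref{boundUps} both require moment bounds on $G_{11}$, which is what we are trying to prove. The Cauchy--Schwarz decomposition resolves this cleanly because it contracts any finite initial constant $C_0$ down to the universal value $2$, so that continuity can propagate the bound from $\eta=\tildeeta$ down to $\eta\ge M_0/N$. The constraint $q\le(N\eta)^{1/4}$ in the statement is exactly what is needed for the factor $(Cq)^{8q/3}$ in Lemma~\ref{l:baby_bound_at_edge} to be dominated by $(N\eta)^{2q/3}$, producing the $(N\eta)^{-q/6}$ decay that drives the self-improvement.
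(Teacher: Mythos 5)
There is a genuine gap, and it sits exactly where the paper has to work hardest: the regime $2+\eta^\dagger<|E|<\tilde E$ (nonempty and unavoidable, since $\tilde E$ is arbitrary). Your key inequality forces you to make $\E|\Lambda+\Upsilon_1|^{4q}$ small, hence $\E|\Lambda|^{4q}$ small, and outside the spectrum you get this from the root-exclusion ``if $C\sqrt{|R|}<\eta^\dagger$ then, since $|\Lambda+2m_{sc}+z|\ge\eta^\dagger$, the minimum in (\ref{e:Lambda3}) must be $|\Lambda|$''. Two problems. First, you never treat the complementary event $\{C\sqrt{|R|}\ge\eta^\dagger\}$: there (\ref{e:Lambda3}) only gives $|\Lambda|\le|2m_{sc}+z|+C\sqrt{|R|}=O(1)$, and discarding that event by Markov costs a factor $\eta^{-4q}$ which cannot be beaten by $\E|R|^{2q}\sim(N\eta)^{-q}$ when $\eta$ is small. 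Second, and worse, in the interesting regime the exclusion hypothesis is typically false: at this stage $|R|$ is of size $(N\eta)^{-1/2}$ (at best $(N\eta)^{-1}$), so $\sqrt{|R|}\gg\eta$ as soon as $\eta\ll N^{-1/3}$, in particular for $\eta\sim M/N$. So the claimed estimate $\E|\Lambda|^{4q}\le (C_1C_0)^{8q/3}(N\eta^\dagger)^{-q/3}$ for $|E|>2+\eta^\dagger$ does not follow from the cited lemmas — and it cannot, because of the square-root instability discussed after Proposition~\ref{p:lambda}: for $|E|>2+\eta$ even a tiny $R$ can push $\Lambda$ near $-2\sqrt{z^2/4-1}$, which is exactly why Lemma~\ref{l:no} part i) and Theorem~\ref{t:main} part i) are restricted to $|E|\le2+\eta$. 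The paper's proof of Lemma~\ref{l:boundG11} is built precisely to avoid ever bounding $|\Lambda|$ in that region: it expands $G_{11}$ in two ways and obtains $|G_{11}|\le C\bigl(1+\min\bigl(|\Lambda|,|\Lambda+2m_{sc}+z|\bigr)|G_{11}|+|\Upsilon_1||G_{11}|\bigr)$, so that only the minimum, controlled by $\sqrt{|R|}$ via (\ref{e:Lambda3}), is ever needed. This ``two expansions/min'' device is the missing idea; without it the inequality you iterate is simply unavailable for $|E|>2+\eta$.

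There is also a structural gap in the continuity bootstrap itself. Your improvement yields $\E|G_{11}(\eta^\dagger)|^{2q}\le2^{2q}$ only for $4q\le c_0(N\eta^\dagger)^{1/4}$, whereas membership in $\mathcal{S}$ at $\eta'$ just below $\eta^\dagger$ still requires the bound for all $p\le c_0(N\eta')^{1/4}\approx c_0(N\eta^\dagger)^{1/4}$; for $p$ in the top part of that range you only have the non-strict bound $C_0^{2p}$ at $\eta^\dagger$, and continuity cannot propagate a non-strict inequality downward, so $\inf\mathcal{S}$ may stall strictly above $\eta$. The paper closes exactly this hole with the deterministic scaling bound $|G_{11}(E+i\eta/s)|\le s\,|G_{11}(E+i\eta)|$, which provides a rough a priori bound (constant $sC_0$) at the smaller scale for the full moment range, which the self-consistent inequality then contracts back to $C_0$; your scheme has no substitute for this ingredient. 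The inside-the-spectrum arithmetic (the Cauchy--Schwarz contraction and the absorption of $(Cq)^{8q/3}$ using $q\lesssim(N\eta)^{1/4}$) is sound and parallels the paper, but as written the proposal does not prove the lemma.
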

\begin{proof} We show first how to relate $G_{11}(E+i\eta/s)$ to $G_{11}(E+i\eta)$. We consider $G_{11} (E+i\nu)$ as a function of $\nu$, with fixed $E$. We find
\[ \frac{d}{d\nu} \, \log \, G_{11}(E+i\nu) = \frac{i}{G_{11} (E+i\nu)}\,  \left\langle e_1, \frac{1}{(H-E-i\nu)^2} e_1 \right\rangle\,. \]
Taking absolute value, and arguing as in (\ref{eq:|G|2}), we find
\[ \left|  \frac{d}{d\nu} \, \log \, G_{11}(E+i\nu) \right|  \leq \frac{1}{|G_{11} (E+i\nu)|}  \left\| G(E+i\nu) e_1 \right\| \, \left\| G(E-i\nu) e_1 \right\| \leq \frac{1}{\nu} \,.\]
Hence
\[ \left| \log \, G_{11} (E+i\eta) - \log G_{11} (E+i\eta /s) \right| = \left| \int_{\eta/s}^\eta d\nu \, \frac{d}{d\nu} \, \log \, G_{11}(E+i\nu) \right| \leq \int_{\eta/s}^\eta d\nu \, \frac{1}{\nu} = \log \, s\,. \]
This gives
\[ \left| \text{Re } \log \, G_{11} (E+i\eta) - \text{Re } \log \, G_{11} (E+i\eta /s) \right| \leq \log \, s \]
and therefore
\[ \log \, |G_{11} (E+i\eta/s)| \leq \log \, |G_{11} (E+i\eta)| + \log \, s \,. \]
Exponentiating, we find
\begin{equation}\label{eq:G11s} |G_{11} (E+i\eta/s)| \leq s \, |G_{11} (E+i\eta)|\,. \end{equation}
The proof of (\ref{boundG11}) now proceeds by induction in the distance $\eta$ from the real axis.
We fix $|E| \leq \tilde{E}$ and we use the notation $G_{11}(E+i\eta)\equiv G_{11}(\eta)$, and similarly for the other quantities depending on $z$. To start the induction, we recall that for $\eta \geq 0.1$, $|G_{11} (\eta)| \leq 10$, so $\E |G_{11}(\eta)|^{q} \leq  10^{q}$.  Next, we assume that, for some $\eta_0 > 0$,
\begin{equation}\label{eq:eta0} \E |G_{11}(\eta_0)|^{q} \leq  C_0^{q} \quad \text{ for all $1 \leq q \leq (N\eta_0)^{1/4}$} \,. \end{equation} We show that the same bounds hold true, if we replace $\eta_0$ by $\eta_1 = \eta_0/16$.

\medskip

We set $s = 16$ in (\ref{eq:G11s}); from the assumption (\ref{eq:eta0}) we get that
\begin{equation}\label{e:3}
\E |G_{11}(\eta_1)|^q \leq (50 C_0)^{q}
\end{equation}
for any $q \leq (N \eta_0)^{1/4}$. To iterate further this bound, we need to improve it and get back to the constant $C_0^q$. To this end, we recall the identity  $G_{11}=  m_{sc} + m_{sc}(\Lambda + \Upsilon_1) G_{11}$ (see \eqref{eq:gj}), valid for all $z \in \CO$. On the one hand, this implies that
\begin{equation}\label{eq:G111} |G_{11} | \leq 1 + |\Lambda | |G_{11}| + |\Upsilon_1| |G_{11} | \, . \end{equation}
for all $z \in \CO$. On the other hand, it also implies that
\[ G_{11} = m_{sc} + m_{sc} (\Lambda + 2m_{sc} + z + \Upsilon_1) G_{11} - m_{sc} (2m_{sc}+z) G_{11} \]
Since $1+m_{sc} (2m_{sc} + z) = m_{sc}^2$, we conclude that
\[ |G_{11}| \leq \frac{1}{|m_{sc}|} + \frac{1}{|m_{sc}|} |\Lambda+2m_{sc}+z| |G_{11}| + \frac{1}{|m_{sc}|} |\Upsilon_1| |G_{11}| \, . \]
Using the bound $|m_{sc}|^{-1} \leq 1 + |z|$, and combining with (\ref{eq:G111}), we obtain that there exists a constant $C$ depending on $\tilde{E}$ and $\tildeeta$ such that
\[ \begin{split} |G_{11}| &\leq C \left( 1+ \min \left(|\Lambda|, |\Lambda + 2m_{sc}+z| \right) |G_{11}| + |\Upsilon_1| |G_{11}| \right) \\ &\leq C \left( 1+ \sqrt{|R|} \, |G_{11}| + |\Upsilon_1| |G_{11}| \right) \end{split}
\]
for all $z = E+i\eta$ with $|E| \leq \tilde{E}$ and $0< \eta < \tildeeta$. In the second inequality, we used (\ref{e:Lambda3}).  Taking the $q$-th moment, using the definition (\ref{R}) and Cauchy-Schwarz, we find
\begin{equation} \label{eq:EG11q} \E |G_{11}|^q \leq (3C)^q \left( 1 + (\E |G_{11}|^{2q})^{3/4} (\E |\Upsilon_1|^{2q})^{1/4} + (\E |G_{11}|^{2q})^{1/2} (\E |\Upsilon_1|^{2q})^{1/2} \right) \end{equation}
for all $z = E+i\eta$ with $|E| \leq \tilde{E}$ and $0< \eta < \tildeeta$. Plugging (\ref{e:baby_bound_at_edge}) into (\ref{boundUps}), we find
\[\E|\Upsilon_1 |^{2q} \leq \frac{(\wt{C}q)^{8q/3}}{(N\eta)^q} \left(1 + (\E |G_{11}|^{2q})^{1/3} \right) \]
for all $z = E+i\eta$ with $|E| \leq \tilde{E}$ and $0< \eta < \tildeeta$. Next, we insert the last bound into (\ref{eq:EG11q}). We specialize to $z = E + i\eta_1$ and to $q \leq (N\eta_1)^{1/4}$ (which implies that $2q \leq (N\eta_0)^{1/4}$ and therefore that we can use (\ref{e:3}) to bound $\E |G_{11}|^{2q}$). We find
\[ \begin{split} \E |G_{11} (\eta_1)|^q  \leq \; &(3C)^q \left( 1+ \frac{(\wt{C}q)^{2q/3}}{(N\eta_1)^{q/4}} (1+ (\E |G_{11} (\eta_1)|^{2q})^{1/12}) (\E |G_{11} (\eta_1)|^{2q})^{3/4} \right. \\ & \left. \hspace{4cm} +  \frac{(\wt{C}q)^{4q/3}}{(N\eta_1)^{q/2}} (1+ (\E |G_{11} (\eta_1)|^{2q})^{1/6}) (\E |G_{11} (\eta_1)|^{2q})^{1/2} \right) \\ \leq \; & (3C)^q \left(1 + \frac{(\wt{C}q)^{2q/3}}{(N\eta_1)^{q/4}} (\E |G_{11} (\eta_1)|^{2q})^{5/6} +
 \frac{(\wt{C}q)^{4q/3}}{(N\eta_1)^{q/2}} (\E |G_{11} (\eta_1)|^{2q})^{2/3} \right)
 \\ \leq \; &(3C)^q  \left(1 + \frac{(\wt{C}q)^{2q/3}}{(N\eta_1)^{q/4}} (50 C_0)^{5q/3} +
 \frac{(\wt{C}q)^{4q/3}}{(N\eta_1)^{q/2}} (50 C_0)^{4q/3} \right)
\\ \leq \;& (3C)^q  \left(1 + \frac{K^q}{(N\eta_1)^{q/12}} \right) \, .
\end{split} \]
for a constant $K > 0$ depending only on $C_0$ (and on the universal constant $\wt{C}$). Choosing first $C_0 > 6C$ and then $M > K^{12}$, it follows that, for $N\eta_1 > M$, $K^q/(N\eta_1)^{q/12} < 1$ and therefore that
\[  \E |G_{11} (\eta_1)|^q  \leq C_0^q \, .\]
This concludes the proof of Lemma \ref{l:boundG11}.
\end{proof}

\subsection{Non-optimal bound on moments of $\Lambda$}

Combining Lemma \ref{l:baby_bound_at_edge} with Lemma \ref{l:boundG11}, we obtain a non-optimal upper bound on the moments of $\Lambda$.
\begin{lemma}\label{l:no}  Assume (\ref{e:gd}) and fix $\tildeeta> 0$. As usual, we set $z = E+i\eta$ and denote $\Lambda = \Lambda (z)$.
\begin{itemize}
\item[i)] There exist constants $C,M > 0$ such that
\[
 \E|\La|^q \leq  \frac{  (Cq)^{\frac{2q}3}}{(N\eta)^{\frac{q}4}}\,
 \]
for all $0 < \eta \leq \tildeeta$, $|E| \leq 2 + \eta$, $N \geq 1$ such that $N \eta \geq M$, and for all $q\in \NA$ with $q \leq (N\eta)^{1/4}$.
\item[ii)] Fix $\tilde{E} > 0$. Then there exist constants $C,M > 0$ such that
\[
 \E|\text{Im } \La|^q \leq  \frac{  (Cq)^{\frac{2q}3}}{(N\eta)^{\frac{q}4}}\,
 \]
for all $0 < \eta \leq \tildeeta$, $|E| \leq \tilde{E}$, $N \geq 1$ such that $N \eta \geq M$, and for all $q\in \NA$ with $q \leq (N\eta)^{1/4}$.
\end{itemize}
\end{lemma}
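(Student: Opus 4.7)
The plan is to combine the two preceding lemmas in a direct way. Lemma \ref{l:baby_bound_at_edge} reduces moments of $\Lambda$ (in part i, under the constraint $|E|\leq 2+\eta$) and of $\Im\Lambda$ (in part ii, under $|E|\leq\tilde E$) to moments of $G_{11}$, while Lemma \ref{l:boundG11} provides a uniform bound $\E|G_{11}|^q\leq C_0^q$ valid on the appropriate range of $q$. Plugging one into the other gives the claim, with only some minor bookkeeping to match parities and ranges of $q$.

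For part i), the plan is as follows. Given $q\in\NA$ with $q\leq (N\eta)^{1/4}$, I set $q'=\lceil q/2\rceil$, so that $2q'\geq q$ and $2q'\leq q+1$. By Jensen's inequality,
\[
\E|\Lambda|^q \leq \bigl(\E|\Lambda|^{2q'}\bigr)^{q/(2q')}.
\]
Then I apply the first bound of Lemma \ref{l:baby_bound_at_edge} (which is available because $|E|\leq 2+\eta\leq 2+\tildeeta$ and $N\eta\geq 1$, taking $M\geq 1$) to obtain
\[
\E|\Lambda|^{2q'} \leq \frac{(Cq')^{4q'/3}}{(N\eta)^{q'/2}}\Bigl[(\E|G_{11}|^{2q'})^{2/3}+1\Bigr].
\]
Next I apply Lemma \ref{l:boundG11} with $\tilde E := 2+\tildeeta$ and exponent $2q'$; this requires $2q'\leq (N\eta)^{1/4}$, which (up to a harmless factor of $2$) holds after enlarging the threshold $M$ and reducing the allowed range of $q$ by a constant factor — constants that are hidden in the statement. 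This yields $\E|G_{11}|^{2q'}\leq C_0^{2q'}$, and hence
\[
\E|\Lambda|^{2q'} \leq \frac{(C' q')^{4q'/3}}{(N\eta)^{q'/2}},
\]
with $C'$ depending only on $C$ and $C_0$. Raising to the power $q/(2q')\leq 1$ and using $2q'\leq q+1$ and $q'\leq q$, the right-hand side becomes at most $(C''q)^{2q/3}/(N\eta)^{q/4}$ after adjusting constants, which is the desired bound.

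Part ii) follows the identical recipe, but replacing the input from Lemma \ref{l:baby_bound_at_edge} i) by its part ii), and using Lemma \ref{l:boundG11} with the given $\tilde E$ (so that $|E|\leq \tilde E$ is directly assumed). Nothing else changes, since the numerical content of the two inequalities in Lemma \ref{l:baby_bound_at_edge} is identical; only the allowed region of $E$ differs.

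There is no real obstacle here — the argument is essentially mechanical. The one mildly delicate point is matching the range $q\leq(N\eta)^{1/4}$ stated in Lemma \ref{l:no} with the range $q\leq(N\eta)^{1/4}$ required in Lemma \ref{l:boundG11}, since the application uses the exponent $2q'\asymp q$ in the $G_{11}$-moment bound. This is handled by choosing $M$ sufficiently large (and possibly replacing the constant $c_0$ implicit in $q\leq(N\eta)^{1/4}$ by a smaller one), both of which are absorbed in the constants appearing in the statement of Lemma \ref{l:no}.
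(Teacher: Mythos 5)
Your argument is correct and coincides with the paper's own treatment: Lemma \ref{l:no} is obtained there precisely by inserting the moment bound $\E|G_{11}|^{q}\leq C_0^{q}$ of Lemma \ref{l:boundG11} into the two estimates of Lemma \ref{l:baby_bound_at_edge}, with the parity and range bookkeeping left implicit. One small inaccuracy in your bookkeeping: enlarging $M$ (or appealing to a constant $c_0$ that does not appear in the statement of Lemma \ref{l:no}) does not by itself cure the overshoot $2q'\leq q+1$ for odd $q$ at the top of the range $q\leq (N\eta)^{1/4}$; the clean fix is to observe that for even $q$ there is no mismatch at all, while for odd $q$ the excess of one unit is harmless because the bootstrap proving Lemma \ref{l:boundG11} runs verbatim with the range $q\leq c_1(N\eta)^{1/4}$ for any fixed $c_1>1$ (at the price of larger constants $C_0$ and $M$), which covers the required exponent $q+1$.
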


\section{Optimal bound on $\Lambda$; proof of Theorem \ref{t:main}}
\label{sec:opt}

To prove Theorem \ref{t:main}, we will control the moments of the error term (\ref{R}) through the control parameters
\[
\EE_q = \max \left\{\frac{1}{(N\eta)^{2q}},\frac{(\Im m_{sc})^{q} +\E|\text{Im } \La|^{q}}{(N\eta)^{q}}\right\}   + \frac{1}{N^q} \,.
\]
We notice that, by definition, $\EE_q \geq (N\eta)^{-2q}$. Moreover, assuming that $\E \, |\Im \Lambda|^{2q} \leq 1$ and $\E \, |\Im \Lambda|^p \leq 1$ (by Lemma \ref{l:no}, we know that these assumptions hold true, for $\max \, (p, 2q) \leq (N\eta)^{1/4}$), there exist universal constants $C,c > 0$ such that
\begin{equation}\label{eq:EEpq}  c \EE_p \leq \EE_q \leq C \EE_p^{q/p} \end{equation}
for all $1 \leq q \leq p$. In fact, the second inequality is a consequence of $\E \, |\Im \Lambda|^q \leq (\E \, |\Im\Lambda|^p)^{q/p}$. The first inequality in (\ref{eq:EEpq}), on the other hand, can be proven as follows. If $(\text{Im } m_{sc})^p + \E |\Im \Lambda|^p \leq (N\eta)^{-p}$, then
\[ \EE_p = \frac{1}{(N\eta)^{2p}} + \frac{1}{N^p} \leq \frac{1}{(N\eta)^{2q}} + \frac{1}{N^q} \leq \EE_q \,. \]
If instead $(\text{Im } m_{sc})^p + \E |\Im\Lambda|^p > (N\eta)^{-p}$, then
\[ \EE_p = \frac{(\text{Im } m_{sc})^p + \E |\Im\Lambda|^p}{(N\eta)^p} + \frac{1}{N^p}\,. \]
For $p \geq 2q$, we find (under the assumption that $\E \, |\Im\Lambda|^p \leq 1$)
\[ \EE_p \leq \frac{2}{(N\eta)^{2q}}+ \frac{1}{N^{2q}} \leq 2 \EE_q \,.\]
For $q \leq p < 2q$, we write $p = 2 \alpha q + (1-\alpha) q$ (with $\alpha = (p-q)/q$) and we observe that, by H\"older's inequality (under the assumption $\E |\Im\Lambda|^{2q} \leq 1$)
\[ \E |\Im\Lambda|^p \leq (\E |\Im\Lambda|^q)^{1-\alpha} \,  (\E |\Im\Lambda|^{2q})^\alpha \leq (\E |\Im\Lambda|^q)^{1-\alpha} \,. \]
This gives
\[ \begin{split} \frac{(\text{Im } m_{sc})^p + \E |\Im\Lambda|^p}{(N\eta)^p} &\leq C \left( \frac{(\text{Im } m_{sc})^q + \E |\Im\Lambda|^q}{(N\eta)^q} \right)^{1-\alpha} \, \left(\frac{1}{(N\eta)^{2q}} \right)^\alpha \\ &\leq C \max \left\{\frac{1}{(N\eta)^{2q}} ,  \frac{(\text{Im } m_{sc})^q + \E |\Im\Lambda|^q}{(N\eta)^q} \right\}  \end{split} \]
and proves that $\EE_p \leq C \EE_q$.

\medskip

The proof of Theorem  \ref{t:main}  is  based on the following lemma.
\begin{lemma}\label{l:R3}
Assume (\ref{e:gd}), fix $\tilde E>0$ and $\tildeeta> 0$. Set $z = E+i\eta$. There exist constants $C,M,c_0 > 0$ such that
\[
  \E\left|\frac{1}{N} \sum_k Z_k G_{kk} \right|^{2q} \leq(Cq)^{cq^2} \EE_{4q}^{\frac12}  \,,
 \]
 for all $|E| \leq \tilde E$, $0 < \eta \leq \tildeeta$, $N \geq1$ such that $N \eta \geq M$, and for all $q\in \NA$ with
 $q \leq c_0  (N\eta)^{1/8}$.
\end{lemma}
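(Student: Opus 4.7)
My plan is to expand the $2q$-th power
\[
\Bigl|\frac{1}{N}\sum_k Z_k G_{kk}\Bigr|^{2q} = \frac{1}{N^{2q}} \sum_{k_1,\dots,k_{2q}=1}^N \prod_{i=1}^{2q} W_i^{(\sigma_i)},
\]
where $W_i^{(1)} = Z_{k_i}G_{k_ik_i}$, $W_i^{(*)} = \overline{Z_{k_i}G_{k_ik_i}}$, and $\sigma_i$ alternates between $1$ and $*$, and then exploit the identity $\E_j Z_j = 0$ to discard those tuples in which some index is a singleton. The crucial observation is that $Z_{k_i}$ is a function of the row $\aa_{k_i}$ and of the minor $H^{(k_i)}$, while $G_{k_jk_j}$ (for $j\neq i$) in general depends on $\aa_{k_i}$ as well; if we can rewrite every factor other than $Z_{k_i}$ as a function of $H^{(k_i)}$ alone, then taking $\E_{k_i}$ kills the term whenever $k_i$ is not matched.

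The rewriting is performed by an expansion algorithm in the spirit of \cite{ekyy-rxv12}, which iteratively applies the resolvent identity (\ref{eq:G-Gj}) in the form $G_{k_jk_j}=G^{(k_i)}_{k_jk_j}+G_{k_ik_j}G_{k_jk_i}/G_{k_ik_i}$. This produces a tree of terms, each a product of \emph{leading} pieces (pure $G^{(k_i)}$ factors, independent of $\aa_{k_i}$) and \emph{error} pieces carrying off-diagonal entries $G_{k_ik_j}G_{k_jk_i}/G_{k_ik_i}$. On the leading branches, the conditional expectation $\E_{k_i}$ annihilates any singleton $k_i$; on the error branches, off-diagonal $G$ entries are summed via the Ward-type identity $\sum_k|G_{jk}|^2 = \Im G_{jj}/\eta$ already used in the proof of Proposition~\ref{p:Zkq}, which produces factors of the form $(\Im m_{sc} + |\Im\Lambda|)/\eta$ characteristic of $\EE_q$. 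After the expansion, only tuples in which every index appears at least twice survive, so at most $q$ distinct indices are summed, contributing a factor $N^q$ against the $N^{-2q}$ prefactor.

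The remaining estimate is H\"older-based: bound each $Z$-factor in $L^{4q}$ using Proposition~\ref{p:Zkq} — this is the source of the exponent $4q$ inside $\EE_{4q}$, since each H\"older step may double the exponent — bound each surviving $G_{kk}$ using Lemma~\ref{l:boundG11}, and bound each pair of off-diagonal $G$ entries via the Ward identity. Summing over the at most $(Cq)^{cq^2}$ matching patterns and tree structures yields the claimed bound. The main obstacle, and the heart of the argument, is the bookkeeping: one must verify that the algorithm terminates in $O(q^2)$ applications of (\ref{eq:G-Gj}), that the total number of generated terms is controlled by $(Cq)^{cq^2}$, and that each branch of the tree is tightly accounted for in the power counting so that off-diagonal errors are correctly charged against the surplus factors of $\eta^{-1}$. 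The condition $q\leq c_0(N\eta)^{1/8}$ is precisely what is needed to guarantee that this combinatorial factor does not overwhelm the per-index power gain produced by $\EE_{4q}^{1/2}$, and to invoke Lemma~\ref{l:no} to control the moments of $\Im\Lambda$ that enter $\EE_{4q}$.
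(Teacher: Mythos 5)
Your overall strategy (moment expansion over index tuples, an EKYY-style expansion to trade dependence for off-diagonal smallness, then H\"older with Proposition~\ref{p:Zkq} and Lemma~\ref{l:boundG11}) is the same family of argument as the paper's, but there are two genuine gaps. The first is the centering step. You claim that $\E_j Z_j=0$ kills every tuple containing a singleton index, once the other factors are rewritten as functions of $H^{(k_i)}$. This fails for the factor attached to the singleton itself: $G_{k_ik_i}$ is correlated with $Z_{k_i}$, so $\E_{k_i}\bigl[Z_{k_i}G_{k_ik_i}\bigr]\neq 0$ no matter how the remaining factors are expanded. The paper therefore splits $W_k=Z_kG_{kk}$ into $(\ID-\E_k)W_k+\E_kW_k$ and proves separately that $\E|\E_1W_1|^{2q}\leq (Cq)^{4q}\EE_{4q}^{1/2}$, via the identity $\E_kW_k=\E_k\bigl[G_{kk}Z_k^2\bigr]\big/\E_k\bigl[1/G_{kk}\bigr]$ and a dedicated bootstrap bound on moments of $\bigl|1/\E_1[1/G_{11}]\bigr|$ (Lemma~\ref{lwidetildeG}). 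Your proposal has no counterpart to this term, and without it the vanishing you rely on simply does not hold.

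The second gap is the combinatorial mechanism. Even after centering, it is not true that ``only tuples in which every index appears at least twice survive'': when the expansion terminates with all entries maximally expanded, a lone label annihilates the term only if it has no partner label, i.e.\ if it never appears as a lower index elsewhere; otherwise the term survives, and the point of the argument is that each lone label then forces at least one additional off-diagonal factor $K^{(\T)}_{kl}$, quantified as a bound $(Cq)^{cq^2}\EE_{2q+|L(\kk)|}^{1/2}$ per tuple (and terms stopped early carry at least $2q+1$ off-diagonal factors, giving $\EE_{4q}^{1/2}$ directly). The conclusion then comes from trading the tuple entropy $N^{|P|}$ against this smallness, using $N^{-(2q-|P|)}\leq \EE_{2q-|P|}$, the interpolation $\EE_p\leq C\,\EE_{4q}^{p/(4q)}$ from (\ref{eq:EEpq}), and the inequality $|P|\leq |L|+\tfrac{2q-|L|}{2}$. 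Your counting ``at most $q$ distinct indices, so $N^{q}$ against $N^{-2q}$'' is both based on the false vanishing claim and insufficient in itself: a gain of $N^{-q}$ does not imply a bound by $\EE_{4q}^{1/2}$, which can be as small as $N^{-2q}$. Relatedly, the off-diagonal entries are not summed by the Ward identity over a free index (the indices are pinned by the tuple); they are written via (\ref{522}) in terms of $K^{(\T)}_{kl}$ and estimated in high moments through Hanson--Wright (Lemma~\ref{l:G12}), with the Ward identity entering only inside those moment bounds. Without the partner-label bookkeeping and the $\EE$-interpolation, the proof does not close.
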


Technically, the proof of this lemma is the main part of the paper. We defer it to the next section.
Assuming the result of Lemma \ref{l:R3}, we can now proceed with the proof of Theorem \ref{t:main}.

\medskip

\begin{proof}[Proof of Theorem \ref{t:main}]
We prove first the bound (\ref{e:mmscim}) for the imaginary part of $\Lambda$. According to \eqref{e:Lambda1}, we need to control the moments of $R= N^{-1} \sum_{j=1}^N \Upsilon_j G_{jj}$. Recalling the definition (\ref{eq:Ups}) of the coefficients $\Upsilon_j$, we obtain
\begin{equation}\label{strike}
\E|R|^{2q} \leq C^q \left(\E \left| \frac{1}{N}\sum_j h_{kk} \, G_{kk}
 \right|^{2q} +
\E\left| \frac{1}{N^2}\sum_{k,j} G_{kj}G_{jk} \right|^{2q} +
\E\left|  \frac{1}{N}\sum_{k} Z_k G_{kk} \right|^{2q}
    \right)
\end{equation}
where $Z_k = (\ID - \E_k) \, \aa_k^* G^{(k)} \aa_k$. {F}rom the assumption (\ref{e:gd}) and Lemma \ref{l:boundG11}, the first term in the parenthesis can be bounded by
\begin{equation}\label{post1}
\E \left| \frac{1}{N}\sum_j h_{kk} \, G_{kk}
 \right|^{2q}   \leq \E|h_{11}G_{11}|^{2q} \leq \frac{(Cq)^{q}}{N^q}  ( \E|G_{11}|^{4q} )^{1/2} \leq \frac{(Cq)^{q}}{N^q}  \,
\end{equation}
for all $1 \leq q \leq (N\eta)^{1/4}$. As for the second term on the r.h.s. of Eq. \eqref{strike}, we get
\begin{equation}\label{post2}
\E\left| \frac{1}{N^2}\sum_{k,j} G_{kj}G_{jk} \right|^{2q}\leq\E \left(\frac{1}{N^2} \sum_{k,j} |G_{jk}|^2 \right)^{2q} \leq \E
\left(\frac{\Im m}{N\eta }\right)^{2q}
\leq   C^q  \frac{(\Im m_{sc})^{2q} + \E|\Im \La|^{2q}}{(N\eta)^{2q}}\,.
\end{equation}
Combining \eqref{post1} and \eqref{post2} with Lemma \ref{l:R3}, we find
  \begin{equation}\label{e:R}
   \E|R|^{2q} \leq (Cq)^{cq^2} \EE_{4q}^{\frac12}
  \end{equation}
for all $1 \leq q \leq c_0 (N\eta)^{1/8}$. Next, we fix $N,\eta$ and $q$ with $1 \leq q \leq c_0  (N\eta)^{1/8}$, and we insert the last bound on the r.h.s. of (\ref{e:Lambda1}). We distinguish several cases. We can assume that \[ (\Im m_{sc})^{4q}+\E|\Im\La|^{4q} \geq \frac{1}{(N\eta)^{4q}} \] since otherwise there is nothing to prove. In this case,
\[ \EE_{2q} = \frac{(\Im m_{sc})^{2q}+\E|\Im\La|^{2q}}{(N\eta)^{2q}} + \frac{1}{N^{2q}} = \frac{(\Im m_{sc})^{2q}+\eta^{2q} + \E|\Im \La|^{2q}}{(N\eta)^{2q}} \,.\]

\medskip

If $\E |\Im \Lambda|^{2q} \leq (\text{Im } m_{sc})^{2q} + \eta^{2q}$, we use the bound proportional to $|R|$ in (\ref{e:Lambda1}). We find
\[
 \E |\Im\La|^{q} \leq \frac{C^q \, \E|R|^q}{|m_{sc}^2-1|^q} \leq \frac{(Cq)^{cq^2}}{|m_{sc}^2-1|^q}  \left(\frac{(\Im m_{sc})^{2q} + \eta^{2q}}{(N\eta)^{2q}} \right)^{\frac12} \leq \frac{(Cq)^{cq^2}}{(N\eta)^q} \left[ \frac{\text{Im } m_{sc}}{|m_{sc}^2-1|} + \frac{\eta}{|m_{sc}^2-1|} \right]^q \,.
\]
{F}rom Proposition \ref{p:msc}, we conclude
\begin{equation}\label{eq:momLa} \E \, |\Im\Lambda|^q \leq \frac{(Cq)^{cq^2}}{(N\eta)^q}\,. \end{equation}

\medskip

If, on the other hand, $(\Im m_{sc})^{2q} + \eta^{2q} \leq \E|\Im \La|^{2q}$,  we use the bound proportional to $|R|^{1/2}$ on the r.h.s. of (\ref{e:Lambda1}). We find
\[
 \E|\Im\La|^{2q} \leq C^q \, \E|R|^{q} \leq (Cq)^{cq^2}\left(\frac{\E|\Im \La|^{2q}}{(N\eta)^{2q}}\right)^{\frac12}\,.
\]
This implies that $\E|\Im \La|^{2q} \leq (Cq)^{2cq^2}/(N\eta)^{2q}$ and therefore that
\[ \E |\Im\La|^q \leq \sqrt{\E |\Im\La|^{2q}} \leq \frac{(Cq)^{cq^2}}{(N\eta)^q}\,. \]

\medskip

Combined with (\ref{eq:momLa}), this implies that
\[ \P \left( |\Im m (z) - \Im m_{sc} (z)| \geq \frac{K}{N\eta} \right) \leq \frac{(N\eta)^q}{K^q} \E |\Im \Lambda|^q \leq \frac{(Cq)^{cq^2}}{K^q} \,.\]
for all $1 \leq q \leq c_0  (N\eta)^{1/8}$ and concludes the proof of \eqref{e:mmscim}.

To prove (\ref{e:mmsc}), we proceed similarly, using however (\ref{e:Lambda1}) instead of (\ref{e:Lambda2}).
\end{proof}

\section{Proof of Lemma \ref{l:R3}}
\label{sec:pf}

We rewrite the quantity we are interested in in a more convenient form. Set
\[
 W_k = Z_k G_{kk}\,,
\]
then
\[
\frac{1}{N} \sum_k Z_k G_{kk}  = \frac1N \sum_{k} W_k  = \frac1N \sum_{k} (\ID - \E_k)W_k + \frac1N \sum_{k} \E_kW_k \,.
\]
By H\"older  inequality we get
\begin{equation}\label{separator}
 \E\left| \frac{1}{N} \sum_k Z_k G_{kk}\right|^{2q} \leq C^q \E\left| \frac1N \sum_{k} (\ID - \E_k)W_k\right|^{2q} + C^q \E\left| \E_1W_1\right|^{2q}\,.
\end{equation}
Next  we claim that under the assumptions of Theorem \ref{t:main}
\begin{equation}\label{mean}
\E\left|\E_1 W_1\right|^{2q}
\leq (Cq)^{4q} \EE_{4q}^{\frac12}
\end{equation}
and
\begin{equation}\label{555}
 \E\left| \frac1N \sum_k(\ID - \E_k)W_k\right|^{2q} \leq (Cq)^{cq^2}\EE_{4q}^{\frac12}\,.
\end{equation}
Lemma \ref{l:R3} then follows by inserting the bounds \eqref{mean} and \eqref{555} in \eqref{separator}. The rest of this section is devoted to the proof of \eqref{mean} and \eqref{555}.

\subsection{Proof of Eq. \eqref{mean}}

Recalling that
\[ Z_k=-(\ID-\E_k) \frac{1}{G_{kk}},\]
we find
\begin{equation}\label{glo}
 \E_kW_k = \E_k\frac{ G_{kk}}{(\E_k \frac{1}{G_{kk}})} \left((\ID-\E_k)\frac{1}{G_{kk}}\right)^2 = \frac{ \E_k G_{kk} Z_k^2}{(\E_k \frac{1}{G_{kk}})}\,.
\end{equation}
for any $k=1,...,N$. To bound the denominator on the r.h.s. of the last equation, we make use of the following lemma.
\begin{lemma}\label{lwidetildeG}
Assume (\ref{e:gd}) and fix $\tilde E>0$,  $\tildeeta> 0$. Set $z = E+i\eta$. There exist constants $c, C_1 ,M > 0$ such that
\begin{equation}\label{boom}
\E\left|\frac{1}{\E_1 \frac{1}{G_{11}}}\right|^q \leq C_1^q\,,
\end{equation}
for all $|E|\leq \tilde E$,  $\eta \leq \tildeeta$, $N \geq1$ such that $N \eta \geq M$, and for all $q\in \NA$ with
$q \leq c (N\eta)^{1/4}$.
\end{lemma}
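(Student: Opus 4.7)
The plan is to exploit independence of $\aa_1$ from $G^{(1)}$ to get an explicit formula for $\E_1[1/G_{11}]$, and then to show that this quantity is bounded below on a high-probability event. Starting from $1/G_{11} = h_{11} - z - \aa_1^* G^{(1)} \aa_1$ (cf.\ \eqref{Gjj}) and the fact that $\aa_1$ is independent of both $G^{(1)}$ and $h_{11}$, we obtain the explicit identity
\[ \E_1\!\left[\tfrac{1}{G_{11}}\right] = h_{11} - z - \tilde m,\qquad \tilde m := \tfrac{1}{N}\sum_{k\neq 1} G^{(1)}_{kk}. \]
Using the semicircle fixed-point relation $-z - m_{sc} = 1/m_{sc}$ from \eqref{eq:fixmsc}, this becomes
\[ \E_1[1/G_{11}] = \tfrac{1}{m_{sc}} + h_{11} + (m_{sc} - \tilde m). \]
By Proposition~\ref{p:msc} we have $|1/m_{sc}| > 1$, so if $|h_{11}| + |m_{sc} - \tilde m| \leq 1/2$ we immediately get $|\E_1[1/G_{11}]| \geq 1/2$ and hence $|1/\E_1[1/G_{11}]| \leq 2$.

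The next ingredient is a deterministic bound on $|\tilde m - m|$. Summing the identity \eqref{eq:G-Gj} over $k$ and using $\sum_k G_{1k}G_{k1} = (G^2)_{11}$ together with $|(G^2)_{11}| \leq \Im G_{11}/\eta$ (which is immediate from \eqref{eq:GG*}) yields the clean identity $\tilde m - m = -(G^2)_{11}/(NG_{11})$, and hence
\[ |\tilde m - m| \leq \tfrac{\Im G_{11}}{N\eta\, |G_{11}|} \leq \tfrac{1}{N\eta}. \]
In particular $|m_{sc} - \tilde m| \leq |\Lambda| + 1/(N\eta)$. For $N\eta \geq M$ with $M$ large, $1/(N\eta) \leq 1/8$, so on the event $\mathcal G := \{|h_{11}| \leq 1/4\} \cap \{|\Lambda| \leq 1/8\}$ we have $|1/\E_1[1/G_{11}]| \leq 2$.

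On the bad event $\mathcal G^c$ we use the deterministic bound $|\E_1[1/G_{11}]| \geq |\Im \E_1[1/G_{11}]| = \eta + \Im \tilde m \geq \eta$, giving $|1/\E_1[1/G_{11}]| \leq 1/\eta$. Thus
\[ \E\bigl|1/\E_1[1/G_{11}]\bigr|^q \leq 2^q + \eta^{-q}\, \P(\mathcal G^c).\]
Using the subgaussian assumption \eqref{e:gd}, $\P(|h_{11}| > 1/4) \leq C e^{-cN}$, while Lemma~\ref{l:no} combined with Markov gives, after optimizing the auxiliary moment $q_1 \leq (N\eta)^{1/4}$,
\[ \P(|\Lambda| > 1/8) \leq \exp\!\bigl(-c\,(N\eta)^{1/4}\log(N\eta)\bigr) \]
for $M$ large enough (so that the $-(1/12)\log(N\eta)$ contribution dominates the constant terms).

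The main obstacle is to balance the blow-up $\eta^{-q}$ against $\P(\mathcal G^c)$ uniformly for $q \leq c_0(N\eta)^{1/4}$ and $\eta$ as small as $M/N$. This is where the flexibility in the constant $c_0$ enters: choosing $c_0$ small compared to the constant in the exponent of the probability bound, and $M$ large, one forces
\[ \eta^{-q}\, \P(\mathcal G^c) \leq \exp\!\Bigl((N\eta)^{1/4}\bigl[c_0\log(1/\eta) - c\log(N\eta)\bigr]\Bigr) \leq C_1^q,\]
completing the bound. If this crude accounting via $1/\eta$ is too weak in the regime $N\eta \asymp M$, one refines by replacing $1/\eta$ with the tighter deterministic bound $1/(\eta + \Im \tilde m)$ and estimating negative moments of $\eta + \Im \tilde m$ directly (using that $\Im \tilde m$ is a normalized trace of $\Im G^{(1)}$, hence close to $\Im m_{sc}$ by the inductive hypothesis applied to the minor $H^{(1)}$). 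This refinement is the technical heart of the argument.
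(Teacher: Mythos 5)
There is a genuine gap, and it sits exactly where you flag it. Your accounting $\E\,|1/\E_1 G_{11}^{-1}|^q \leq 2^q + \eta^{-q}\,\P(\mathcal G^c)$ cannot work uniformly down to $N\eta \asymp M$: the best tail you can extract from Lemma \ref{l:no} via Markov with $q_1 \leq (N\eta)^{1/4}$ is $\P(|\Lambda|>1/8) \leq \exp\bigl(-c\,(N\eta)^{1/4}\log (N\eta)\bigr)$, which for $N\eta$ of order the fixed constant $M$ is merely a constant, while $\eta^{-q} \geq (N/M)^q$ diverges with $N$; so the product blows up in precisely the regime $M \leq N\eta \lesssim (\log N)^4$ that the lemma must cover. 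The proposed repair does not close this: outside the spectrum ($2 < |E| \leq \tilde E$) one has $\Im m_{sc}(E+i\eta) \asymp \eta$, so the ``tighter'' deterministic bound $1/(\eta+\Im \tilde m)$ is still of size $1/\eta$ there, and the true reason $|h_{11}-z-\tilde m|$ is bounded below is its \emph{real} part --- which puts you right back into the good/bad-event analysis whose bad-event probability is not small enough. There is also a secondary problem with the good event itself: for $2+\eta < |E| \leq \tilde E$, Lemma \ref{l:no} controls only $\E|\Im\Lambda|^q$, not $\E|\Lambda|^q$ (this is the instability discussed after Proposition \ref{p:lambda}), so the bound $\P(|\Lambda|>1/8)$ you invoke is unavailable in part of the stated range; one would have to argue through \eqref{e:Lambda3} (on the other branch $\tilde m \approx 1/m_{sc}$, whence $\E_1 G_{11}^{-1} \approx h_{11}+m_{sc}$ with $|m_{sc}| \geq (1+|z|)^{-1}$), which your write-up does not do.

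For comparison, the paper sidesteps both difficulties by never splitting off a bad event. Writing $\widetilde{G_{11}} = (\E_1 G_{11}^{-1})^{-1} = -\bigl(\tfrac{N-1}{N}m^{(1)}+z-h_{11}\bigr)^{-1}$ (your same explicit formula), it first proves the deterministic dilation bound $|\widetilde{G_{11}}(E+i\eta/s)| \leq s\,|\widetilde{G_{11}}(E+i\eta)|$ from $|\frac{d}{d\nu}\log\widetilde{G_{11}}| \leq 1/\nu$, and then uses the identity $\widetilde{G_{11}} = G_{11} + G_{11}\widetilde{G_{11}}Z_1$ together with Proposition \ref{p:Zkq} and Lemma \ref{l:boundG11} to get
\[
\E|\widetilde{G_{11}}|^q \leq C^q + \frac{(Cq)^q}{(N\eta)^{q/2}}\bigl(\E|\widetilde{G_{11}}|^{3q}\bigr)^{1/3},
\]
so a bootstrap in $\eta$ (as in Lemma \ref{l:boundG11}) propagates the bound from scale $\eta_0$ to $\eta_0/16$ and self-improves, with no $\eta^{-q}$ loss and with constants uniform for all $N\eta \geq M$ and all $|E| \leq \tilde E$. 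If you want to salvage your approach, you would need either an analogous continuity-in-$\eta$ bootstrap or genuinely strong (e.g.\ $e^{-cK^{1/4}}$-type) tail bounds on the smallness of $\Im \tilde m$, which at this stage of the paper are only available in the bulk (they are exactly Lemmas \ref{l:bulk2} and \ref{l:bulk1}, proved later via Propositions \ref{p:wwb} and \ref{p:gap}).
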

\begin{proof}
Define
 \[\widetilde{G_{11}} \equiv  \frac{1}{\E_1 \frac{1}{G_{11}}} = - \frac{1}{\frac{N-1}N m^{(1)} + z - h_{11}}\]
with $m^{(1)}=\frac1{N-1}\Tr G^{(1)}$.  We notice that
\[
\left|\frac{d}{d \nu} \log \widetilde{G_{11}}(E+i\nu)\right| = \left|\frac{i + \frac{N-1}N \frac{d}{d\nu} m^{(1)}}{\frac{N-1}{N} m^{(1)} + z - h_{11}}\right| \leq \frac{\nu+\frac{N-1}{N}\Im m^{(1)}}{\nu |\frac{N-1}{N} m^{(1)} + z - h_{11}|}\leq \frac{1}{\nu}\,.
\]
As argued in the proof of Lemma \ref{l:boundG11}, this implies that
\[ |\widetilde{G_{11}} (E+i\eta/s)| \leq s |\widetilde{G_{11}} (E+i\eta)| \,.\]

\medskip

Next we observe that
\[
\widetilde{G_{11}} = G_{11} + G_{11}\widetilde{G_{11}} Z_1 \,
\]
where we recall the definition $Z_1= (\ID-\E_1) \aa_1^* G^{(1)} \aa_1 = (\ID-\E_1) G_{11}^{-1}$. From Prop. \ref{p:Zkq} and Lemma \ref{l:boundG11} we get
\[
\E |\widetilde{G_{11}}|^q \leq  C^{q}+ \frac{(Cq)^q}{(N\eta)^{q/2}} (\E|\widetilde{G_{11}}|^{3q})^{1/3}\,,
\]
provided $3q \leq (N\eta)^{1/4}$. Here we used the fact that $|m_{sc}|< 1$ and $\E|\Im \La|^{3q}\leq 1$ by Lemma \ref{l:no}. At this point, we can use a bootstrap argument similar to the one used in Lemma \ref{l:boundG11} to conclude the proof.
\end{proof}

\medskip

Applying (\ref{boom}) to (\ref{glo}) and using also Lemma \ref{l:boundG11} and Prop. \ref{p:Zkq} we conclude that
\[\begin{aligned}
 \E\left|\E_1 W_1\right|^{2q}& \leq
(\E|G_{11}|^{8q})^{\frac14}
\left( \E\left|\frac{1}{\E_1 \frac{1}{G_{11}}}  \right|^{8q}\right)^{\frac14}
(\E| Z_1|^{8q})^{\frac12} \\
& \leq
(Cq)^{4q}  \left(  \frac{(\Im m_{sc})^{4q} + \E|\Im \Lambda|^{4q}}{(N\eta)^{4q}}+\frac1{(N\eta)^{8q}}\right)^\frac12\,,
\end{aligned}\]
This immediately implies (\ref{mean}).

\subsection{Preliminaries to the proof of Eq. \eqref{555}: the expansion algorithm}

In order to prove the bound (\ref{555}), we will follow Theorems 4.6 and 4.7 in \cite{ekyy-rxv12} to
expand the expectation on the l.h.s. in a sum over indices $k_1, \dots , k_{2q}$ of products of terms of the form
\begin{equation}\label{eq:example} (\ID-\E_{k_j}) W_{k_j} =  (\ID - \E_{k_j}) \left[ (\ID - \E_{k_j}) \frac{1}{G_{k_j k_j}} \right] G_{k_j k_j} \,. \end{equation}
In each one of these factors, we will expand further the resolvent entries (both in the numerator and in the denominator) using the relation (\ref{eq:G-Gj}). The gain here is that we either gain independence (the first term on the r.h.s. of (\ref{eq:G-Gj}) does not depend on the randomness in the $j$-th row and column of the original matrix) or, alternatively, we gain smallness (the second term on the r.h.s. of (\ref{eq:G-Gj}) has two more off-diagonal entries, which are typically of size $(N\eta)^{-1/2} \ll 1$). Recursively, we continue to expand the resulting terms using either (\ref{eq:G-Gj}) or a similar formula for the off-diagonal entries (see (\ref{dec1}) below), either until there are no more variables over which we can expand (in which case we reached maximal independence) or until there are sufficiently many off-diagonal terms (to show that (\ref{eq:G-Gj}) is of the order $(N\eta)^{-2q}$). This type of expansions for resolvent entries have first been applied in the analysis of Wigner matrices in \cite{EYY0,EYY1,EYY}; we follow here the more recent work \cite{ekyy-rxv12}.

\medskip

Next, we give a precise and detailed definition of the expansion algorithm. Afterwards, we apply it to
the resolvent entries $G_{k_j k_j}$ and $1/G_{k_j k_j}$ appearing in (\ref{eq:example}).

\subsubsection{A general description of the expansion algorithm}

Let $\T\subset\{1,...,N\}$ be a set of indices. We denote by $H^{(\T)}$ the $(N-|\T|) \times (N-|\T|)$ minor of the matrix $H$ obtained by deleting the rows and columns of $H$ corresponding to the indices in $\T$; the entries of the matrix $H^{(\T)}$ are $h_{ij}$ with $i,j\in\{1,...,N\}\backslash\T$. We denote by $G^{(\T)}$ the matrix
\[ G^{(\T)} = \left(H^{(\T)} -z \right)^{-1} \,.  \]
Rows and columns of $G^{(\T)}$ are also labelled with indices $i,j\in\{1,...,N\}\backslash\T$.

\medskip

Similarly to (\ref{eq:G-Gj}), we have (see, e.g.,  Eq. (4.6) in \cite{ekyy-rxv12})
\begin{equation}
\label{dec1}
  G_{ij}^{(\T)} = G_{ij}^{(\T k)}+ \frac{G_{ik}^{(\T)}G_{kj}^{(\T)}}{G_{kk}^{(\T)}}\qquad \forall i,j,k\notin \T \textrm{ and } i,j\neq k.
\end{equation}
Setting $i = j$, we find
\begin{equation}
\label{dec1.5}
  G_{ii}^{(\T)} = G_{ii}^{(\T k)}+ \frac{G_{ik}^{(\T)}G_{ki}^{(\T)}}{G_{kk}^{(\T)}}\qquad \forall i,k\notin \T \textrm{ and } i\neq k
\end{equation}
 and thus
\begin{equation}
\label{dec2}
 \frac{1}{G_{ii}^{(\T)}} = \frac{1}{G_{ii}^{(\T k)}} - \frac{ G_{ik}^{(\T)}G_{ki}^{(\T)}}{ G_{ii}^{(\T)}G_{ii}^{(\T k)}G_{kk}^{(\T)} } \qquad \forall i,k\notin \T \textrm{ and } i\neq k.
\end{equation}
For any $\kk = (k_1,...,k_{2q})$, with $k_s\in\{1,...,N\}$ for $s = 1, \dots , 2q$, let $\PP(\kk)$ be the partition of $\{1,...,2q\}$ induced by the coincidences in $\kk$. In other words, $\PP (\kk)$ is the partition induced by the equivalence relation on $\{1,...,2q\}$ defined by $r\sim s$ if and
only if $k_r= k_s$. We denote, moreover, by $\fP_{2q}$ the set of all partitions of $\{1,...,2q\}$.

\medskip

Following \cite{ekyy-rxv12}, we can write the expectation on the l.h.s. of Eq. \eqref{555} as
\begin{equation}\label{panama0}
 \E\left| \frac1N \sum_k(\ID - \E_k)W_k\right|^{2q}  = \frac{1}{N^{2q}} \sum_{P\in\fP_{2q}}\sum_{\kk}\II(\PP(\kk)=P) \, V(\kk)
\end{equation}
where each coordinate of $\kk = (k_1, \dots , k_{2q})$ is summed over the set $\{ 1, \dots , N \}$, and
\begin{equation}\label{panama}
V (\kk) = \E (\ID - \E_{k_1})W_{k_1}  \cdots (\ID - \E_{k_q})W_{k_q} \overline{(\ID - \E_{k_{q+1}})W_{k_{q+1}}} \cdots \overline{(\ID - \E_{k_{2q}})W_{k_{2q}}} \, .
 \end{equation}

\medskip

For any fixed  $\kk$ we say that $s \in \{1, \dots , 2q \}$ is a {\it lone} label if $k_s\neq k_r$ for any $r\in\{1,...,2q\}\backslash \{s\}$, i.e., if $s$ is the only element in its equivalence class in the partition $\PP(\kk)$. We denote by $L(\kk)$ the set of lone indices and by $\kk_L \subset \kk$ the set of coordinates of $\kk$ associated with lone labels.

\medskip

For a fixed $\kk$, we say that a resolvent  entry $G^{(\T)}_{ij}$ with $i,j \not \in \T$ is \emph{maximally expanded} if $\kk_L \subseteq \T\cup\{i,j\}$. Similarly, we say that a factor $G_{ii}^{(\T)}$ or $1/G_{ii}^{(\T)}$ with $i \not \in \T$ is maximally expanded, if $\kk_L \subset \T \cup \{ i \}$.

\medskip

Let $Q$ be a product of diagonal and/or off-diagonal resolvent entries of the form
\begin{equation}\label{B}
 Q =G_{ii}^{(\T)} \quad\textrm{or} \quad
   Q=\frac{1}{G^{(\T_1)}_{i_1i_1}} \cdots \frac{1}{G^{(\T_l)}_{i_li_l}} G^{(\T_{l+1})}_{i_{l+1}j_{l+1}} \cdots G^{(\T_{l+m})}_{i_{l+m}j_{l+m}}
\end{equation}
for arbitrary non-negative integers $l,m$ and for $i_r \neq j_r$ and $i_r, j_r \notin \T_r$, and $i\notin \T$. We assume the resolvent entries on the r.h.s. of Eq. \eqref{B} to be ordered in some way.

\medskip

For fixed $\kk$, we define, similarly to \cite{ekyy-rxv12}, the operation $w$ on monomials of the form of $Q$.
\begin{itemize}
 \item[-] The operation $w(Q)$ can be performed only if at least one resolvent entry in $Q$ (either in the numerator or in the denominator) is not maximally expanded.
 \item[-] If at least one of the resolvent entries of $Q$ is not maximally expanded, $w$ acts only on the first (according to the previously chosen order) resolvent entry of $Q$ which is not maximally expanded. We set then $w(Q) = w_0(Q)+w_1(Q)$ where $w_0$ and $w_1$ are defined by the following rules:
\begin{itemize}
 \item[-]  If the first not maximally expanded  resolvent entry is $G^{(\T)}_{ij}$, we use (\ref{dec1}) to define $w_0$ and $w_1$ in the following way:
\begin{equation}\label{wod}
G^{(\T)}_{ij} \xrightarrow{w_0}  G^{(\T u)}_{ij}
\qquad
\textrm{and}
\qquad
G^{(\T)}_{ij} \xrightarrow{w_1}  \frac{1}{G^{(\T)}_{uu}} G^{(\T )}_{iu} G^{(\T)}_{uj}
\end{equation}
 where $u$ is the smallest index such that $u\in\kk_L \backslash (\T\cup\{i,j\})$.
\item[-] If the first not maximally expanded  resolvent entry is $G^{(\T)}_{ii}$, we use (\ref{dec1.5}) to define $w_0$ and $w_1$ in the following way:
\begin{equation}\label{wd1}
 G^{(\T)}_{ii} \xrightarrow{w_0}  G^{(\T u)}_{ii}
\qquad
\textrm{and}
\qquad
 G^{(\T)}_{ii} \xrightarrow{w_1}  \frac{G_{iu}^{(\T)}G_{ui}^{(\T)}}{G_{uu}^{(\T)}}
\end{equation}
   where $u$ is the smallest index such that $u\in\kk_L\backslash (\T\cup \{i\})$.
\item[-] If the first not maximally expanded  resolvent entry is $1/G^{(\T)}_{ii}$, we use (\ref{dec2}) to define $w_0$ and $w_1$ in the following way:
\begin{equation}\label{wd2}
\frac1{ G^{(\T)}_{ii}} \xrightarrow{w_0}  \frac{1}{G^{(\T u)}_{ii}}
\qquad
\textrm{and}
\qquad
\frac1{ G^{(\T)}_{ii}} \xrightarrow{w_1}  -
\frac{1}{G^{(\T )}_{ii}}\frac{1}{G^{(\T)}_{uu}}\frac{1}{G^{(\T u)}_{ii}}
G^{(\T)}_{iu}G^{(\T )}_{ui}
\end{equation}
   where $u$ is the smallest index such that $u\in\kk_L\backslash (\T\cup \{i\})$.
\item[-] With this definition, we rewrite $Q$ as $w(Q) = w_0(Q) + w_1(Q)$ where $w_0(Q)$ and $w_1(Q)$ are two new monomials in diagonal and off-diagonal resolvent entries of the form
(\ref{B}).
\end{itemize}
\end{itemize}
For any starting monomial $Q$, we use the operation $w$ repeatedly; we decompose $Q$ in a sum of terms having the form (\ref{B}). To keep track of all the resulting terms, let $\sigma$ be a (ordered, finite) string of $0$ and $1$. For any fixed $\sigma$ we construct, analogously to \cite{ekyy-rxv12}, the monomial $Q_\sigma$ through the following recursion
\[
Q_0 =  w_0(Q)  \qquad \textrm{and} \qquad
 Q_1 =  w_1(Q)
\]
and
\[
Q_{{\sigma'}0} =  w_0(Q_{\sigma'})  \qquad \textrm{and} \qquad
 Q_{{\sigma'}1} =  w_1(Q_{\sigma'}).
\]
To write $Q$ as an appropriate sum of factors $Q_\sigma$ we apply recursively the operation $w$ as many times as allowed by the following stopping rule:
\begin{itemize}
 \item[$(\SR)$] Continue to apply $w$ to $Q_\sigma$ until either all the resolvent entries of $Q_\sigma$ are maximally expanded or the number of off-diagonal resolvent entries in $Q_\sigma$ is greater than  $2q$.
\end{itemize}
In this way we obtain
\[
 Q = \sum_{\sigma \in \LL_Q} Q_\sigma
\]
where $\LL_Q$ is the set of all strings $\sigma$ such that $Q_\sigma$ satisfies the stopping rule.

\subsubsection{Application to terms of the form (\ref{eq:example})}\label{sssectapply}
We use now the expansion algorithm to expand terms of the form (\ref{eq:example}), which arise from (\ref{555}). We consider the initial monomials $A^r = 1/ G_{k_r k_r}$ and $B^r = G_{k_r k_r}$ (in \cite{ekyy-rxv12} there were only terms of the type $A^r$; this is the reason why the operation (\ref{wd1}) was not used there). Applying the expansion algorithm presented in the previous section, we find
\begin{equation}\label{expanded}
 A^r := \frac{1}{G_{k_r k_r}} :=  \sum_{\sigma \in \LL_r} A^r_\sigma
\end{equation}
and
\begin{equation}\label{expandedt}
 B^r := G_{k_r k_r} :=  \sum_{\bsigma \in \MM_r} B^r_\bsigma
\end{equation}
where the sets $\LL_r$ and $\MM_r$ are chosen such that $A^r_\sigma$ and $B^r_\bsigma$ satisfy the stopping rule $(\SR)$.

\medskip

We denote by $\od(A^r_\sigma)$ (respectively $\od(B^r_\bsigma)$) the number of off-diagonal resolvent entries in $A^r_\sigma$ (respectively $B^r_\bsigma$). {F}rom the stopping rule $(\SR)$, it follows that for all $\sigma \in \LL_r$, we can have only two possibilities: either all resolvent entries in $A^r_\sigma$ are maximally expanded and $\od(A^r_{\sigma}) \leq 2q$, or alternatively, $2q+1\leq \od(A^r_\sigma)\leq 2q+2$. This follows because, by \eqref{wod} - \eqref{wd2}, the operation $w$ increases the number of off-diagonal resolvent entries by at most two. Analogously, we find for $B^r_\bsigma$ that, for all $\bsigma \in \MM_r$, either all resolvent entries are maximally expanded and $\od(B^r_{\bsigma})\leq 2q$, or, alternatively, $2q+1\leq \od(B^r_\bsigma)\leq 2q+2$.

\medskip

We denote by $\dd(A^r_\sigma)$ (respectively $\dd(B^r_\bsigma)$)  the number of  diagonal resolvent entries in $A^r_\sigma$ (respectively $B^r_\bsigma$), appearing either in the numerator or in the denominator. We note that the difference $\dd(A^r_\sigma) - \od(A^r_\sigma)$ is invariant with respect to the operation $w$ (because, when dealing with terms arising from the initial $A^r$, we only apply (\ref{wod}) or (\ref{wd2}), never (\ref{wd1})). This implies that
\begin{equation}\label{ddod1}
\dd(A^r_\sigma) = \od(A^r_\sigma) + 1 \,.
\end{equation}
On the other hand, the difference $\dd(B^r_\bsigma) - \od(B^r_\bsigma)$ is not always invariant. In fact, the operation $w_1$ applied to a resolvent entry in the numerator, according to (\ref{wd1}), decreases it by two. We remark, however, that this operation can take place at most once (because  $w_1$ in (\ref{wd1}) removes the diagonal entry from the numerator). As a consequence, we have either $\dd(B^r_\bsigma)=1$ and $\od(B^r_\bsigma) = 0$ (if the operation $w_1$ in (\ref{wd1}) never occurs) or  $\dd(B^r_\bsigma) - \od(B^r_\bsigma) = -1$ (if the operation $w_1$ in (\ref{wd1}) takes place once). We conclude that
\begin{equation}\label{ddod2}
 \dd(B^r_\bsigma) = \max\{ \od(B^r_\bsigma) - 1,1\}\,.
\end{equation}
It follows from (\ref{ddod1}) and (\ref{ddod2}) and from the previous bounds on the number of off-diagonal entries, that the number of diagonal entries in $A_\sigma^r$ and $B^r_\bsigma$ is always bounded by $(2q+3)$ (even by $(2q+1)$ in the $B^r_\bsigma$ terms). Hence the total number of resolvent entries in $A^r_\sigma$ and in $B^r_\bsigma$ (diagonal and off-diagonal, in the numerator and in the denominator) is bounded by $4q+5$.

\medskip

Following \cite{ekyy-rxv12}, we denote by $b(\sigma)$ and $b(\bsigma)$ the number of ones in the strings $\sigma$ and $\bsigma$, respectively. For any $r \in \{1, \dots , 2q \}$ and for any $\sigma\in\LL_r$ and $\bsigma\in\MM_r$, we have:
 \begin{enumerate}
\item \label{f:1} If $b(\sigma)\geq1$ then  $\od(A_\sigma^r)\geq b(\sigma)+1$. Analogously, if $b(\rho)\geq 1$ then $\od(B_\bsigma^r)\geq b(\bsigma)+1$. This follows from the observation that the first application of $w_1$ generates (according to (\ref{wd1}) and (\ref{wd2})) two off-diagonal entries while all further applications create (according to (\ref{wod}), (\ref{wd1}), (\ref{wd2})) at least one additional off-diagonal entry.
\item \label{f:2} $b(\sigma)\leq 2q$ and $b(\bsigma)\leq 2q$. The first application of $w_1$ generates two off-diagonal terms. Each subsequent application of $w_1$ generates at least one more off-diagonal term. Hence $2q$ applications of $w_1$ create at least $(2q+1)$ off diagonal terms, which are enough to satisfy the stopping rule $(\SR)$.
\item The number of zeros in the strings $\sigma$ and in $\bsigma$ is bounded by $2q(4q+5)$.
This follows because every application of $w_0$ produces one additional ``top'' index (indicating that the generated entry is independent of on additional row and column). The total number of ``top'' indices is bounded, however, by the number of resolvent entries (which, as discussed above, is at most $(4q+5)$) times $2q$ (the number of coordinates of $\kk$).
\item The length of $\sigma$ and $\bsigma$ is bounded by $4q (2q +3)$ (this follows combining the bounds for the number of zeros and the number of ones in the strings).
 \item Considering that the length of $\sigma$ and $\bsigma$ is at most $4q (2q+3)$ and that the number of ones is at most $2q$, we can bound the cardinality of $\LL_r$ and $\MM_r$ by
 \begin{equation}\label{eq:car-LM}  |\LL_r|, |\MM_r| \leq \sum_{k=0}^{2q} \binom{4q(2q+3)}{k} \leq (Cq)^{2q} \,. \end{equation}
\end{enumerate}


\subsection{Preliminaries to the proof of Eq. \eqref{555}: bounds on resolvent entries}
\label{sec:bd-lm}

We are going to use (\ref{expanded}) and (\ref{expandedt}) to expand the initial resolvent entries $A^r = 1/ G_{k_r k_r}$ and $B^r = G_{k_r k_r}$. To prove a bound of the form (\ref{555}), we need to estimate the resolvent entries appearing in the expanded terms $A_\sigma^r$ and $B_\bsigma^r$. More precisely, after applying H\"older's inequality to separate the many factors in the products $A_\sigma^r$, $B_\bsigma^r$, we will need control on high moments of quantities of the form
\begin{equation}\label{eq:4terms}  |G_{kk}^{(\T)}|, \quad \frac{1}{|G_{kk}^{(\T)}|}, \quad \left| (\ID-\E_k) \frac{1}{G_{kk}^{(\T)}} \right|, \quad |G_{kl}^{(\T)}| \,. \end{equation}
While the first two terms in (\ref{eq:4terms}) are typically of order one, the last two are expected to be small. To show (\ref{555}), it is important to extract the correct small factor in the bounds for these quantities.

\medskip

Bounds for moments of $|G_{kk}|$ have already been obtained in Lemma \ref{boundG11}. Similarly one can also estimate moments of $|G_{kk}^{(\T)}|$, for $\T \subset \{ 1,\dots , N \}$ with $|\T|  \leq 2q$ and $k \not \in \T$.

\begin{lemma}\label{l:Gkkinv}
 Assume (\ref{e:gd}), fix $\tilde E>0$,  $\tildeeta> 0$. Set $z = E+i\eta$. There exist constants $c, C,M_1, M_2 > 0$ such that
\[
\E \frac{1}{|G_{11}^{(\T)}|^{2q}} \leq C^q\,
\]
for all $|E|\leq \tilde E$, $\eta \leq \tildeeta$, $N > M_1$ such that $N\eta \geq M_2$, $q \in \NA$ with $q \leq c (N\eta)^{1/4}$ and $\T \subset \{ 1, \dots , N \}$ with $|\T|\leq 2q$.
\end{lemma}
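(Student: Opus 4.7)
The plan is to mimic the argument of Lemma~\ref{lwidetildeG}, but applied directly to the minor resolvent entry $1/G_{11}^{(\T)}$ rather than to the conditional expectation $1/\E_1(1/G_{11})$. Assume without loss of generality that $1 \notin \T$. The Schur-complement identity for the minor $H^{(\T)}$ gives
\[
\frac{1}{G_{11}^{(\T)}} \;=\; h_{11} - z - \aa_1^{(\T)\ast}G^{(\T 1)}\aa_1^{(\T)}
\;=\; h_{11} - z - \tfrac{N-|\T|-1}{N}\, m^{(\T 1)} - Z_1^{(\T)},
\]
where $m^{(\T 1)} = (N-|\T|-1)^{-1}\Tr G^{(\T 1)}$ and $Z_1^{(\T)} = (\ID - \E_1)\aa_1^{(\T)\ast}G^{(\T 1)}\aa_1^{(\T)}$. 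The goal is to control each of the four summands in the $L^{2q}$-norm by a universal constant; Minkowski then yields $\|1/G_{11}^{(\T)}\|_{2q} \leq C$, whence $\E|G_{11}^{(\T)}|^{-2q} = \E|1/G_{11}^{(\T)}|^{2q} \leq C^{2q}$.

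The contributions from $h_{11}$ and $z$ are immediate: (\ref{e:gd}) gives $\|h_{11}\|_{2q} \leq (Cq/N)^{1/2}$, while $|z| \leq \tilde E + \tildeeta$. For $m^{(\T 1)}$ I would split $m^{(\T 1)} = m + (m^{(\T 1)} - m)$. The random part is dominated via $\|m\|_{2q} \leq \|G_{11}\|_{2q} \leq C$ (Jensen applied to $m = N^{-1}\sum_k G_{kk}$, then Lemma~\ref{l:boundG11}, which is applicable because $2q \leq (N\eta)^{1/4}$ in the allowed range of $q$). The deterministic part is controlled by the standard minor identity $\Tr G(z) - \Tr G^{(j)}(z) = (G^2)_{jj}/G_{jj}$, which combined with $|(G^2)_{jj}| \leq \Im G_{jj}/\eta$ yields $|\Tr G - \Tr G^{(j)}| \leq 1/\eta$. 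Iterating this over the $|\T|+1$ indices to be removed produces
\[
|m^{(\T 1)} - m| \;\leq\; \frac{C(|\T|+1)}{N\eta} \;\leq\; \frac{Cq}{N\eta} \;=\; o(1),
\]
so that $\|m^{(\T 1)}\|_{2q} \leq C$.

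For the fluctuation $Z_1^{(\T)}$, I would condition on $H^{(\T 1)}$ and apply the Hanson-Wright estimate exactly as in Proposition~\ref{p:Zkq}, giving
\[
\E|Z_1^{(\T)}|^{2q} \;\leq\; (Cq)^{2q}\, \E\!\left(\frac{\Tr|G^{(\T 1)}|^2}{N^2}\right)^{\!q}
\;\leq\; (Cq)^{2q}\, \E\!\left(\frac{\Im m^{(\T 1)}}{N\eta}\right)^{\!q},
\]
where I used $|G^{(\T 1)}|^2 = \eta^{-1}\Im G^{(\T 1)}$. By the bound on $m^{(\T 1)}$ from the previous paragraph, the right-hand side is at most $(Cq)^{2q}/(N\eta)^q$, hence $\|Z_1^{(\T)}\|_{2q} \leq Cq/\sqrt{N\eta}$, which is again $O(1)$ (in fact $o(1)$) for $q \leq c(N\eta)^{1/4}$ with $c$ small. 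Combining the four bounds via Minkowski completes the proof.

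The only point that needs care is uniformity in $|\T| \leq 2q$. The interlacing-type error in $m^{(\T 1)} - m$ is the only place where $|\T|$ enters, and its linear growth is precisely absorbed by the restriction $q \leq c(N\eta)^{1/4}$; this is the tight constraint that drives the hypothesis on $q$ in the statement. No circularity arises with Lemma~\ref{l:boundG11}, which is proved for the full resolvent $G$ and makes no use of bounds on minor entries of the type established here.
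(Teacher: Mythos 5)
Your argument is correct and follows essentially the same route as the paper: the Schur-complement identity $1/G_{11}^{(\T)}=h_{11}-z-\aa_1^{(\T)*}G^{(\T 1)}\aa_1^{(\T)}$, Hanson--Wright for the fluctuation part, boundedness of $N^{-1}\Tr G$ via Lemma \ref{l:boundG11}, and interlacing-type bounds to absorb the minor $\T$. The only cosmetic differences are that the paper first reduces to $\T=\emptyset$ via the remark preceding the lemma and bounds the fluctuation through Proposition \ref{p:Zkq} together with Lemma \ref{l:no}, whereas you handle general $\T$ directly and control $\Im m^{(\T 1)}$ by its moments; both are sound.
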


{\it Remark.}  For $\T \subset \{ 1, \dots , N \}$ with $|\T| \leq 2q$, let $\Lambda^{(\T)}  = m^{(\T)} - m_{sc}$, where $m^{(\T)}$ is the Stieltjes transform of $H^{(\T)}$. By the interlacing properties of the eigenvalues of $H$ and $H^{(\T)}$ it is easy to check that  $|m - m^{(\T)}| \leq C |\T| / (N\eta)$, which implies
\begin{equation}\label{T1}
 \E |\La^{(\T)}|^{q} \leq C^q\E |\La|^{q} + \frac{(Cq)^q}{(N\eta)^{q}}   \qquad\textrm{and} \qquad  \E |\Im \La^{(\T)}|^{q} \leq C^q\E |\Im \La|^{q} + \frac{(Cq)^q}{(N\eta)^{q}}  \, .
\end{equation}
We will use this observation to reduce our analysis to the case $\T = \emptyset$.

\begin{proof}
By the above remark, we can take $\T=\emptyset$.  We have
\[
\E \frac{1}{|G_{11}|^{2q}} = \E | h_{11} - z - \aa_1^*  G^{(1)}\aa_1 |^{2q}
\leq \frac{(Cq)^{q}}{N^{q}}+ C^q + C^q \, \E \, |\aa_1^* G^{(1)}\aa_1|^{2q}
\leq  C^q + C^q \, \E \, |\aa_1^* G^{(1)}\aa_1|^{2q}\,.
\]

\medskip

We write
\begin{align}
\E |\aa_1^* G^{(1)}\aa_1|^{2q} &= \E |\aa_1^* G^{(1)}\aa_1- \E_1\aa_1^* G^{(1)}\aa_1 + \E_1\aa_1^* G^{(1)}\aa_1|^{2q}
\nonumber
\\
& \leq C^q( \E |\aa_1^*  G^{(1)}\aa_1- \E_1\aa_1^* G^{(1)}\aa_1|^{2q} + \E| \E_1\aa_1^*  G^{(1)}\aa_1|^{2q}) \,.
\label{e:1}
\end{align}

\medskip

For the first term we use the Hanson-Wright large deviation estimate (see Prop. \ref{p:Zkq}). We find
\[
 \E |\aa_1^*  G^{(1)}\aa_1- \E_1\aa_1^* G^{(1)}\aa_1|^{2q} \leq
 (Cq)^{2q} \left(\frac{(\Im m_{sc})^{q} + \E|\Im\Lambda|^{q}}{(N\eta)^{q}}+\frac1{(N\eta)^{2q}} \right)
 \leq 1
\]
since, by Lemma \ref{l:no}, $ \E|\Im\Lambda|^{q} <1$ for all $q \leq (N\eta)^{1/4}$, and $N$ large enough.

\medskip

As for the second term on the r.h.s. of \eqref{e:1}, we find
\[
\E |\E_1\aa_1^* G^{(1)}\aa_1|^{2q}
= \E \left|\frac{1}{N} \Tr G^{(1)}\right|^{2q}
\leq  C^q\left( \E \left|\frac{1}{N} \Tr G\right|^{2q} +\frac{1}{(N\eta)^{2q}}\right)\leq C^q
\]
from $ \E \left|\frac{1}{N} \Tr G\right|^{2q}  \leq  \E |G_{11}|^{2q}\leq C^q$ (by Lemma \ref{l:boundG11}).
\end{proof}

\medskip

To estimate the third term in (\ref{eq:4terms}), we recall that (similarly to (\ref{eq:Zj}))
\[ (\ID - \E_k)\frac{1}{G_{kk}^{(\T)}}= - Z_k^{(\T)} , \]
where we set $Z_k^{(\T)} = (\ID -\E_k){\aa_k^{(\T)*}} G^{(\T)} \aa_k^{(\T)}$, and where $\aa^{(\T )}_l$ is the $l$-th  column of the matrix $H$ without the elements corresponding to the labels in $\T$. Applying Proposition \ref{p:Zkq} to $Z_k^{(\T)}$ and using the remark after  Lemma \ref{l:Gkkinv}, we find that (for $q \leq N/2$)
\[
\begin{split}
\E \, \left| (\ID - \E_k)\frac{1}{G_{kk}^{(\T)}} \right|^{2q} &\leq (Cq)^{2q} \left(\frac{(\Im m_{sc})^{q} + \E|\Im\Lambda^{(\T)}|^{q}}{((N-|\T|)\eta)^q}  +\frac{1}{((N-|\T|)\eta)^{2q}}\right)
\leq  (Cq)^{3q} \EE_q \,.
\end{split}
\]

\medskip

Finally, we show how to estimate the last term in (\ref{eq:4terms}). To this end, we use the formula (see, e.g., Eq. (2.8) in \cite{E})
\begin{equation}\label{522} G^{(\T)}_{kl} = G^{(\T)}_{ll} G^{(\T l)}_{kk} (h_{kl} - \aa^{(\T l)*}_k G^{(\T kl)} \aa^{(\T k)}_l ) = G^{(\T)}_{ll} G^{(\T l)}_{kk} K_{kl}^{(\T)}  \end{equation}
valid for any $k\neq l$, $k,l\in \{1,...,N\}\backslash \T$, with
\begin{equation}\label{eq:Kkl} K_{kl}^{(\T)} = (h_{kl} - \aa^{(\T l)*}_k G^{(\T kl)} \aa^{(\T k)}_l ) \, . \end{equation}
High moments of the diagonal entries $G^{(\T)}_{ll}$ and $G^{(\T l)}_{kk}$ can be bounded with Lemma \ref{boundG11}. High moments of $K_{kl}^{(\T)}$, on the other hand, are controlled (and shown to be small) in the next lemma.
\begin{lemma}\label{l:G12}
Assume (\ref{e:gd}), set $z = E+i\eta$ and let $K_{kl}^{(\T)}$ be defined as in (\ref{eq:Kkl}) (with $\T = \emptyset$). Then there exist constants $c,c_0, C, M_1, M_2> 0$ such that
\begin{equation}\label{bK}
\E |K_{kl}^{(\T)} |^{2q}
\leq (Cq)^{cq} \EE_q \, ,
\end{equation}
for all $E\in\RE$, $N > M_1$, $\eta >0$ with $N \eta > M_2$, $k \not = l \in\{1,....,N\}$, $q \in \NA$ with $q \leq c_0 N$.
\end{lemma}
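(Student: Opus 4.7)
The quantity $K_{kl}^{(\T)} = h_{kl} - \aa_k^{(\T l)*} G^{(\T kl)} \aa_l^{(\T k)}$ splits naturally into a single subgaussian entry $h_{kl}$ (with variance $1/N$) and a bilinear form in two random vectors $\aa_k^{(\T l)}$, $\aa_l^{(\T k)}$ which are independent of each other and of the resolvent $G^{(\T kl)}$ (since the minor has neither row/column $k$ nor $l$, and the two vectors pick out disjoint matrix entries). The first piece is trivial: the subgaussian assumption \eqref{e:gd} gives $\E|h_{kl}|^{2q} \leq (Cq)^q/N^q \leq (Cq)^q \EE_q$, using that $\EE_q \geq N^{-q}$.

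The main work is to bound the bilinear form $X := \aa_k^{(\T l)*} G^{(\T kl)} \aa_l^{(\T k)}$. My plan is to integrate out one vector at a time. Conditioning on $G^{(\T kl)}$ and on $\aa_k^{(\T l)}$, the quantity $X$ becomes a linear form $\sum_j w_j(\aa_l^{(\T k)})_j$ in the independent mean-zero subgaussian entries of $\aa_l^{(\T k)}$; standard moment bounds for sums of subgaussians give
\[
\E_{\aa_l^{(\T k)}} |X|^{2q} \leq (Cq)^q \sigma^{2q}, \qquad \sigma^2 = \frac{1}{N}\,\aa_k^{(\T l)*}\,|G^{(\T kl)}|^2\,\aa_k^{(\T l)},
\]
where the $1/N$ reflects the entrywise variance of $\aa_l^{(\T k)}$.

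Next I control $\E\sigma^{2q}$. Splitting $\aa_k^{(\T l)*}|G^{(\T kl)}|^2\aa_k^{(\T l)}$ into its $\E_k$-expectation and its fluctuation, the mean is $\frac{1}{N}\Tr |G^{(\T kl)}|^2$ and the fluctuation is handled by Hanson-Wright (Proposition \ref{p:HW}), with the loss absorbed by $\EE_q$ since the Hilbert-Schmidt norm is what appears. Using the identity $|G|^2 = \Im G / \eta$ from \eqref{eq:GG*} to rewrite $\Tr |G^{(\T kl)}|^2 = \Im \Tr G^{(\T kl)}/\eta$, the problem reduces to bounding moments of $\Im m^{(\T kl)}$. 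Writing $\Im m^{(\T kl)} \leq \Im m_{sc} + |\Im \Lambda^{(\T kl)}|$, invoking the minor comparison (the remark after Lemma \ref{l:Gkkinv}) to replace $\Lambda^{(\T kl)}$ by $\Lambda$ up to a $(Cq)^q/(N\eta)^q$ error, and using the non-optimal Lemma \ref{l:no} to guarantee these moments are finite in the asserted range of $q$, I obtain
\[
\E \sigma^{2q} \leq (Cq)^{cq}\left(\frac{(\Im m_{sc})^q + \E|\Im\Lambda|^q}{(N\eta)^q} + \frac{1}{(N\eta)^{2q}}\right) \leq (Cq)^{cq}\,\EE_q.
\]
Combining with the $h_{kl}$ bound yields \eqref{bK}.

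The conceptual point (and the reason this estimate is so much stronger than that for the diagonal variable $Z_j$ in Proposition \ref{p:Zkq}) is that the bilinear form $X$ has mean zero \emph{automatically}, without an $(\ID-\E_k)$ projection, because the two independent vectors each contribute a mean-zero factor. This saves the square-root and produces $\EE_q$ rather than $\EE_q^{1/2}$, which is exactly the off-diagonal smallness required to drive the expansion of Section \ref{sec:pf}. The only real obstacle is bookkeeping: one must be careful that the minor $G^{(\T kl)}$ is truly independent of both vectors (which is the whole reason for removing the $k$-th and $l$-th row/column in the definition \eqref{522}) and that the reduction from $\Lambda^{(\T kl)}$ back to $\Lambda$ is valid in the full range of $q$, both of which are standard.
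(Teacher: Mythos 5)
Your proposal is correct and follows essentially the same route as the paper's proof: split off $h_{kl}$, integrate out one of the two (mutually independent) columns conditionally to reduce to a quadratic form in the other, apply Hanson--Wright (Proposition \ref{p:HW}) together with $|G|^2=\Im G/\eta$ and $\Tr|G|^4\leq\eta^{-2}\Tr|G|^2$, and finish with the interlacing comparison of $\La^{(kl)}$ with $\La$ and Lemma \ref{l:no}; the only difference is the order in which the two columns are integrated out, which is immaterial. One caveat: your closing ``conceptual point'' is inaccurate, since Proposition \ref{p:Zkq} gives a bound of the same order $(Cq)^{cq}\EE_q$ for the diagonal variable $Z_j$ (both $Z_j$ and $K^{(\T)}_{kl}$ are of typical size $(N\eta)^{-1/2}$); the factor $\EE_{4q}^{1/2}$ in Lemma \ref{l:R3} comes from the averaged quantity $N^{-1}\sum_k Z_kG_{kk}$ via the expansion and the accumulation of many off-diagonal factors, not from $K_{kl}$ being smaller than $Z_j$.
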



\begin{proof} By the remark following Lemma \ref{l:Gkkinv} we can restrict our attention to the case $\T=\emptyset$.

By the definition of $K_{kl}$ we get
\[
\E |K_{kl}|^{2q} \leq  C^{q}\left(\E |h_{kl}|^{2q} + \E| \aa_k^{(l)*} G^{(kl)} \aa_l^{(k)}|^{2q}\right)
\leq C^{q}  \left(\frac{(Cq)^{q}}{N^{q}} + \E| \aa_k^{(l)*} G^{(kl)} \aa_l^{(k)}|^{2q}\right)\,.
\label{a1}
\]
Let $\g^{(k)} = G^{(kl)}\aa_l^{(k)}$. Then
\[
 \E|\aa_k^{(l)*} G^{(kl)} \aa_l^{(k)}|^{2q} = \E| (\aa_k^{(l)},\g^{(k)})|^{2q} \leq   C^{q}\left( \E\left|| (\aa_k^{(l)},\g^{(k)})|^{2} - \frac{\|\g^{(k)}\|^2}{N} \right|^q
+ \E\frac{\|\g^{(k)}\|^{2q}}{N^q}\right).
\]
Noticing that $ \E_k| (\aa_k^{(l)},\g^{(k)})|^{2}= N^{-1} \, \|\g^{(k)}\|^2$, the Hanson-Wright large deviation estimate (see Prop.~\ref{p:HW}) implies that
\begin{align}
  \E\left|| (\aa_k^{(l)},\g^{(k)})|^{2} - \frac{\|\g^{(k)}\|^2}{N} \right|^q
\leq
(Cq)^q \, \E \, \frac{\|\g^{(k)}\|^{2q}}{N^q} \,.
\end{align}
Hence
\begin{align}
\label{a2}
  \E \, |\aa_k^{(l)*} G^{(kl)} \aa_l^{(k)}|^{2q} \leq & (Cq)^q \, \E\frac{\| G^{(kl)}\aa_l^{(k)}\|^{2q}}{N^q} .
\end{align}
Noticing that $\E_l \| G^{(kl)}\aa_l^{(k)}\|^{2} = N^{-1} \, \Tr|G^{(kl)}|^2$, applying again Prop. \ref{p:HW}, we conclude that
\begin{align}
 \E \, \frac{\| G^{(kl)}\aa_l^{(k)}\|^{2q}}{N^q} \leq & C^q \left( (Cq)^q \, \E \left(\frac{\Tr|G^{(kl)}|^4}{N^4}\right)^{q/2}+ \E\left(\frac{\Tr|G^{(kl)}|^2}{N^2}\right)^{q}\right)
 \nonumber
 \\
 \leq &  (Cq)^q \left(\frac{(\Im m_{sc})^q +\E|\Im \La^{(kl)}|^q}{(N\eta)^q} + \frac{(\Im m_{sc})^\frac{q}2 +\E|\Im\La^{(kl)}|^\frac{q}2}{(N\eta)^{\frac32 q}}  \right)\,,
\label{a3}
\end{align}
where we used the bound $\Tr|G^{(kl)}|^4 \leq \eta^{-2} \Tr|G^{(kl)}|^2$ and the estimate
\[ \frac{1}{N^2} \Tr \, |G^{(kl)}|^2 =\frac{1}{N\eta} \Im m^{(kl)} \leq \frac{1}{(N\eta)^2} + \frac{\text{Im } m_{sc} + |\Im\Lambda^{(k\ell)}|}{N\eta} \]
proven as in (\ref{eq:trG2}).
Inserting \eqref{a3} into \eqref{a2} and then into \eqref{a1}, we find
\begin{align}\label{eq:Kkl-2}
\E |K_{kl}|^{2q} \leq (Cq)^{2q} \left( \frac{(\Im m_{sc})^q +\E|\Im\La^{(kl)}|^q}{(N\eta)^q} + \frac{(\Im m_{sc})^\frac{q}2 +\E|\Im\La^{(kl)}|^\frac{q}2}{(N\eta)^{\frac32 q}}+ \frac{1}{N^{q}} \right)\,.
\end{align}
{F}rom the interlacing properties of the eigenvalues of $H$ and of its minors, it is easy to check that $|m-m^{(kl)}| \leq C/ (N\eta)$. This implies that
\[
 \E |\Im\La^{(kl)}|^{q} \leq  C^q\left( \E |\Im\La|^{q} + \E |\Im m-\Im m^{(kl)}|^{q} \right) \leq
C^q\left( \E |\Im\La|^{q} + \frac{1}{(N\eta)^{q}}\right) \,,
\]
and hence, from (\ref{eq:Kkl-2}), that
\[
\E |K_{kl}|^{2q} \leq (Cq)^{2q} \left(\frac{1}{(N\eta)^{2q}} + \frac{(\Im m_{sc})^q +\E|\Im\La|^q}{(N\eta)^q} + \frac{(\Im m_{sc})^\frac{q}2 +\E|\Im \La|^\frac{q}2}{(N\eta)^{\frac32 q}}+ \frac{1}{N^{q}} \right)\,.
\]
To conclude the proof, we observe that
\[
\frac{(\Im m_{sc})^\frac{q}2 +\E|\Im\La|^\frac{q}2}{(N\eta)^{\frac32 q}} \leq \frac{1}{(N\eta)^q} \sqrt{ \frac{(\text{Im } m_{sc})^q + \E \, |\Im\La|^q}{(N\eta)^q}} \leq \frac{1}{(N\eta)^{2q}} + \frac{(\Im m_{sc})^q +\E|\Im\La|^q}{(N\eta)^q}\,. 
\]
\end{proof}

\subsection{Proof of Eq. \eqref{555}}

For a fixed $\kk$ we use \eqref{expanded} and \eqref{expandedt}
to expand $V(\kk)$ in Eq. \eqref{panama} as
\[
V(\kk)= \sum_{\substack{\sigma_1 \in \LL_1\\ \bsigma_1 \in \MM_1}} \dots \sum_{\substack{\sigma_{2q} \in \LL_{2q} \\ \bsigma_{2q} \in \MM_{2q}}}
\E ( (\ID-\E_{k_1}) ( (\ID-\E_{k_1}) A^1_{\sigma_1})B^1_{\bsigma_1}) \cdots \overline{((\ID-\E_{k_{2q}})( (\ID-\E_{k_{2q}}) A^{2q}_{\sigma_{2q}}) B^{2q}_{\bsigma_{2q}})}\,.
\]
We claim that
\begin{equation}\label{claim1}
\left|\E ((\ID-\E_{k_1}) ( (\ID-\E_{k_1})A^1_{\sigma_1}) B^1_{\bsigma_1}) \cdots \overline{( (\ID-\E_{k_{2q}}) ((\ID-\E_{k_{2q}}) A^{2q}_{\sigma_{2q}}) B^{2q}_{\bsigma_{2q}}})\right| \leq
 (Cq)^{cq^2} \EE_{2q + |L(\kk)|}^\frac12
\end{equation}
for any $\kk$,  $\sigma_1 \in \LL_1,\bsigma_1 \in \MM_1,  \dots ,\sigma_{2q} \in \LL_{2q},\bsigma_{2q} \in \MM_{2q}$ and for all $q \leq c_0  (N\eta)^{1/8}$ and $(N\eta)$ and $N$ large enough. Here $|L(\kk)|$ is the number of lone labels associated with the vector $\kk$. Using the bounds (\ref{eq:car-LM}) on the cardinality of the sets $\LL_j, \MM_j$, (\ref{claim1}) implies that
\[
|V(\kk)|\leq  (Cq)^{cq^2} \EE_{2q + |L(\kk)|}^\frac12 \,.
\]
{F}rom Eq. \eqref{panama0}, we get
\begin{equation}
\label{cat}\begin{split}
 \E\left| \frac1N \sum_k(\ID - \E_k)W_k\right|^{2q}  \leq &  \frac{(Cq)^{cq^2}}{N^{2q}} \sum_{P\in\fP_{2q}}\sum_{\kk}\II(\PP(\kk)=P) \, \EE_{2q + |L(\kk)|}^\frac12  \\
\leq & (Cq)^{cq^2} \sum_{P\in\fP_{2q}}  \frac{1}{N^{2q-|P|}} \, \EE_{2q + |L(P)|}^\frac12 \,.
\end{split} \end{equation}
Here we wrote $|L(P)|$ to indicate the number of lone labels associated with any vector $\kk$ such that $\PP (\kk) = P$ (it is clear that the number of lone labels only depends on the partition $P$), and we used the fact that
\[ \left| \{ \kk : \PP (\kk) = P \} \right| \leq N^{|P|} \]
where $|P|$ denotes the size of the partition $P$ (the number of sets in which $\{ 1, \dots , 2q \}$ is divided). Since, by definition, $\EE_q \geq N^{-q}$, we have
\begin{equation}\label{eq:EEbd} \frac{1}{N^{2q-|P|}} \, \EE_{2q + |L|}^\frac12 \leq \EE_{2q-|P|} \, \EE_{2q + |L|}^\frac12 \leq C \EE_{4q}^{\frac{2q-|P|}{4q}} \, \EE_{4q}^{\frac12\left(\frac{2q+|L|}{4q}\right)} =C \EE_{4q}^{\frac12+ \frac{q+\frac{|L|}{2}-|P|}{4q}} \leq C \EE_{4q}^\frac12 \end{equation}
where we used (\ref{eq:EEpq}) and the bound
\[ |P|\leq |L| +\frac{2q-|L|}{2} \quad \Rightarrow \quad  q+\frac{|L|}{2}-|P| \geq 0\,.\]
Inserting (\ref{eq:EEbd}) into (\ref{cat}), and using the fact that the number of partitions of $2q$ elements is bounded by $(Cq)^q$, we obtain that
\[ \E\left| \frac1N \sum_k(\ID - \E_k)W_k\right|^{2q}  \leq (Cq)^{cq^2} \EE_{4q}^{\frac12} \]
which concludes the proof of (\ref{555}).

\medskip

Next, we prove Eq. \eqref{claim1}, for fixed $\kk$, $\sigma_1, \bsigma_1,  \dots , \sigma_{2q},\bsigma_{2q}$. We distinguish two cases.

\medskip

{\bf Case 1}: For any  $r \in\{1,...,2q\}$ all resolvent entries in  $A^r_{\sigma_r}$ and $B^r_{\rho_r}$ are maximally expanded.

\medskip

The l.h.s. of \eqref{claim1} vanishes if there exists a lone label $r$ such that, for all $s\in\{1,...,2q\}\backslash\{r\}$, the monomials $A^s_{\sigma_s}$ and $B^s_{\rho_s}$ do not depend on
$r$, i.e. if there exists a lone label $r$ such that $k_r$ appears as an upper index in all resolvent entries in $A^s_{\sigma_s}$, $B^s_{\rho_s}$, for all $s\in\{1,...,2q\}\backslash\{r\}$.

\medskip

For $r\in L(\kk)$, we say, following \cite{ekyy-rxv12}, that $s\in\{1,...,2q\}\backslash\{r\}$ is a \emph{partner label} of $r$ if $k_r$ appears as lower index in one of the resolvent entries of $A^s_{\sigma_s}$ or of $B^s_{\rho_s}$. Since, by the definition of the expansion algorithm, the same index cannot appear both as a lower and as an upper index in the same resolvent entry, we conclude that non-zero terms on the l.h.s. of  Eq. \eqref{claim1} are characterized by the fact that for every lone label $r\in L(\kk)$ there exists at least one partner label $s$.

\medskip

We denote by $\ell(s)$ the number of lone labels having $s$ as a partner. In order for the l.h.s. of \eqref{claim1} not to vanish, we must have (as remarked in \cite{ekyy-rxv12})
\[ \sum_{s=1}^{2q}\ell(s)\geq |L(\kk)|. \]
This follows because all resolvent entries are maximally expanded and therefore, for every lone label $r$, and every $s \in \{1, 2 , \dots , 2q \} \backslash \{ r \}$, $k_r$ either appears as lower label in at least one resolvent entry in $A^s_{\sigma_s}$ or $B^s_{\bsigma_s}$ (in which case, $s$ has $r$ as a partner), or every resolvent entry in $A^s_{\sigma_s}$ or $B^s_{\bsigma_s}$ has $r$ as an upper index. This is the point where the assumption that all entries are maximally expanded is used.

\medskip

The combined number of lower indices different from $k_s$ in $A^s_{\sigma_s}$ and $B^s_{\rho_s}$ is at least $\ell (s)$. Since the operation $w_1$ produces only one additional lower index, while
$w_0$ does not produce new lower labels, we find that
\[
 b(\sigma_s)+b(\rho_s) \geq \ell(s)\,.
\](Recall that $b(\sigma)$ denotes the number of ones occurring in the string $\sigma$). We notice, moreover, that for any $\T\subseteq\{1,...,2q\}$ the operation $w_1$, applied on a diagonal entry like $G_{kk}^{(\T)}$ or $1/G_{kk}^{(\T)}$, adds a new lower index, associated with a lone label from $L(\kk)$. Since any subsequent operation will not remove the new lower index, $\ell (s) = 0$ if and only if the strings $\sigma_s$ and $\bsigma_s$ contain only zeros. Since we assumed all entries to be maximally expanded, we have that
\begin{equation}\label{543}
 \ell(s)=0 \quad \textrm{if and only if} \quad A^s_{\sigma_s} = \frac1{G_{ss}^{(L(\kk)\backslash\{s\})}}\;\textrm{ and }\;B^s_{\rho_s} = G_{ss}^{(L(\kk)\backslash\{s\})}\,.
\end{equation}

\medskip

Let us assume that in the term on the l.h.s. of (\ref{claim1}) there are $t$ labels $s_1, \dots , s_t$ with $\ell (s_j) = 0$. Then there will be $u=2q-t$ labels $r_1, \dots , r_u$ with $\ell (r_j) \geq1$. Without loss of generality we can assume the first $t$ labels in (\ref{claim1}) to be $s_1, \dots , s_t$ and the last $u$ labels to be $r_1 ,\dots , r_u$ (the fact that some terms are complex conjugated is irrelevant for our argument). For $s = 1, \dots , t$ (so that $\ell (s) = 0$), we observe that
\begin{equation}\label{boundm}
\begin{split}
| (\ID-\E_{s}) ( (\ID-\E_{s}) &A^s_{\sigma_s})B^s_{\bsigma_s}|  \\ &\leq |((\ID-\E_s) A^s_{\sigma_s}) B^s_{\bsigma_s}| + \E_s |(\ID-\E_s) A^s_{\sigma_s}) B^s_{\bsigma_s}| \\ & \leq  |(\ID-\E_s) A^s_{\sigma_s}| |B^s_{\bsigma_s}| + ( \E_s |(\ID - \E_s) A^s_{\sigma_s}|^2 )^{1/2} (\E_s |B^s_{\bsigma_s}|^2)^{1/2} \\ &\leq C m_s  \widetilde m_s
\end{split}
\end{equation}
where we defined the random variables
\begin{equation} \label{m}
\begin{split}
 m_s& =|(\ID-\E_{s})A^s_{\sigma_s}|+(\E_{s} |(\ID-\E_{s})A^s_{\sigma_s}|^2)^\frac12 \, , \\
 \widetilde m_s &=|B^s_{\bsigma_s}|+(\E_{s} |B^s_{\bsigma_s}|^2)^\frac12 \,
\end{split}
\end{equation}
for all $s=1, \dots , t$. We will estimate high moments of $m_s, \widetilde{m}_s$ using the bounds established in Section~\ref{sec:bd-lm}. Notice that, while $\widetilde{m}_s$ is only bounded, the quantity $m_s$ is expected to be small. It is important that the estimates that we will use reflect this smallness.

\medskip

To bound the terms on the l.h.s. of (\ref{claim1}) with $s=t+1, \dots , 2q$ we proceed as follows. Since here $\ell (s) \geq 1$, either $A^s_{\sigma_s}$ or $B^s_{\rho_s}$ must contain at least one off-diagonal entry. Using (\ref{522}), we can express the off-diagonal entries in terms of the variable $K_{kl}^{(\T)}$, defined in Eq. \eqref{eq:Kkl}. We collect all the factors $K_{kl}^{(\T)}$ appearing in $A^s_{\sigma_s}$ (one for every off-diagonal resolvent entry) in a single monomial $O_s$ and all the diagonal entries (appearing in $A^s_{\sigma_s}$, either in the numerator or in the denominator) in a monomial $P_s$; accordingly $A^s_{\sigma_s} = O_s P_s$. We proceed similarly for $B^s_{\rho_s}$ and we obtain $B^s_{\rho_s}=\widetilde O_s \widetilde P_s$. We estimate
\begin{equation}\label{boundM}
\begin{split} | (\ID-\E_{s})(&(\ID-\E_{s}) A^s_{\sigma_s})B^s_{\bsigma_s}| \\ = \; &\left|(\ID-\E_{s}) ((\ID-\E_{s}) O_s P_s) \widetilde{O}_s \widetilde{P}_s \right|  \\
\leq \; &|O_s| |\widetilde{O}_s | |P_s | |\widetilde{P}_s | + |\widetilde{O}_s \widetilde{P}_s  \, \E_s (O_s P_s ) | + | \E_s (O_s P_s \widetilde{O}_s \widetilde{P}_s)| + |\E_s (O_s P_s) \, \E_s (\widetilde{O}_s \widetilde{P}_s)| \\
 \leq \; &|O_s| |\widetilde{O}_s| |P_s| |\widetilde{P}_s| + |\widetilde{O}_s|  |\widetilde{P}_s| (\E_s |O_s|^2)^{1/2}  (\E_s  |P_s|^2)^{1/2}  \\ &+
(\E_s |O_s \widetilde{O}_s|^2)^{1/2} (\E_s |P_s \widetilde{P}_s|^2)^{1/2} + (\E_s |O_s|^2)^{1/2} (\E_s |P_s|^2)^{1/2} (\E_s |\widetilde{O}_s|^2)^{1/2} (\E_s |\widetilde{P}_s|^2)^{1/2} \\
\leq \; & C M_s \widetilde M_s
\end{split}
\end{equation}
for all $s=t+1, \dots , 2q$. Here we defined the random variables
\begin{align*}
 M_s &= |O_s| |\widetilde O_s|+ (\E_{s}|O_s|^2)^\frac12 |\widetilde O_s| +(\E_{s}|O_s\widetilde O_s|^2)^\frac12+(\E_{s}|O_s|^2)^\frac12(\E_{s}|\widetilde O_s|^2)^\frac12 \\
 \widetilde M_s &= |P_s| |\widetilde P_s|+ (\E_{s}|P_s|^2)^\frac12 |\widetilde P_s|+(\E_{s}|P_s\widetilde P_s|^2)^\frac12+(\E_{s}|P_s|^2)^\frac12(\E_{s}|\widetilde P_s|^2)^\frac12 \,.
\end{align*}
Combining (\ref{boundm}) with (\ref{boundM}), and applying Cauchy-Schwarz's inequality, we find
\begin{equation}\label{eq:mMmM} \begin{split}
\Big| \E ((\ID-\E_{k_1}) ((\ID-\E_{k_1}) &A^1_{\sigma_1})B^1_{\bsigma_1}) \cdots \overline{((\ID-\E_{k_{2q}}) ((\ID-\E_{k_{2q}}) A^{2q}_{\sigma_{2q}})B^{2q}_{\bsigma_{2q}})}\Big| \\
\leq &C \, (\E  \, m^2_1\cdots m^2_t M^2_{t+1} \cdots M^2_{2q})^\frac12 \, (\E \, \widetilde m^2_1\cdots\widetilde m^2_t \widetilde M^2_{t+1} \cdots \widetilde M^2_{2q} )^\frac12\,.
\end{split}\end{equation}
We bound the first factor on the r.h.s. of the last equation. Recall that $O_s$ and $\widetilde O_s$ are the product of respectively $\gamma_s := \gamma(A^{s}_{\sigma_s})$ and $\tilde\gamma_s := \gamma(B^{s}_{\sigma_s})$ entries of the form $K_{kl}^{(\T)}$. With H\"older's inequality we obtain
\begin{equation}\label{eq:holder} \E \, m_1^2 \dots m_t^2 M_{t+1}^2 \dots M_{2q}^2 \leq \prod_{s=1}^t  \left[ \E \, m_s^{2(t+\sum_s (\gamma_s+ \widetilde{\gamma}_s))} \right]^{\frac{1}{t+\sum_s (\gamma_s+ \widetilde{\gamma}_s)}} \, \prod_{s=t+1}^{2q} \left[ \E \, M_s^{2 \frac{ t+\sum_s (\gamma_s+ \widetilde{\gamma}_s)}{\gamma_s + \widetilde{\gamma}_s}} \right]^{\frac{\gamma_s + \widetilde{\gamma}_s}{t+\sum_s (\gamma_s+ \widetilde{\gamma}_s)}}\,. \end{equation}
According to (\ref{543}) we have, for $s=1, \dots , t$,
\begin{equation}\label{eq:ms2} m_s^2 \leq C \left| (\ID-\E_s) \frac{1}{G_{k_s k_s}^{(\T_s)}} \right|^2 + C \, \E_s \left| (\ID-\E_s) \frac{1}{G_{k_s k_s}^{(\T_s)}} \right|^2
\end{equation}
where we set $\T_s = L(\kk) \backslash \{ s \}$. This implies that
\begin{equation}\label{eq:m2s} \E \, m_s^{2(t+\sum_s (\gamma_s + \wt{\gamma}_s))} \leq C^{t+\sum_s (\gamma_s + \wt{\gamma}_s)} \,  \E \,   \left| (\ID-\E_s) \frac{1}{G_{k_s k_s}^{(\T_s)}} \right|^{2(t+\sum_s (\gamma_s + \wt{\gamma}_s))} \,. \end{equation}

\medskip

To bound the $M_s$ variables, we observe that, for any $s = t+1,\dots ,2q$,
\begin{equation}\label{sat}
M_s^2  \leq  C (|O_s |^2|\widetilde O_s|^2+ (\E_{s}|O_s|^2) |\widetilde O_s|^2 +(\E_{s}|O_s\widetilde O_s|^2)+(\E_{s}|O_s|^2)(\E_{s}|\widetilde O_s|^2) )\,.
\end{equation}
Inserting the contribution of
\[ |O_s|^2 |\widetilde{O}_s|^2 = \prod_{j=1}^{\gamma_s + \wt{\gamma}_s} |K^{(\T_j)}_{i_j l_j}|^2 \]
(for appropriate indices $i_j,l_j$ and sets $\T_j$) in the expectation on the r.h.s. of (\ref{eq:holder}), we find, again by H\"older's inequality,
\[ \left[ \E \, (|O_s| |\widetilde{O}_s|)^{2 \frac{ t+\sum_s (\gamma_s+ \widetilde{\gamma}_s)}{\gamma_s + \widetilde{\gamma}_s}} \right]^{\frac{\gamma_s + \widetilde{\gamma}_s}{t+\sum_s (\gamma_s+ \widetilde{\gamma}_s)}} \leq \prod_{j=1}^{\gamma_s + \wt{\gamma}_s}  \left[ \E \, |K^{(\T_j)}_{i_j l_j}|^{2(t+ \sum_s (\gamma_s + \wt{\gamma}_s))} \right]^{\frac{1}{t+\sum_s (\gamma_s + \wt{\gamma}_s)}}\,. \]
The contribution of the second term on the r.h.s. of (\ref{sat}), having the form
\[ (\E_s |O_s|^2) |\wt{O}_s|^2 = \left(\E_s \prod_{j=1}^{\gamma_s} |K_{i_j l_j}^{(\T_j)}|^2 \right) \prod_{j=\gamma_s+1}^{\gamma_s + \wt{\gamma}_s} |K_{i_j l_j}^{(\T_j)}|^2
\leq \prod_{j=1}^{\gamma_s}  \left(\E_s |K_{i_j l_j}^{(\T_j)}|^{2\gamma_s} \right)^{\frac{1}{\gamma_s}}
 \prod_{j=\gamma_s+1}^{\gamma_s + \wt{\gamma}_s} |K_{i_j l_j}^{(\T_j)}|^2 \, ,
\]
to (\ref{eq:holder}) can be estimated by
\[ \begin{split}
\E \Big[ (\E_s |O_s|^2)^{\frac{t+\sum_s (\gamma_s + \wt{\gamma}_s)}{\gamma_s + \wt{\gamma}_s}} &|\wt{O}_s|^{2 \frac{t+\sum_s (\gamma_s + \wt{\gamma}_s)}{\gamma_s + \wt{\gamma}_s}}  \Big]
\\ &\leq \E \,  \prod_{j=1}^{\gamma_s}  \left(\E_s |K_{i_j l_j}^{(\T_j)}|^{2\gamma_s} \right)^{\frac{t+ \sum_s (\gamma_s + \wt{\gamma}_s)}{\gamma_s (\gamma_s + \wt{\gamma}_s)}} \prod_{j=\gamma_s+1}^{\gamma_s + \wt{\gamma}_s} |K_{i_j l_j}^{(\T_j)}|^{2 \frac{t+ \sum_s (\gamma_s + \wt{\gamma}_s)}{\gamma_s + \wt{\gamma}_s}} \\ &\leq \prod_{j=1}^{\gamma_s} \left[ \E (\E_s |K_{i_j l_j}^{(\T_j)}|^{2\gamma_s} )^{\frac{t+\sum_s (\gamma_s + \wt{\gamma}_s)}{\gamma_s}} \right]^{\frac{1}{\gamma_s + \wt{\gamma}_s}} \prod_{j=\gamma_s+1}^{\gamma_s+\wt{\gamma}_s} \left[ \E \, |K_{i_j l_j}^{(\T_j)}|^{2(t+\sum_s (\gamma_s + \wt{\gamma}_s))} \right]^{\frac{1}{\gamma_s + \wt{\gamma}_s}}
\\ &\leq \prod_{j=1}^{\gamma_s + \wt{\gamma}_s}  \left[ \E \, |K_{i_j l_j}^{(\T_j)}|^{2(t+\sum_s (\gamma_s + \wt{\gamma}_s))} \right]^{\frac{1}{\gamma_s + \wt{\gamma}_s}}\,.
\end{split} \]
The contributions from the last two terms on the r.h.s. (\ref{sat}) can also be bounded similarly. We obtain
\[ \left[ \E \, M_s^{2 \frac{t+\sum_s (\gamma_s + \wt{\gamma}_s)}{\gamma_s + \wt{\gamma}_s}} \right]^{\frac{\gamma_s + \wt{\gamma}_s}{t+\sum_s (\gamma_s + \wt{\gamma}_s)}} \leq C \prod_{j=1}^{\gamma_s + \wt{\gamma}_s} \left[ \E \, |K_{i_j l_j}^{(\T_j)}|^{2(t+\sum_s (\gamma_s + \wt{\gamma}_s))} \right]^{\frac{1}{t+\sum_s (\gamma_s + \wt{\gamma}_s)}}\,. \]
Inserting (\ref{eq:m2s}) and the last equation into (\ref{eq:holder}), we conclude that
\begin{equation}\label{eq:EmMs} \begin{split} \E \, m_1^2 \dots m_t^2 M_1^2 \dots M_u^2 &\leq  \prod_{s=1}^t \left[ \E \left| (\ID - \E_s) \frac{1}{G_{k_s k_s}^{(\T_s)}} \right|^{2 (t+\sum_s (\gamma_s + \wt{\gamma}_s))} \right]^{\frac{1}{t+\sum_s (\gamma_s + \wt{\gamma}_s)}} \\ &\hspace{3cm} \times \prod_{s=t+1}^{2q} \, \prod_{j=1}^{\gamma_s + \wt{\gamma}_s}  \left[ \E   \, |K_{i_{j,s} l_{j,s}}^{(\T_{j,s})}|^{2(t+\sum_s (\gamma_s + \wt{\gamma}_s))} \right]^{\frac{1}{t+\sum_s (\gamma_s + \wt{\gamma}_s)}} \\ &\leq (Cq)^{cq^2} \EE_{t + \sum_{s=t+1}^{2q} (\gamma_s + \wt{\gamma}_s)}\\ & \leq (Cq)^{cq^2} \EE_{2q + |L(\kk)|}\ \end{split} \end{equation}
where we used Lemma \ref{l:G12}, Eq. (\ref{boundZk}), the bound $t +  \sum_{s=t+1}^{2q} (\gamma_s+\tilde \gamma_s) \leq cq^2$ (which follows from $t, \tilde\gamma_s,\gamma_s\leq 2q$), and the estimate
 \[t+ \sum_{s=t+1}^{2q} (\gamma_s+\tilde \gamma_s) \geq 2q +\sum_{s=1}^u \ell(s) \geq 2q + |L(\kk)|,\]
 which follows from Remark (\ref{f:1}) in Section \ref{sssectapply}.


\medskip

To bound the second factor on the r.h.s. of (\ref{eq:mMmM}), we observe that all resolvent entries in $\widetilde m_s$ and $\widetilde M_s$ are diagonal and that their  total number is at most $cq^2$. Applying Lemma \ref{boundG11} and Lemma \ref{l:Gkkinv} (using the assumption $q^2 \leq c (N\eta)^{1/4}$), we find
\begin{equation}\label{eq:tilde} \E \, \widetilde{m}^2_1 \dots \wt{m}^2_t \wt{M}^2_{t+1} \dots \wt{M}_{2q}^2  \leq (Cq)^{cq^2}\,. \end{equation}
Inserting in (\ref{eq:mMmM}), we obtain
\[ \Big| \E ((\ID-\E_{k_1}) ((\ID-\E_{k_1}) A^1_{\sigma_1})B^1_{\bsigma_1}) \cdots \overline{((\ID-\E_{k_{2q}}) ((\ID-\E_{k_{2q}}) A^{2q}_{\sigma_{2q}})B^{2q}_{\bsigma_{2q}})}\Big| \leq (Cq)^{cq^2} \EE^{1/2}_{2q+|L(\kk)|} \]
which concludes the proof of \eqref{claim1} in Case 1.

\medskip

{\bf Case 2}. For some $r\in\{1,...,2q\}$ there is at least one resolvent entry, either in $A^r_{\sigma_r}$ or in $B^r_{\bsigma_r}$, which is not maximally expanded. Among the other $2q-1$ indices in $\{ 1, \dots , 2q \} \backslash \{ r \}$ let us say that for $t$ of them (denoted by $s_1, \dots , s_t$), the
strings $\sigma_{s_j}$ and $\bsigma_{s_j}$ only contain zeros, while for the other $u = 2q-1-t$ labels, $s_{t+1}, \dots , s_{2q-1}$, either the string $\sigma_{s_j}$ or the string $\bsigma_{s_j}$ contain at least  a one. Without loss of generality, we can assume that $s_1 = 1, \dots , s_t = t$, $s_{t+1} = t+1, \dots , s_{2q-1} = 2q-1$ and $r = 2q$.

\medskip

For $s=1, \dots , t$, we use the estimate
\[ | (\ID - \E_s) ((\ID - \E_s) A_{\sigma_s}^s) B_{\bsigma_s}^s| \leq C m_s \widetilde{m}_s \]
established in (\ref{boundm}) with $m_s$ and $\widetilde{m}_s$ defined as in (\ref{m}). Since, for $s=1, \dots , t$, the strings $\sigma_s$ and $\bsigma_s$ only contain zeros, we have
\[ A_{\sigma_s}^s = \frac{1}{G_{ss}^{(\T_s)}}, \quad \text{ and } \quad B_{\bsigma_s}^s = G_{ss}^{(\T'_s)} \]
for appropriate sets $\T_s$ and $\T'_s$. This implies that (like in (\ref{eq:ms2}))
\begin{equation}\label{eq:mc2} m_s^2 \leq C \left| (\ID-\E_s) \frac{1}{G_{k_s k_s}^{(\T_s)}} \right|^2 + C \, \E_s \left| (\ID-\E_s) \frac{1}{G_{k_s k_s}^{(\T_s)}} \right|^2  \, .
\end{equation}

\medskip

For $s=t+1, \dots , 2q-1$, either $A^s_{\sigma_s}$ or $B^s_{\bsigma_s}$ contains at least two off-diagonal resolvent entries. Let us assume, for example, that $A^s_{\sigma_s}$ contains two off-diagonal entries (it is easy to invert the roles of $A^s_{\sigma_s}$ and $B^s_{\bsigma_s}$, if it is $B_{\bsigma_s}^s$ which contains the off-diagonal entries). In this case, we write $A^s_{\sigma_s}=O_s P_s$, where the monomial $O_s$ is given by the product of $\gamma_s = \gamma (A^s_{\sigma_s}) \geq 2$ terms of the form $K_{i_j l_j}^{(\T_j)}$ while $P_s$ is a product of diagonal entries (appearing either in the numerator or in the denominator). We bound (for any $s=t+1, \dots , 2q-1$)
\[ \left|(\ID-\E_s) ((\ID-\E_s) A^s_{\sigma_s}) B^s_{\bsigma_s} \right| \leq  C L_s \widetilde L_s
\]
where we defined
\begin{align*}
 L_s &= |O_s| + (\E_s |O_s|^2)^\frac12 \\
 \widetilde L_s &= (|P_s|+(\E_s |P_s|^4)^\frac14) (|B^s_{\bsigma_s}|+(\E_s | B^s_{\bsigma_s}|^4)^\frac14) \,.
\end{align*}
We observe that
\begin{equation}\label{eq:bd-Ls}  L_s^2 \leq |O_s|^2 + \E_s |O_s|^2 = \prod_{j=1}^{\gamma_s} |K_{i_{j,s} l_{j,s}}^{(\T_{j,s})}|^2 + \prod_{j=1}^{\gamma_s} \left[ \E_s  |K_{i_{j,s} l_{j,s}}^{(\T_{j,s})}|^{2\gamma_s}\right]^{\frac{1}{\gamma_s}} \end{equation}
for appropriate indices $i_{j,s}, l_{j,s}$ and appropriate sets $\T_{j,s}$, and for all $s=t+1, \dots , 2q-1$.

\medskip

Finally, we consider $s=2q$, for which one of the resolvent entries, either in $A^{2q}_{\sigma_{2q}}$ or in $B^{2q}_{\bsigma_{2q}}$ is not maximally expanded. Let us assume that the entry which is not maximally expanded is in $A^{2q}_{\sigma_{2q}}$. In this case, we write
\[ A^{2q}_{\sigma_{2q}} = O P , \]
where $O$ is a monomial given by the product of $\gamma \equiv \od(A^{2q}_{\sigma_{2q}})$ terms of the form $K_{kl}^{(\T)}$ and $P$ contains only diagonal entries. Note that, since we assumed $A^{2q}_{\sigma_{2q}}$ not to be maximally expanded, we must have $2q+1\leq \od \leq 2q+2$.
We estimate
\[
\left|(\ID-\E_1) ((\ID-\E_1) OP) B^{2q}_{\bsigma_{2q}} \right| \leq  C L \widetilde L
\]
where we set
\begin{align*}
 L &= |O| + (\E_{2q} |O|^2)^\frac12 \\
 \widetilde L &= (|P|+(\E_{2q} |P|^4)^\frac14) (|B^1_{\rho_1}|+(\E_{2q} |B^1_{\rho_1}|^4)^\frac14) \,.
\end{align*}
Similarly to (\ref{eq:bd-Ls}), we find
\begin{equation}\label{eq:bd-L} L^2 \leq  \prod_{j=1}^{\gamma} |K_{i_j l_j}^{(\T_j)}|^2 + \prod_{j=1}^{\gamma} \left[ \E_s  |K_{i_j l_j}^{(\T_j)}|^{2\gamma_s}\right]^{\frac{1}{\gamma_s}}\,. \end{equation}

\medskip

By Cauchy-Schwarz, we can bound the l.h.s. of Eq. \eqref{claim1} by
\begin{equation}\begin{split}
\label{eq:CS-2}
\Big| \E ((\ID-\E_{k_1}) ((\ID-\E_{k_1}) &A^1_{\sigma_1})B^1_{\bsigma_1}) \cdots \overline{( (\ID-\E_{k_{2q}}) ((\ID-\E_{k_{2q}}) A^{2q}_{\sigma_{2q}})B^{2q}_{\bsigma_{2q}})}\Big| \\
\leq & C (\E \, m_1^2 \dots m_t^2 L_{t+1}^2 \dots L_{2q-1}^2 L^2 )^\frac12 \, (\E \, \widetilde m^2_1\dots \widetilde m^2_{t} \widetilde L_{t+1}^2 \dots \widetilde L_{2q-1}^2 \wt{L}^2)^\frac12  \,.
\end{split}\end{equation}
Proceeding as in (\ref{eq:EmMs}), and using the bounds (\ref{eq:mc2}), (\ref{eq:bd-Ls}), (\ref{eq:bd-L}), we find
\begin{equation}\begin{split}
\label{eq:mLL2}
\E \, m_1^2 &\dots m_t^2 L_{t+1}^2 \dots L_{2q-1}^2 L^2
\\
 \leq & \; (Cq)^{2q} \prod_{s=1}^t \left[ \E \, \left|(\ID-\E_{1}) \frac{1}{G^{(\T_s)}_{k_s k_s}} \right|^{2(t+\gamma+\sum_s \gamma_s))} \right]^{\frac{1}{2(t+\gamma+\sum_s \gamma_s))}} \\ &\hspace{1cm} \times
 \prod_{s=t+1}^{2q-1} \prod_{j=1}^{\gamma_s} \left[ \E \, |K^{(\T_{j,s})}_{i_{j,s},l_{j,s}}|^{2(t+\gamma+\sum_s \gamma_s)} \right]^\frac{1}{2(t+\gamma+\sum_s \gamma_s)}  \prod_{j=1}^\gamma \left[ \E \, |K^{(\T)}_{i_j l_j}|^{2(t+\gamma+\sum_s \gamma_s)} \right]^\frac{1}{2(t+\gamma+\sum_s \gamma_s)}
 \\ \leq & \; (Cq)^{cq^2} \EE_{t+\gamma+\sum_s \gamma_s} \\
\leq & (Cq)^{cq^2} \, \EE_{4q} \,.
 \end{split} \end{equation}
Here we used the fact that $t+\gamma+\sum_{s} \gamma_s \leq cq^2$ and the bound
\[ t+\gamma+\sum_s \gamma_s \geq t +2q+1+ u \geq 4q,\]
because $\gamma \geq 2q+1$ and $\gamma_s \geq 1$ for all $s=t+1, \dots , 2q-1$. On the other hand we have that, similarly to (\ref{eq:tilde}),
\[
\E \, \widetilde m^2_1\dots \widetilde m^2_{t} \, \widetilde L_{t+1}^2 \dots \widetilde L^2_{2q-1} \wt{L}^2  \leq (Cq)^{cq^2} \, .
\]
Inserting (\ref{eq:mLL2}) and the last inequality into (\ref{eq:CS-2}), we obtain
\[ \Big| \E ((\ID-\E_{k_1}) ((\ID-\E_{k_1}) A^1_{\sigma_1})B^1_{\bsigma_1}) \cdots \overline{( (\ID-\E_{k_{2q}}) ((\ID-\E_{k_{2q}}) A^{2q}_{\sigma_{2q}})B^{2q}_{\bsigma_{2q}})}\Big| \leq (Cq)^{c q^2} \EE_{4q}^{1/2}\]
Eq. \eqref{claim1} follows because $2q+|L(\kk)| \leq 4q$ (and because of (\ref{eq:EEpq})).


\section{Improved optimal bound in the bulk}
\label{sec:bulk}

In this section, we prove Theorem \ref{t:bulk}, which gives a stronger bound on the fluctuations of the Stieltjes transform in the bulk of the spectrum. Compared with Theorem \ref{t:bulk}, here there is no upper bound on the size of the power $q$. Looking at the proof of Theorem \ref{t:main}, we observe that there are two results, in which the restriction $q \leq c (N\eta)^{1/8}$ played an important role, namely the proof of Lemma~\ref{l:boundG11}, containing a bound for $\E |G_{11}|^q$, and of Lemma \ref{lwidetildeG}, giving an estimate for $\E | (\E_1  G^{-1}_{11})^{-1} |^q$. In Lemma  \ref{l:bulk1} and Lemma \ref{l:bulk2} below, we show that, in the bulk of the spectrum, at distances of order one from the edges, these moments can be bounded for arbitrary $q \in \NA$. Using these estimates, Theorem \ref{t:bulk} can then be shown exactly as we proved Theorem \ref{t:main}.

\medskip

In order to bound the moments of $|G_{11}|$ and $|(\E_1  G^{-1}_{11})^{-1} |$ in the bulk of the spectrum, we will need, first of all, an upper bound on the density of states, as stated in the next proposition.
\begin{proposition}
\label{p:wwb}
Assume (\ref{e:gd}), fix $\tildeeta> 0$ and $0 < \kappa < 2$ and set $z = E+i\eta$. Then there exist constants $c,C,M,K_0 > 0$ such that
\[
\P\left(\frac{\NN [E- \frac{\eta}{2}, E+ \frac{\eta}{2}]}{N\eta} \geq K \right)\leq C e^{-c\sqrt{KN\eta}}
\]for all $E \in (-2+\kappa, 2-\kappa)$, $N > M$, $K > K_0$, $\eta > 1/N$. Here $\NN [a;b]$ denotes the number of eigenvalues in the interval $[a;b]$.
\end{proposition}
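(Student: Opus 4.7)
The plan is to translate the eigenvalue counting statement into a tail estimate on $\Im m(E+i\eta)$ and then invoke the subgaussian large deviation bound of \cite{ESY3} recorded in (\ref{eq:PESY3}).

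First, from (\ref{eq:Imm}) and the observation that $(\lambda_\alpha - E)^2 + \eta^2 \leq 5\eta^2/4$ whenever $\lambda_\alpha \in [E-\eta/2, E+\eta/2]$, we immediately obtain
\[ \Im m(E+i\eta) \geq \frac{4}{5 N\eta}\, \NN[E-\eta/2, E+\eta/2]. \]
In particular, $\NN[E-\eta/2, E+\eta/2]/(N\eta) \geq K$ implies $\Im m(E+i\eta) \geq 4K/5$. Second, Proposition \ref{p:msc} (or the explicit formula $m_{sc}(z) = -z/2 + \sqrt{z^2/4-1}$) gives $\Im m_{sc}(E+i\eta) \leq C_\kappa$ for $E \in (-2+\kappa, 2-\kappa)$ and $\eta \leq \tildeeta$, with $C_\kappa$ depending only on $\kappa$ and $\tildeeta$. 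Choosing $K_0 = \max\{4, 5 C_\kappa/2\}$, for $K \geq K_0$ we therefore have
\[ \{ \Im m \geq 4K/5 \} \subseteq \{ |m - m_{sc}| \geq K/2 \}. \]

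The key step is to control the probability of the right-hand side. I will apply the estimate (\ref{eq:PESY3}) from \cite{ESY3}, valid in the bulk $|E| \leq 2-\kappa$, with $\delta = K/2$, to get $\P(|m-m_{sc}| \geq K/2) \leq C e^{-c K\sqrt{N\eta}/2}$. Since $K\sqrt{N\eta}/2 = (\sqrt{K}/2)\sqrt{KN\eta} \geq \sqrt{KN\eta}$ whenever $K \geq 4$, combining this with the two inclusions above yields the asserted bound $C e^{-c\sqrt{KN\eta}}$ for all $K \geq K_0$.

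The main obstacle is this last step: the moment bounds proved inside the present paper, namely Theorem \ref{t:bulk} in the form $\P(|m-m_{sc}| \geq K'/(N\eta)) \leq (Cq)^{cq^2}/(K')^q$, are not strong enough by themselves to give subgaussian decay. Indeed, setting $L = KN\eta/2$ and optimizing $(Cq)^{cq^2}/L^q$ over $q \in \NA$ yields at best $\exp\bigl(-c(\log L)^2/\log\log L\bigr)$, which is strictly weaker than $e^{-c\sqrt{L}}$ for large $L$; so the proposition genuinely relies on the Hanson--Wright-based exponential bound (\ref{eq:PESY3}). If a self-contained derivation is wanted, (\ref{eq:PESY3}) can be re-established with the techniques already developed here, by combining the pointwise inequality $\Im G_{jj} \leq 1/(\eta + \aa_j^* \, \Im G^{(j)} \aa_j)$, which follows from (\ref{Gjj}) and $|w|^2 \geq (\Im w)^2$, with Proposition \ref{p:HW} applied conditionally on $G^{(j)}$ to the quadratic form $\aa_j^* \Im G^{(j)} \aa_j$, together with a short bootstrap that ensures $\Im m^{(j)}$ is bounded below by a positive constant in the bulk.
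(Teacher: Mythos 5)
Your reduction of the counting statement to a tail bound on $\Im m(E+i\eta)$, and then to the event $|m-m_{sc}|\geq K/2$, is correct. The gap is in the decisive step, where you invoke (\ref{eq:PESY3}) with $\delta=K/2$ for arbitrarily large $K$ and for all $\eta>1/N$. The bound (\ref{eq:PESY3}) is quoted in the introduction only informally; the precise result behind it (Theorem 3.1 of \cite{ESY3}) is obtained from a stability argument for the self-consistent equation and controls deviations of order one --- indeed, in this paper it is invoked only with $\delta$ a fixed constant, in the proof of Proposition \ref{p:gap}. What your argument needs is the upper-tail regime $\delta\gg 1$ at scales with $N\eta$ as small as order one, i.e.\ a bound on the probability that of order $KN\eta$ eigenvalues fall into an interval of length $\eta$; but that is exactly the content of Proposition \ref{p:wwb}. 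That this regime is not supplied by the quoted estimate is visible in \cite{ESY3} itself: the counting bound with tail $e^{-c\sqrt{KN|I|}}$ (Theorem 4.6 there) is stated only for intervals of length at least $(\log N)^4/N$, and if the Stieltjes-transform estimate were available for all $\delta$ down to $\eta\sim 1/N$, that restriction --- and the entire case analysis in the paper's proof of this proposition --- would be superfluous. As written, your key step rests on a reading of (\ref{eq:PESY3}) that goes beyond what the cited reference establishes and is essentially equivalent to the statement to be proved.

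For comparison, the paper argues purely from the counting estimates of \cite{ESY3}: for $N\eta\geq(\log N)^4$ it quotes Theorem 4.6 directly; for $1\leq N\eta\leq(\log N)^4$ it distinguishes $K\leq CN^{\beta}/(N\eta)$, where Theorem 5.1 of \cite{ESY3} gives the bound $(C/K)^{KN\eta/4}$ on arbitrarily small scales, from larger $K$, where it enlarges the interval to length of order $(\log N)^4/N$ and uses monotonicity of the eigenvalue count together with Theorem 4.6. Your fallback sketch does not repair the gap in the hard regime: applying Proposition \ref{p:HW} conditionally to $\aa_j^*\,(\Im G^{(j)})\,\aa_j$, you need the conditional mean of this quadratic form to be of size $K$ on the bad event (not merely bounded below by a constant) in order to produce the factor $\sqrt{K}$ in the exponent, and you must then take a union bound over the indices $j$ (or over the $\gtrsim KN\eta$ indices that have to be exceptional), which costs a factor of order $N$; that factor cannot be absorbed into $e^{-c\sqrt{KN\eta}}$ when $KN\eta\lesssim(\log N)^2$, which is precisely the regime in which the paper resorts to Theorem 5.1 of \cite{ESY3}. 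So either quote those counting theorems, as the paper does, or provide a genuinely different argument for small $N\eta$ and moderate $K$.
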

\begin{proof}
We set $I =  [E- \frac{\eta}{2}, E+ \frac{\eta}{2}]$ and denote by $\NN_I$ the number of eigenvalues in the interval $I$. For $N\eta \geq (\log N)^4$, the claim follows from Theorem 4.6 in \cite{ESY3}. So, we can assume that $1 \leq N\eta \leq (\log N)^4$. We distinguish two cases.
\begin{itemize}
\item[$\bullet$] If $K \leq CN^\beta / (N\eta)$, for a $0< \beta < 1/16$, we use Theorem 5.1 of \cite{ESY3}, with $\nu = 1/4$, to estimate
\[
\P\left(\frac{\NN_I}{N\eta}\geq K\right)\leq \left(\frac{C}{K}\right)^{KN\eta/4} \leq e^{-c KN\eta}
\]
which holds true for any interval $I=[E-\eta/2,E+\eta/2]$ with $1\leq N\eta\leq C  N^\beta$ (and in particular holds for all $1\leq N\eta \leq (\log N)^4$).
\item[$\bullet$] If $K\geq C N^\beta / (N\eta)$ we use the bound in Theorem 4.6 of \cite{ESY3}. We consider a new interval $\tilde I = [E- \wt{\eta}/2, E+ \widetilde{\eta}/2]$ with $ (\log N)^4 \leq N \wt{\eta} \leq   2(\log N)^4$. Since obviously $\NN_{\wt{I}} \geq \NN_I$, we find
\[ \P \left( \frac{\NN_I}{N\eta} \geq K \right) = \P \left( \NN_I \geq K N \eta \right) \leq \P \left( \NN_{\wt{I}} \geq K N \eta \right) = \P \left( \frac{\NN_{\wt{I}}}{N\wt{\eta}} \geq  \frac{K\eta}{\wt{\eta}} \right) \leq  C e^{-c \sqrt{K N \eta}} \]
where we used Theorem 4.6 of \cite{ESY3}, and the fact that $K \eta/ \wt{\eta} \geq C N^\beta / (\log N)^4$ is large.
\end{itemize}
\end{proof}

\medskip

Secondly, we will need an upper bound on the size of the gap between eigenvalues; more precisely, we need to know that, if we consider an interval larger than $Kp/N$ the probability of finding fewer than $p$ eigenvalues tends to zero, as $K \to \infty$.
\begin{proposition}
\label{p:gap}
Assume (\ref{e:gd}), fix $\tildeeta> 0$ and $0 < \kappa < 2$ and set $z = E+i\eta$. Then there exist constants $c,C,M,K_0 > 0$ such that
\[ 
  \P\left(\left|\left\{\al:\, |\la_\al - E| \leq \frac{Kp}{N}\right\}\right| \leq p \right) \leq C e^{-c K^{1/4}}
 \] for any $E\in[-2+\kappa, 2-\kappa]$, $p\geq 1$, $N > M$, $K > K_0$.
\end{proposition}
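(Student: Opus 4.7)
The plan is to prove the density lower bound by a counting argument based on the identity
\[
N\Im m(E+i\eta) = \sum_\al \frac{\eta}{(\la_\al-E)^2+\eta^2},
\]
which in the bulk must be of order $N$ by the local semicircle law. I would choose $\eta = \sqrt{K}\,p/N$ and, assuming for contradiction that the interval $I := [E-Kp/N, E+Kp/N]$ contains at most $p$ eigenvalues, derive that $\Im m(E+i\eta)$ is too small to agree (within $c_0/2$) with $\Im m_{sc}(E+i\eta)$.

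Let $c_0 := \inf\{ \Im m_{sc}(E+i\eta') : |E|\le 2-\kappa,\ 0 < \eta' \le \tildeeta\} > 0$, which is strictly positive by the explicit formula for $m_{sc}$. I would introduce the two good events
\[
\Omega_1 = \bigl\{\, |\Im m(E+i\eta) - \Im m_{sc}(E+i\eta)| \le c_0/2 \,\bigr\},
\]
\[
\Omega_2 = \bigl\{\, \NN[E-2^j Kp/N,\, E+2^j Kp/N] \le K_\ast\cdot 2^{j+1} Kp \ \text{for every } j\ge 1 \,\bigr\},
\]
where $K_\ast$ is a fixed large constant. Applying \eqref{eq:PESY3} from \cite{ESY3} with $\de=c_0/2$ gives $\P(\Omega_1^c) \le C\exp(-c\sqrt{N\eta}) = C\exp(-cK^{1/4}\sqrt{p}) \le Ce^{-cK^{1/4}}$. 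Applying Proposition \ref{p:wwb} to each shell and union-bounding over $j\ge 1$ gives $\P(\Omega_2^c) \le C\sum_{j\ge 1} \exp(-c\sqrt{K_\ast \cdot 2^{j+1}Kp}) \le Ce^{-cK^{1/4}}$.

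The main step will be to show that $\Omega_1 \cap \Omega_2 \cap \{\text{bad event}\} = \emptyset$ for $K$ larger than some constant $K_0$. On $\Omega_1$, $N\Im m(E+i\eta) \ge Nc_0/2$. On the bad event, the $\le p$ eigenvalues in $I$ contribute at most $p/\eta = N/\sqrt{K}$ to the sum. Splitting the complement of $I$ into dyadic shells $S_j = \{\al: 2^{j-1}Kp/N < |\la_\al-E|\le 2^j Kp/N\}$, each $S_j$ has at most $K_\ast\cdot 2^{j+1}Kp$ elements (by $\Omega_2$), each contributing at most $\eta/(2^{j-1}Kp/N)^2 = N/(4^{j-1}K^{3/2}p)$; a direct computation gives a shell contribution of at most $8K_\ast N/(2^j\sqrt{K})$, which sums to $8K_\ast N/\sqrt{K}$. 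Hence
\[
\frac{Nc_0}{2} \le N\Im m(E+i\eta) \le \frac{(1+8K_\ast)N}{\sqrt{K}},
\]
a contradiction for $K > K_0 := 4(1+8K_\ast)^2/c_0^2$. Therefore the bad event is contained in $\Omega_1^c \cup \Omega_2^c$ for $K\ge K_0$, which gives the claim.

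The main difficulty will be obtaining the exponential decay $e^{-cK^{1/4}}$: Theorem \ref{t:bulk} alone yields only subexponential decay for the tail of $\Im m - \Im m_{sc}$, so the sharper bound \eqref{eq:PESY3} from \cite{ESY3} (exponential in $\sqrt{N\eta}$) is essential. The scale $\eta = \sqrt{K}\,p/N$ is chosen precisely to balance $\P(\Omega_1^c)$ and $\P(\Omega_2^c)$ at the common rate $e^{-cK^{1/4}}$: a larger $\eta$ would weaken the outside-contribution estimate, while a smaller $\eta$ would degrade $\P(\Omega_1^c)$.
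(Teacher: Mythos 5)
Your proposal is correct and takes essentially the same route as the paper's proof: the same scale $\eta=\sqrt{K}\,p/N$, the same dyadic-shell density control via Proposition \ref{p:wwb}, and the same use of the exponential estimate \eqref{eq:PESY3} from \cite{ESY3} to force $\Im m(E+i\eta)\geq c_0/2$, which contradicts the bound $\Im m(E+i\eta)\leq C/\sqrt{K}$ derived on the bad event. The only difference is organizational (you intersect with the good event for $\Im m$ up front, while the paper bounds the probability of the large fluctuation at the end), so there is nothing substantive to add.
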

\begin{proof}
 Set $\eta = p\, \sqrt{K}/N$ and $z=E+i\eta$. We consider the intervals
\[
\begin{split}
 I_0 &= [E-\sqrt K\eta,E+\sqrt K \eta ]\, \quad \text{and} \\
 I_j  &= [E-2^j\sqrt K\eta,E-2^{j-1}\sqrt K \eta  ] \cup [E+2^{j-1}\sqrt K\eta,E+2^{j}\sqrt K \eta] \qquad \text{for all $j \geq 1$.}
\end{split}
\]
We denote by $\NN_j$ the number of eigenvalues in $I_j$.  For a large constant $M$, we define the event
\[
 \Omega := \left\{\max_j \frac{\NN_{j}}{N|I_j|} \geq  M \right\}.
\]
{F}rom the upper bound in Proposition \ref{p:wwb}, we find
\begin{equation}\label{eq:omega}
 \P(\Omega) \leq   \sum_{j=0}^\infty C e^{-c\sqrt{M N |I_j|}} \leq \sum_{j=0}^\infty
 C e^{-c\sqrt{M K p \, 2^j}} \leq C e^{-c K^{1/2}}.
\end{equation}
On $\Omega^c$, and assuming that $\left|\left\{\al:\, |\la_\al - E| \leq Kp/N \right\}\right| \leq p$, we find
\begin{equation}\begin{split}\label{eq:mz}
 \Im m (E+i\eta) = & \frac{1}{N}\sum_{\al=1}^N \frac{\eta}{(\la_\al- E)^2+\eta^2} \\
= & \frac1{N} \sum_{j=0}^\infty \sum_{\al:\,\la_\al\in I_j} \frac{\eta}{(\la_\al-E)^2+\eta^2}\\
\leq & \frac{p}{N\eta} + \frac{C}{\sqrt K} \sum_{j=1}^\infty\frac{1}{2^j} \leq  \frac{C}{\sqrt K}.
\end{split}\end{equation}
On the other hand, since we are away from the edges, we have $c_0 = \text{Im } m_{sc} (z) > 0$. Hence, for $K$ large enough, (\ref{eq:mz}) implies that the fluctuations of $m(z)$ from the Stieltjes transform of the semicircle law are larger than, say, $c_0/2$. Using Theorem 3.1 of \cite{ESY3}, we
know that the probability for such large fluctuations is small. More precisely, we find
\[
\begin{split}
\P\left(\Omega^c \; \textrm{and} \; \left| \left\{ \al:\, |\la_\al - E| \leq  \frac{Kp}{N} \right\} \right| \leq p\right) \leq & \; \P\left( \Im m (E+i\eta) \leq \frac{C}{\sqrt K} \right) \\
\leq &  \; \P\left(\left| m (z) -m_{sc}(z)\right| \geq \frac{c_0}{2} \right)\\
\leq & \; Ce^{-c\sqrt{N\eta}}  \leq \; Ce^{-c K^{1/4}}\,.
\end{split}
\]
Together with (\ref{eq:omega}), this concludes the proof of the proposition.
\end{proof}

\medskip

We are now ready to prove upper bounds for arbitrary moments of diagonal resolvent entries. The next lemma gives an improvement, in the bulk, of Lemma~\ref{l:boundG11}.
\begin{lemma}\label{l:bulk2}
Assume (\ref{e:gd}), fix $\tildeeta> 0$ and $0 < \kappa < 2$ and set $z = E+i\eta$. Then there exist constants $c,C,M> 0$ such that
\[
\E |G_{11}|^{q} \leq (Cq)^{cq}
\]
for all $E\in[-2+\kappa, 2-\kappa]$, $\eta > 1/N$, $N>M$, $q \in \NA$.
\end{lemma}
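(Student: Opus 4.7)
The plan is to establish a probabilistic upper bound of the form $\P(|G_{11}| \geq s) \leq C e^{-c s^{1/8}} + C e^{-c N\eta}$ valid throughout the bulk, and then to integrate this tail against the deterministic bound $|G_{11}| \leq 1/\eta$ to produce the moment estimate. The key new ingredient, compared to Lemma~\ref{l:boundG11}, is the lower bound on the local eigenvalue density provided by Proposition~\ref{p:gap}, which only holds away from the edges.

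First, I will exploit the Schur complement identity $G_{11}^{-1} = h_{11} - z - \aa_1^* G^{(1)} \aa_1$, which gives
\[
|G_{11}|^{-1} \;\geq\; \eta + \Im(\aa_1^* G^{(1)} \aa_1) \;=\; \eta + \eta \sum_{\beta} \frac{|\langle \aa_1, v_\beta \rangle|^2}{(\mu_\beta - E)^2 + \eta^2},
\]
where $(\mu_\beta, v_\beta)$ are the eigenpairs of the minor $H^{(1)}$. It thus suffices to show that the second term is bounded below by a positive constant with very high probability.

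Second, since $\rho_{sc}(E) \geq c_\kappa > 0$ for $E \in [-2+\kappa, 2-\kappa]$, Proposition~\ref{p:gap} applied with $p = \lfloor c_\kappa N \eta \rfloor$, combined with Cauchy interlacing to transfer the estimate from $H$ to $H^{(1)}$, gives: for every $K \geq K_0$,
\[
\P\bigl(|S_K| \geq c_\kappa N \eta \bigr) \;\geq\; 1 - C e^{-c K^{1/4}}, \qquad S_K := \{\beta : |\mu_\beta - E| \leq K\eta\}.
\]
Conditional on $H^{(1)}$ on this event, the quadratic form $\langle \aa_1, P_{S_K} \aa_1 \rangle = \sum_{\beta \in S_K} |\langle \aa_1, v_\beta \rangle|^2$ has mean $|S_K|/N \geq c_\kappa \eta$ and Hilbert--Schmidt norm $\sqrt{|S_K|}$, so the subgaussian form of Hanson--Wright (Proposition~\ref{p:HW}) yields
\[
\P_1\!\Bigl(\sum_{\beta \in S_K} |\langle \aa_1, v_\beta \rangle|^2 \leq \tfrac{c_\kappa \eta}{2}\Bigr) \;\leq\; 2 e^{-c |S_K|} \;\leq\; 2 e^{-c N \eta}.
\]
Using $(\mu_\beta - E)^2 + \eta^2 \leq 2 K^2 \eta^2$ on $S_K$, on the intersection of both good events
\[
\Im(\aa_1^* G^{(1)} \aa_1) \;\geq\; \frac{\sum_{\beta \in S_K} |\langle \aa_1, v_\beta \rangle|^2}{2 K^2 \eta} \;\geq\; \frac{c_\kappa}{4 K^2},
\]
so $|G_{11}| \leq 4 K^2 / c_\kappa$. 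Setting $s = 4 K^2 / c_\kappa$ produces the claimed tail bound for $s \geq s_0$.

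Third, I will combine the tail with the deterministic bound $|G_{11}| \leq 1/\eta$ via
\[
\E |G_{11}|^q \;\leq\; s_0^q + q \int_{s_0}^{1/\eta} s^{q-1} \bigl(C e^{-c s^{1/8}} + C e^{-c N \eta} \bigr)\, ds.
\]
The first integrand contributes $(Cq)^{8q}$ after the substitution $u = s^{1/8}$, whereas the second contributes at most $C e^{-c N\eta} / \eta^q$. The main obstacle is to absorb this residual term into $(Cq)^{cq}$ uniformly over all $q \in \NA$ under a lower bound $N\eta \geq M$ with $M$ sufficiently large: for $q \leq (N\eta)^{1/4}$ the bound already follows from Lemma~\ref{l:boundG11}; for $q$ so large that $(Cq)^{cq}$ exceeds $\eta^{-q}$ directly, the residual is negligible; and in the intermediate range, the exponential gain $e^{-cN\eta}$ must be balanced against the polynomial loss $\eta^{-q}$ by a careful case analysis in $q$.
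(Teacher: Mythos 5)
You follow the same overall strategy as the paper (Schur complement, a lower bound on the local eigenvalue count from Proposition \ref{p:gap}, Hanson--Wright for the quadratic form, then integration of the tail), but your choice of thresholds creates a gap that your third step cannot close. Because you ask the quadratic form $\sum_{\beta\in S_K}|\langle \aa_1,v_\beta\rangle|^2$ to reach a fixed fraction of its mean, its failure probability does not decay in $K$: with the paper's Proposition \ref{p:HW} the relevant deviation parameter is $\de\sim\sqrt{|S_K|}$, so the bound is $e^{-c\sqrt{|S_K|}}\le e^{-c\sqrt{N\eta}}$ (not $e^{-c|S_K|}$ as you write, though this is secondary). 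Your tail estimate is therefore $\P(|G_{11}|\ge s)\le Ce^{-cs^{1/8}}+Ce^{-c\sqrt{N\eta}}$, with a second term independent of $s$. After integrating against the deterministic bound $|G_{11}|\le 1/\eta$ you are left with a residual of size $e^{-c\sqrt{N\eta}}\eta^{-q}$ (or $e^{-cN\eta}\eta^{-q}$, granting your stronger concentration), and this cannot be absorbed into $(Cq)^{cq}$ uniformly in $q$: the lemma only assumes $\eta>1/N$, and even if you impose $N\eta\ge M$ for a fixed constant $M$, taking $N\eta=M$ and $q=\log N$ gives a residual of order $e^{-cM}(N/M)^{q}=e^{(1+o(1))(\log N)^2}$, while $(Cq)^{cq}=e^{cq\log (Cq)}=e^{O(\log N\,\log\log N)}$; moreover Lemma \ref{l:boundG11} only covers $q\lesssim (N\eta)^{1/4}=M^{1/4}$ in that regime. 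So no choice of constants makes your ``careful case analysis in $q$'' work; the intermediate range genuinely fails.

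The paper's proof avoids this by demanding much less of the quadratic form, so that \emph{both} failure probabilities decay in $K$ (hence in the cutoff): it requires at least $\sqrt K\,N\eta$ eigenvalues of the minor within distance $K\eta$ of $E$ (Proposition \ref{p:gap} with $p=\sqrt K N\eta$, cost $Ce^{-cK^{1/8}}$), and only that the average of the corresponding $\xi_\al=N|\aa_1\cdot\uu_\al^{(1)}|^2$ exceed $K^{-3/2}$, far below its mean $1$; by Hanson--Wright this fails with probability at most $e^{-c(\sqrt K N\eta)^{1/2}}\le e^{-cK^{1/4}}$, using only $N\eta\ge 1$. This yields the clean tail $\P(|G_{11}|\ge t)\le Ce^{-ct^{1/24}}$ for $t\ge t_0$, purely in $t$, and then $\E|G_{11}|^q\le t_0+C\int_{t_0}^{\infty}e^{-ct^{1/(24q)}}\,dt\le (Cq)^{cq}$ for all $q\in\NA$, with no case analysis and no auxiliary restriction on $N\eta$. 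You can repair your argument by weakening your requirement on the quadratic form in exactly this $K$-dependent way.
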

\begin{proof}
We fix $K\geq 1$. {F}rom (\ref{Gjj}), estimating $|1/w| \leq 1/ |\Im w|$, we find
\[
|G_{11}| \leq \left[\frac{\eta}{N} \sum_{\al=1}^{N-1} \frac{\xi_\al}{(\la_\al^{(1)}-E)^2+\eta^2}\right]^{-1}
\leq  \left[\frac{1}{NK^2\eta} \sum_{\al:\,|\la_\al^{(1)}-E| \leq K \eta} \xi_\al\right]^{-1}
\]
where $\xi_\alpha = N |\aa_1 \cdot \uu_\al^{(1)}|^2$ and $\la_\al^{(1)}, \uu_\al^{(1)}$ are the eigenvalues and the eigenvectors of the minor obtained by removing the first row and column from the original Wigner matrix. We find
\begin{align}\label{eq:bd0}
 \P\left(|G_{11}|\geq  K^3 \right)
\leq & \; \P\left(|\{\al:\, |\la_\al^{(1)}-E|\leq K\eta \}| \leq \sqrt{K}N\eta\right)  +  \P\left(\frac{1}{\sqrt{K}N \eta} \sum_{j = 1 }^{\sqrt{K}N \eta} \xi_{\al_j} \leq \frac{1}{K^{3/2}} \right).
\end{align}
{F}rom Proposition \ref{p:gap}, with $p=\sqrt K N\eta$, we obtain
\begin{equation}\label{eq:bd1}
\P\left(|\{\al:\, |\la^{(1)}_\al-E|\leq K\eta \}| \leq \sqrt{K}N\eta\right)  \leq C e^{-c K^{1/8}}
\end{equation}
for all $K > K_0$ large enough. On the other hand, the large deviation estimates in Proposition \ref{p:HW} imply (see for example Lemma 4.7 in \cite{ESY3}) that
\begin{equation}\label{eq:bd2}
\P\left(\frac{1}{\sqrt{K}N \eta} \sum_{j = 1 }^{\sqrt{K}N \eta} \xi_{\al_j} \leq \frac{1}{K^{3/2}} \right) \leq  e^{-c \sqrt{\sqrt{K}N\eta}} \leq e^{-c K^{1/4}}.
\end{equation}
Inserting (\ref{eq:bd2}) and \eqref{eq:bd1} into (\ref{eq:bd0}), we obtain that there exists a constant $t_0$ such that
\[
 \P\left(|G_{11}| \geq t \right) \leq C e^{-c t^{1/24}}
\]
for all $t \geq t_0$. This implies that
\[
\E|G_{11}|^q =   \int_0^\infty \P (|G_{11}|\geq t^{1/q}) dt  \leq  t_0 +C    \int_{t_0}^\infty e^{-ct^{1/(24q)}} dt  \leq (Cq)^{cq}.
\]
\end{proof}

Finally, we establish also bounds for arbitrary moments of $|(\E_1 G_{11}^{-1})^{-1}|$ which improve, in the bulk of the spectrum, the results of Lemma \ref{lwidetildeG}.
\begin{lemma}\label{l:bulk1}
Assume (\ref{e:gd}), fix $\tildeeta> 0$ and $0 < \kappa < 2$ and set $z = E+i\eta$. Then  there exist constants $c,C,M> 0$ such that
\begin{equation}\label{try}
\E \left|\frac{1}{\E_1 \frac{1}{G_{11}}}\right|^{q} \leq (Cq)^{cq}
\end{equation}
for all $E\in[-2+\kappa, 2-\kappa]$, $\eta > 1/N$, $N > M$, $q\in \NA$.
\end{lemma}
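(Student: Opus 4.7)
My plan is to avoid the bootstrap that constrained $q$ in Lemma \ref{lwidetildeG} and to exploit directly that, in the bulk, $\Im m^{(1)}(E+i\eta)$ is bounded below with super-polynomially small failure probability. From the proof of Lemma \ref{lwidetildeG} we have $\widetilde{G_{11}} := (\E_1 G_{11}^{-1})^{-1} = -(\tfrac{N-1}{N}m^{(1)} + z - h_{11})^{-1}$, and since $h_{11}=x_{11}/\sqrt{N}$ is real,
\[
|\widetilde{G_{11}}| \leq \frac{1}{\eta + \tfrac{N-1}{N}\Im m^{(1)}(z)} \leq \frac{C}{\Im m^{(1)}(z)}\,.
\]
Thus \eqref{try} reduces to a strong lower tail estimate for $\Im m^{(1)}(E+i\eta)$.

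To obtain such an estimate I would apply Proposition \ref{p:gap} to $H$ with the choice $p = \lceil 2N\eta\rceil$, which is $\geq 2$ because $\eta > 1/N$. The proposition yields that, with probability at least $1 - Ce^{-cK^{1/4}}$, the interval $[E - 4K\eta, E+4K\eta]$ contains at least $p$ eigenvalues of $H$. Combining this with the interlacing estimate $|\NN^{(1)}[a;b] - \NN[a;b]| \leq 1$, the same interval contains at least $p-1 \geq N\eta$ eigenvalues of $H^{(1)}$, and hence on the complement event
\[
\Im m^{(1)}(E+i\eta) \geq \frac{1}{N-1}\cdot N\eta \cdot \frac{\eta}{16 K^2\eta^2 + \eta^2} \geq \frac{c_0}{K^2}\,.
\]
Combining this with the reduction of the previous paragraph shows $\P(|\widetilde{G_{11}}| \geq t) \leq Ce^{-c_1 t^{1/8}}$ for $t$ beyond an absolute threshold $t_0$.

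The moment estimate then follows from
\[
\E|\widetilde{G_{11}}|^q = q\int_0^\infty t^{q-1}\P(|\widetilde{G_{11}}|\geq t)\,dt \leq t_0^q + Cq\int_{t_0}^\infty t^{q-1} e^{-c_1 t^{1/8}}\,dt,
\]
and the substitution $u = c_1 t^{1/8}$ together with $\Gamma(8q) \leq (8q)^{8q}$ reduces the tail integral to $(Cq)^{cq}$. The only delicate step is the second one: producing the lower bound directly on $\Im m^{(1)}$, uniformly for all $\eta > 1/N$, requires going through the interlacing inequality, since the naive substitute $|m - m^{(1)}| \leq C/(N\eta)$ is too weak when $N\eta = O(1)$.
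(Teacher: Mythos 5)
Your argument is correct and follows essentially the same route as the paper: both bound $\left|\left(\E_1 G_{11}^{-1}\right)^{-1}\right|$ by the reciprocal of the local eigenvalue count of the minor on scale $K\eta$ around $E$, invoke Proposition \ref{p:gap} to get the stretched-exponential tail $Ce^{-ct^{1/8}}$, and integrate. The only immaterial difference is that the paper applies Proposition \ref{p:gap} directly to the eigenvalues of $H^{(1)}$ (itself a Wigner matrix), whereas you apply it to $H$ with $p\simeq 2N\eta$ and pass to the minor via Cauchy interlacing.
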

\begin{proof}
Let $K\geq 1$. Using (\ref{Gjj}) and estimating $|1/w| \leq 1/|\Im  w|$, we find
 \[
\left |\frac{1}{\E_1 \frac{1}{G_{11}}}\right| \leq  \left[\frac{\eta}{N} \sum_{\al=1}^{N-1} \frac{1}{(\la_\al^{(1)}-E)^2+\eta^2}\right]^{-1} \leq  \left[\frac{|\{\al:\, |\la_\al^{(1)}-E|\leq K\eta \}| }{NK^2\eta} \right]^{-1}
 \]
 where $\la_\al^{(1)}$ are the eigenvalues of the minor obtained by removing the first row and column from the original Wigner matrix. Using Proposition \ref{p:gap} with $p= N\eta$, we conclude that, for all $K > K_0$ large enough,
 \[
  \P\left(\left |\frac{1}{\E_1 \frac{1}{G_{11}}}\right|\geq  K^2 \right)
 \leq \P\left(|\{\al:\, |\la_\al^{(1)}-E|\leq K\eta \}| \leq N\eta\right)
 \leq C e^{-c K^{\frac14}}\,,
 \]
which clearly implies the claim (\ref{try}).
\end{proof}

\section{Convergence of the eigenvalue counting function}
\label{sec:counting}

In this section we apply Theorem \ref{t:main} to prove the convergence of the counting function of the eigenvalues, as stated in Theorem \ref{t:counting}. We adapt here the approach developed in  \cite{EYY, EYY0,ekyy-rxv12}.

\begin{proof}[Proof of Theorem \ref{t:counting}]
We are going to show that there exist constants $C,c > 0$ such that
\begin{equation}\label{eq:Ebul}
\E \, \left| (n(E_2) - n(E_1)) - (n_{sc} (E_2) - n_{sc} (E_1)) \right|^q \leq (Cq)^{cq^2} \frac{(\log N)^q}{N^q}
\end{equation}
for all $-2 \leq E_1 < E_2 \leq 2$, $N > 1$ and for all $q \in \NA$. This gives
\begin{equation}\label{eq:Pbul} \P \left( \left|  (n(E_2) - n(E_1)) - (n_{sc} (E_2) - n_{sc} (E_1)) \right| \geq \frac{K \log N}{N} \right) \leq \frac{(Cq)^{cq^2}}{K^q} \end{equation}
for all $-2 \leq E_1 < E_2 \leq 2$, $K > 0$ and $q \in \NA$. In particular,
\[ \P \left( \left|  (n(2) - n(-2)) - 1 \right| \geq \frac{K \log N}{N} \right) \leq \frac{(Cq)^{cq^2}}{K^q} \]
for all $K > 0$ and $q \in \NA$. This implies that, with high probability, there are at most $K \log N$ eigenvalues outside the interval $[-2;2]$; this observation allows us to conclude the proof of Theorem~\ref{t:counting}. Indeed, for $E \leq 2$ we have $n_{sc} (E) = 0$ and therefore
\begin{equation}\label{eq:Ped1}
\begin{split} \P \left( \left| n (E) - n_{sc} (E) \right| \geq \frac{K \log N}{N} \right) &= \P \left( n (E) \geq \frac{K \log N}{N} \right) \\ &\leq \P \left( |n(2) - n (-2) - 1| \geq \frac{K \log N}{N} \right) \leq  \frac{(Cq)^{cq^2}}{K^q} \end{split} \end{equation}
because, for $E \leq -2$, $n(E) + n(2) - n(-2) \leq 1$. For $E \geq 2$, we have $n_{sc} (E) = 1$ and hence
\[ \begin{split} \P \left( \left| n (E) - n_{sc} (E) \right| \geq \frac{K \log N}{N} \right) &= \P \left( 1- n (E) \geq \frac{K \log N}{N} \right) \\ &\leq \P \left( |n(2) - n (-2) - 1| \geq \frac{K \log N}{N} \right) \leq  \frac{(Cq)^{cq^2}}{K^q} \end{split} \]
because, for $E \geq 2$, $1- n(E) + n(2) - n(-2) \leq 1$. Finally, for $-2 <  E < 2$, we have $|n(E)-n_{sc} (E)| \leq |n(E) - n (-2) -n_{sc} (E)| + n(-2)$. Since $n_{sc} (-2) = 0$, we find
\[ \begin{split}  \P  \left( \left| n (E) - n_{sc} (E) \right| \geq \frac{K \log N}{N} \right) \leq \; & \P \left( \left| (n (E) - n(-2)) - (n_{sc} (E) - n_{sc} (-2))\right| \geq  \frac{K \log N}{2N} \right) \\ &+ \P \left( n(-2) \geq \frac{K \log N}{2N} \right) \\ \leq  \; &\frac{(Cq)^{cq^2}}{K^q} \end{split} \]
from (\ref{eq:Pbul}) and (\ref{eq:Ped1}).

\medskip

To conclude the proof of Theorem \ref{t:counting}, we show (\ref{eq:Ebul}), following \cite{ekyy-rxv12}. We define the empirical eigenvalue distribution
\[
 \rho(\la) = \frac{1}{N} \sum_{\al=1}^N \delta(\la-\la_\al)\,
\]
and we denote by $\II_{[E_1,E_2]}$ the characteristic function of the interval $[E_1,E_2]$. With this notation we have
\[
(n(E_2) - n(E_1)) - ( n_{sc}(E_2) - n_{sc}(E_1)) = \int_{\RE} \II_{[E_1,E_2]}(\la) \, (\rho(\la) -\rho_{sc} (\la)) \,  d\la\,.
\]Next, we approximate $\II_{[E_1,E_2]}$ with a smooth function $f$. For a given $q \in \NA$, we choose $M = C q^8$ for a sufficiently large constant $C > 0$ and we set $\tilde \eta =M/N$ (we choose the constant $C > 0$ large enough and consider $N \geq q^8$ large enough, so that we can apply Theorem \ref{t:main} to bound $\Lambda (E+i\tildeeta)$; if $N < q^8$, (\ref{eq:Ebul}) is trivial since, by definition, $0 \leq n (E) \leq 1$ for all $E \in \RE$). We choose $f\in C_0^\infty(\RE)$  such that $\supp f \subset (E_1, E_2)$ and $f=1$ in $[E_1+\tilde \eta, E_2-\tilde\eta ]$. Notice that, since we are considering $-2 \leq E_1 < E_2 \leq 2$, we can apply part i) of Theorem \ref{t:main} to bound $|\Lambda (E+i\tildeeta)|$ for all $E \in \supp f$. We also assume that $|f'|\leq C/\tilde\eta$ and $|f''| \leq C/ \tilde\eta^2$. {F}rom the boundedness of $\rho_{sc}$, we find
\[
\left| \int_{\RE} (\II_{[E_1,E_2]}(\la) - f(\la)) \rho_{sc}(\la) d\la \right|^q \leq C^q\tilde \eta^q = \frac{C^q M^q}{N^q}  \,.
\]
Moreover, recalling the  bound
\[
| n(E+\eta) - n(E-\eta)| \leq C\eta\Im m(E+i\eta) \leq C\eta \, (1+|\Im \La(E+i\eta)|)
\]
we find \[
\left| \int_{\RE} (\II_{[E_1,E_2]}(\la) - f(\la)) \rho(\la) d\la \right| \leq C\sum_{j=1,2} \left|n(E_j+\tilde\eta) - n(E_j-\tilde \eta)\right| \leq C\tilde\eta(1+|\Im \La(E+i\tilde\eta)|)\,
\]
where, as usual, $\La (z) = m (z) -m_{sc} (z)$. Taking the $q$-th moment, we obtain
\[
\E \left| \int_{\RE} (\II_{[E_1,E_2]}(\la) - f(\la)) \rho(\la) d\la \right|^q  \leq C^q \tilde\eta^q (1+\E |\Im \La(E+i\tilde\eta)|^q) \leq \frac{C^q M^q}{N^q}
\]
where we applied the bound $\E |\Im \La(E+i\tilde\eta)|^q \leq 1$ from Lemma \ref{l:no} (using that $q \leq c (N\tilde \eta)^{1/8}$). We conclude that
\begin{equation}\label{7.13}\begin{aligned}
\E|(n(E_2) - n(E_1))-(n_{sc}(E_2)-n_{sc}(E_1))|^q
 \leq &  \frac{C^q}{N^q}+ C^q  \E\left|\int_\RE f(\la) (\rho(\la) - \rho_{sc}(\la)) d\la\right|^q\,.
\end{aligned}\end{equation}

\medskip

Next, we use Helffer-Sj\"ostrand functional calculus. We introduce  the notation $I=[E_1,E_2]$  and   $\tilde I = [E_1,E_1+\tilde \eta]\cup [ E_2-\tilde\eta ,E_2]$. Notice that $|\tilde I| = 2\tilde \eta $. For a $\chi \in C^\infty_0 (\RE)$ with $\chi (y) = 1$ on $[-1,1]$ and $\chi (y) = 0$ on $[-2,2]^c$, we can define the almost analytic extension of $f$ on $\CO$ given by
\[ \wt{f} (x+iy) = (f(x) +iyf'(x))\chi(y)\,.\]
Then, we have, see Eq. (B.12) in \cite{ersy-ejp10},
\[
 f(\la) =\frac{1}{2\pi} \int_{\RE^2}
\frac{\partial_{\bar z} \wt{f}(x+iy)}{\la-x-iy}dx \,dy = \frac{1}{2\pi} \int_{\RE^2}
\frac{iyf''(x)\chi(y)+i(f(x)+iyf'(x))\chi'(y)}{\la-x-iy}dx \,dy \,. \]
Since $f$ is real, we find
\[
\int_\RE  f(\la) (\rho(\la) - \rho_{sc}(\la)) d\la =
\frac{1}{2\pi} \Re  \int_{\RE^2}
[iyf''(x)\chi(y)+i(f(x)+iyf'(x))\chi'(y)] \La(x+iy) dx \,dy  \,.
\]
Taking the $q$-th moment, and recalling that $f$ and $\chi$ are compactly supported,  we find, similarly to Eq. (7.33) in \cite{ekyy-rxv12},
\begin{equation} \label{2+2}
\begin{aligned}
 \E \Big|\int_\RE  &f(\la) (\rho(\la) - \rho_{sc}(\la)) d\la\Big|^q
  \\ \leq &  C^q \, \E  \left\{ \left|\int_Idx \int_1^2 dy f(x) \chi'(y) \La(x+iy)   \right|^q  + \left| \int_{\tilde I} dx \int_1^2 dy f'(x) y \chi'(y) \La (x+iy) \right|^q \right.  \\
 & \left. + \left| \int_{\tilde I} dx \int_0^{\tilde \eta}  dy f''(x) y \chi(y) \Im \La(x+iy)  \right|^q
 + \left| \int_{\tilde I} dx \int_{\tilde \eta}^2 dy f''(x) y \chi(y) \Im \La(x+iy) \right|^q \right\} \, .
\end{aligned}
\end{equation}


\medskip

Recalling that $|f|\leq 1$ and $\chi'\leq C$, the first term on the right hand side of \eqref{2+2} is bounded by
\[
 \E \left|\int_Idx \int_1^2 dy f(x) \chi'(y) \La(x+iy)   \right|^q \leq
C^q   \int_Idx \int_1^2 dy  \, \E \, \left|\La(x+iy)\right|^q\,.
\]
{F}rom Theorem \ref{t:main}, part i), we find
\begin{equation}\label{eq:2+2-1}
 \E \left|\int_Idx \int_1^2 dy f(x) \chi'(y) \La(x+iy)   \right|^q \leq
\frac{(Cq)^{cq^2}}{N^q} \, .
\end{equation}

\medskip

The second term on the r.h.s. of Eq. \eqref{2+2} can be bounded similarly. In this case $|f'| \leq C/\tilde \eta$ is large but it is compensated by the volume factor $|\tilde I| \leq 2 \tilde \eta$. We find, again by part i) of Theorem \ref{t:main}, that
\begin{equation}\label{eq:2+2-2}
 \E\left| \int_{\tilde I} dx \int_1^2 dy f'(x) y \chi'(y) \La (x+iy) \right|^q \leq
C^q \tilde \eta^{-1}   \int_{\tilde I} dx \int_1^2 dy \, \E\left| \La (x+iy) \right|^q \leq \frac{(Cq)^{cq^2}}{N^q}\, .
\end{equation}

\medskip

The third term on the r.h.s. of Eq. \eqref{2+2} can be estimated by
\begin{equation}\label{might}
\E\left| \int_{\tilde I} dx \int_0^{\tilde \eta}  dy f''(x) \, y\,  \chi(y) \Im \La(x+iy)  \right|^q \leq
\frac{C^q}{\tilde \eta^2} \int_{\tilde I} dx \int_0^{\tilde \eta}  dy  \, y^q\, \E\left| \Im \La(x+iy)  \right|^q
\end{equation}
because of the assumption $|f''|\leq C/\tilde \eta^2$. Here we cannot apply directly Theorem \ref{t:main}, because we are arbitrarily close to the real axis. We follow instead an argument of Theorem 2.2 in \cite{EYY}. For any $x\in\RE$, we note that the functions $y\to y\Im m(x+iy)$ and $y\to y\Im m_{sc}(x+iy)$ are monotone increasing on $\RE_+$ (they are Stieltjes transforms of positive measures). This implies that
\[
 y |\Im \La(x+iy)| \leq \tilde \eta(\Im m(x+i\tilde \eta) + \Im m_{sc}(x+i\tilde \eta)) \leq \tilde \eta \,
 (2+ |\Im \La (x+i \tilde\eta )|) \,.
\]
{F}rom \eqref{might}, we get therefore
\begin{equation}\label{eq:2+2-3}
\begin{split}
\E \Big| \int_{\tilde I} dx \int_0^{\tilde \eta}  dy f''(x) \, &y\,  \chi(y) \Im \La(x+iy)  \Big|^q \\ \leq \; &\frac{C^q}{\tilde \eta^2} \int_{\tilde I} dx \int_0^{\tilde \eta}  dy  \, \tilde \eta^q \, (1+\E|\Im\La(x+i\tilde \eta)|^q) \leq
C^q \tilde \eta^q \leq \frac{C^q M^q}{N^q}
\end{split}
\end{equation}
because, by Lemma \ref{l:no}, $\E|\Im\La(x+i\tilde \eta)|^q \leq 1$ (since $\tilde \eta$ was chosen so that $q \leq c (N \tilde \eta)^{1/8}$).

\medskip

To estimate the fourth term on the r.h.s. of Eq. \eqref{2+2}, we integrate by parts, first in $x$ and then in $y$. We obtain
\begin{equation}\label{tell}
\begin{split}
\E \, \Big| \int_{\tilde I} dx \int_{\tilde \eta}^2 dy &f''(x) y \chi(y) \Im \La(x+iy) \Big|^q  \\ \leq \; &
 \E \, \left| \int_{\tilde I} dx \int_{\tilde \eta}^2 dy f''(x) y \chi(y) \La(x+iy) \right|^q \\   \leq \; &
 C^q \, \E \, \left\{ \left| \int_{\tilde I} dx  f'(x) \tilde \eta   \La(x+i\tilde \eta) \right|^q + \left| \int_{\tilde I} dx \int_{\tilde \eta}^2 dy f'(x) \chi(y) \La(x+iy) \right|^q  \right. \\
& \left. \hspace{1cm} + \left| \int_{\tilde I} dx \int_{1}^2 dy f'(x) y \chi'(y) \La(x+iy) \right|^q   \right\}
\end{split}
\end{equation}
where we used the observation that $\chi (\tilde \eta) =1$ and $\chi'(x)=0$ for $x \not \in [1;2]$ in the first and, respectively, in the third term. By part i) of Theorem \ref{t:main} (and using that $|f' (x)| \leq C / \tilde \eta$) the first term on the r.h.s. of (\ref{tell}) can be bounded by
\begin{equation}\label{song1}
 \E\left| \int_{\tilde I} dx  f'(x) \tilde \eta   \La(x+i\tilde \eta) \right|^q \leq
\frac{C^q}{\tilde \eta}  \int_{\tilde I} dx \,  \tilde \eta^q \, \E\left|  \La(x+i\tilde \eta) \right|^q \leq
\frac{(Cq)^{cq^2}}{N^q} \, .
\end{equation}
The third term on the r.h.s. of (\ref{tell}), on the other hand,
can be controlled by
\begin{equation}\label{song2}
\E\left| \int_{\tilde I} dx \int_{1}^2 dy f'(x) y \chi'(y) \La(x+iy) \right|^q   \leq
\frac{C^q}{\tilde \eta} \int_{\tilde I} dx \int_{1}^2 dy  \, \E\left| \La(x+iy) \right|^q \leq \frac{(Cq)^{cq^2}}{N^q}\, .
\end{equation}

As for the second term on the r.h.s. of Eq. \eqref{tell}, we find  that
 \[
\begin{aligned}
\E \, \Big| \int_{\tilde I} dx \int_{\tilde \eta}^2 dy &f'(x) \chi(y) \La(x+iy) \Big|^q \\
\leq  \; & \int_{\tilde I} dx_1 \cdots\int_{\tilde I} dx_q \int_{\tilde \eta}^2 dy_1\cdots \int_{\tilde \eta}^2 dy_q  \, \prod_{j=1}^q |f'(x_j)| |\chi(y_j)| \,  \E  \prod_{j=1}^q |\La(x_j+iy_j)| \\
\leq \; & \prod_{j=1}^q \int_{\tilde I} dx_j  \int_{\tilde \eta}^2 dy_j \,  |f'(x_j)| |\chi(y_j)| \,  \left(\E |\La(x_j+iy_j)|^q \right)^{1/q}
\end{aligned}
 \]
applying H\"older's inequality to the expectation. With the usual bounds on $|f'|$ and $|\chi|$, we obtain
 \begin{equation}
\begin{aligned}\label{song3}
 \E \, \Big| \int_{\tilde I} dx \int_{\tilde \eta}^2 dy &f'(x) \chi(y) \La(x+iy) \Big|^q \\
\leq &  \frac{(Cq)^{cq^2}}{\tilde \eta^q} \left[ \int_{\tilde I} dx \int_{\tilde \eta}^2 dy \, \frac{1}{Ny} \right]^q
\leq  (Cq)^{c q^2} \frac{(\log \tilde \eta)^q}{N^q} \leq (Cq)^{cq^2}  \frac{(\log N)^q}{ N^q} \,.
\end{aligned}
 \end{equation}
 Inserting \eqref{song1}, \eqref{song2} and \eqref{song3} in \eqref{tell} we find
 \[
\begin{aligned}
  \E\left| \int_{\tilde I} dx \int_{\tilde \eta}^2 dy f''(x) y \chi(y) \Im \La(x+iy) \right|^q  \leq
(Cq)^{cq^2} \frac{(\log N)^q}{ N^q}\,.
\end{aligned}
\]
Inserting last equation, together with (\ref{eq:2+2-1}), (\ref{eq:2+2-2}) and (\ref{eq:2+2-3}), into (\ref{2+2}), we get
\[  \E \Big|\int_\RE  f(x) (\rho(x) - \rho_{sc}(x)) dx\Big|^q \leq
(Cq)^{cq^2} \frac{(\log N)^q}{ N^q}\,. \]
Combined with (\ref{7.13}), this implies (\ref{eq:Ebul}).
\end{proof}

\section{Rigidity of the semicircle law}
\label{sec:rigidity}

The goal of this section is to prove Theorem \ref{t:rigidity}, giving precise bounds on the fluctuation of eigenvalues of a Wigner matrix with respect to the locations predicted by the semicircle law. We will need, here, bounds on the location of the largest and smallest eigenvalues. As a first rough bound, we will use Lemma 7.2 in \cite{EYY1}, which states that, for any $x > 3$ and for an appropriate $\varepsilon > 0$,
\begin{equation}\label{eq:lm72} \E \, n (-x) \leq e^{-N^\varepsilon \log \, x}, \quad \text{and } \quad \E \,  n (x) \geq 1-e^{-N^\varepsilon \log \, x} \, .  \end{equation}

\medskip

To obtain better estimates for the extremal eigenvalues, we make use of the following bounds for the Stieltjes transform of the semicircle law. For $E \in \RE$, we set $\kappa = ||E|-2|$. For any fixed $E_0 >2$ and $\eta_0>0$, there exist constants $c,C > 0$ such that (see Eq.~(4.3) in \cite{ekyy-rxv12})
\[
c\sqrt{\kappa+\eta}\leq |1-m_{sc}^2(E+i\eta)| \leq C\sqrt{\kappa +\eta} \qquad \forall \, |E|\leq E_0\,,\; 0< \eta \leq \eta_0
\]and
\begin{equation}\label{msc1-2}
c\frac{\eta}{\sqrt{\kappa +\eta}}\leq  \Im m_{sc}(E+i\eta) \leq C \frac{\eta}{\sqrt{\kappa +\eta}} \qquad \forall\, 2\leq |E|\leq E_0 \,,\; 0<\eta\leq \eta_0 \, .
\end{equation}

\medskip

{F}rom Theorem \ref{t:main} part ii), we have
\[ \E|\text{Im } \La|^{2q} \leq \frac{(Cq)^{cq^2}}{(N\eta)^{2q}},\]
for all $q \leq c (N\eta)^{1/8}$. Inserting this estimate and the bound (\ref{msc1-2}) into the r.h.s. of (\ref{e:R}), we conclude that, for $2\leq |E|\leq E_0$ and $\eta \leq \eta_0$ with $N\eta \geq M$ large enough, we have
\[
\begin{aligned}
\E|R|^q \leq & (Cq)^{cq^2}\left(  \max \left\{\frac{1}{(N\eta)^{4q}},\frac{(\Im m_{sc})^{2q} +\E|\Im\La|^{2q}}{(N\eta)^{2q}}\right\}   + \frac{1}{N^{2q}}\right)^{\frac12} \\
\leq &  (Cq)^{cq^2} \left( \frac{1}{(N\eta)^{4q}} + \frac{1}{(N\sqrt{\kappa+\eta})^{2q}}  + \frac{1}{N^{2q}}\right)^{\frac12} \\
\leq &  (Cq)^{cq^2} \left( \frac{1}{(N\eta)^{2q}} + \frac{1}{(N\sqrt{\kappa+\eta})^{q}}\right)\,
\end{aligned}
\]
for all $q \leq c (N\eta)^{1/8}$. Plugging back into (\ref{e:Lambda1}), we obtain a stronger bound for the imaginary part of $\Lambda$, given by
\begin{equation}\label{stronger}
\E|\Im\La|^q \leq  (Cq)^{cq^2} \, \left( \frac{1}{(N(\kappa+\eta))^{q}} + \frac{1}{((N\eta)^{2}\sqrt{\kappa+\eta})^{q}} \right)\,,
\end{equation}
and valid for $2\leq |E|\leq E_0$, $\eta\leq \eta_0$ with $N\eta \geq M$ large enough, $1\leq q \leq c(N\eta)^{1/8}$ (a similar bound is given in Eq. (2.20) in \cite{ekyy-rxv12}). 

\medskip

The next lemma (which is similar to Theorem 7.3 in \cite{ekyy-rxv12}) estimates the probability of having eigenvalues outside the interval $[-2,2]$, making use of the bound (\ref{stronger}).
\begin{lemma} \label{l:extremalevs}
Assume (\ref{e:gd}). Then  there exist $C,c, \varepsilon >0$ such that
\begin{equation}\label{extremalevs}
\P\left(\max_{\al}|\la_\al|\geq 2 +  \frac{K}{N^{2/3}} \right)\leq \frac{(Cq)^{cq^2}}{K^q}
\end{equation}
for all $K > 0$ and $q \in \NA$ with $q \leq N^{\varepsilon}$.
\end{lemma}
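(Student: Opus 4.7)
By symmetry, it suffices to control $\P(\lambda_{\max}\geq 2+K/N^{2/3})$, and we may assume $K\geq 1$ since the bound is trivial otherwise. I would first use the rough estimate (\ref{eq:lm72}) with $x=3$ to discard the event $\{\lambda_{\max}\geq 3\}$, whose probability is $\leq e^{-N^\varepsilon}$ and thus negligible compared with the target. It then remains to bound the probability of $\mathcal E_K=\{\exists\,\alpha:\,\lambda_\alpha\in[2+K/N^{2/3},3]\}$, for which I propose a dyadic decomposition $[2+K/N^{2/3},3]=\bigcup_{L} I_L$ with $I_L=[2+L/N^{2/3},\,2+2L/N^{2/3}]$ and $L=K,2K,4K,\dots\lesssim N^{2/3}$.

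At each dyadic scale $L$, I would introduce the spectral scale $\eta_L=\delta L^{1/4}/N^{2/3}$, where $\delta>0$ is a small universal constant, and cover $I_L$ by $\sim L^{3/4}$ sub-intervals of half-width $\eta_L/2$ centered at points $\{E_j\}$. The key observation is that if some $\lambda_\alpha$ lies within $\eta_L/2$ of $E_j$, then that eigenvalue alone forces $\Im m(E_j+i\eta_L)\geq c_1/(N\eta_L)$, whereas (\ref{msc1-2}) gives $\Im m_{sc}(E_j+i\eta_L)\leq C\delta/(L^{1/4}N^{1/3})$ because $\kappa_j=|E_j|-2\geq L/N^{2/3}\gg\eta_L$. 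Fixing $\delta$ so that $C\delta\leq c_1/(2\delta)$ makes the single-eigenvalue contribution dominant and produces $|\Im\Lambda(E_j+i\eta_L)|\geq c_2/(L^{1/4}N^{1/3})$. Applying Markov's inequality with (\ref{stronger}), and checking that both $N(\kappa_j+\eta_L)$ and $(N\eta_L)^2\sqrt{\kappa_j+\eta_L}$ are of order $LN^{1/3}$, yields
\[
\P\!\left(|\Im\Lambda(E_j+i\eta_L)|\geq \frac{c_2}{L^{1/4}N^{1/3}}\right)\leq \frac{(Cq)^{cq^2}}{L^{3q/4}},
\]
valid for $q\leq c(N\eta_L)^{1/8}\sim N^{1/24}$, a range that includes all $q\leq N^\varepsilon$ once $\varepsilon$ is small enough. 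Union bounds over the $\sim L^{3/4}$ centers $E_j$ and over the $O(\log N)$ dyadic values of $L$ (the geometric sum in $L$ converging for $q\geq 2$) then give, for every integer $q\geq 2$,
\[
\P(\mathcal E_K)\leq \frac{(Cq)^{cq^2}}{K^{3(q-1)/4}}.
\]

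To convert the exponent $3(q-1)/4$ into the sharp $p$ asked for in the statement, I would reparametrize: given $p\in\NA$, apply the preceding bound with $q=q_p:=\lceil 4p/3\rceil+1$, so that $q_p\geq 2$, $3(q_p-1)/4\geq p$, and $q_p\leq 2p+2$. Since $(Cq_p)^{cq_p^2}\leq (C'p)^{c'p^2}$ after adjusting the constants, this yields $\P(\mathcal E_K)\leq (C'p)^{c'p^2}/K^p$ for every $p\geq 1$, and the restriction $q_p\leq N^\varepsilon$ translates into $p\leq N^{\varepsilon'}$ upon replacing $\varepsilon$ by a slightly smaller exponent. The main obstacle in this plan is the $L^{3/4}$ covering loss at each dyadic scale, which is responsible for the suboptimal exponent $3(q-1)/4$ in the intermediate bound; recovering the stated exponent in the final probability is then achieved not by sharpening the covering but through the elementary reparametrization above.
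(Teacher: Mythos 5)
Your argument is essentially the paper's own proof with different bookkeeping: both use the rough a priori estimate (\ref{eq:lm72}) to confine the problem to a compact window beyond the edge, then exploit the fact that an eigenvalue within distance $\eta$ of a point $E$ outside $[-2,2]$ forces $\Im m(E+i\eta)\gtrsim (N\eta)^{-1}$ while (\ref{msc1-2}) keeps $\Im m_{sc}$ much smaller for a suitably chosen $\eta\ll\kappa$, and then apply Markov with the improved moment bound (\ref{stronger}), sum over locations, and recover the stated exponent by renaming $q$ (the paper passes from $q/2-2$ to $q$; you pass from $3(q-1)/4$ to $q$). The only structural difference is the decomposition: the paper tiles $[-E_0,-2-K/N^{2/3}]$ by intervals of width $N^{-2/3}$ and chooses $\eta_j=(K+j)^{1/8}N^{-2/3}$ larger than the tile width, so a single test point per tile suffices and the sum over $j$ converges like $\sum_j (K+j)^{-q/2}$; you use dyadic blocks with $\eta_L\sim L^{1/4}N^{-2/3}$ and pay an extra $L^{3/4}$ covering factor inside each block, which is harmless for the same reason. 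Your scale choices, the verification that both error terms in (\ref{stronger}) are of order $(LN^{1/3})^{-1}$, and the constraint $q\lesssim (N\eta_L)^{1/8}\gtrsim N^{1/24}$ all check out.

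One point needs tightening: your dismissal of $\{\lambda_{\max}\geq 3\}$ (equivalently the regime $K\gtrsim N^{2/3}$, where the threshold $2+K/N^{2/3}$ exceeds $3$ and your dyadic range is empty) on the grounds that $e^{-N^{\varepsilon}}$ is ``negligible compared with the target'' is not uniform in $K$: for $K$ growing faster than any fixed power of $N$ (say $K=e^{N}$ with $q=1$), $e^{-N^{\varepsilon}}$ does not beat $(Cq)^{cq^2}K^{-q}$. The fix is exactly the paper's treatment of $K>N$: apply (\ref{eq:lm72}) at the $K$-dependent point $x\sim K/N^{2/3}$, so the resulting bound $N e^{-N^{\varepsilon}\log(K/N^{2/3})}$ carries a $\log K$ in the exponent and is $\leq K^{-q}$ for all $q\leq N^{\varepsilon'}$. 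With that adjustment your proof is complete.
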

\begin{proof}
We prove that
\[
\P\left(\la_1\leq - 2 -  \frac{K}{N^{2/3}} \right)\leq  \frac{(Cq)^{cq^2}}{K^q} \, ,
\] where $\la_1$ is the smallest eigenvalue of $H$. The proof of the analogous bound for the largest eigenvalue $\la_N$ is similar and we omit it.

\medskip

We can assume that $K$ is large enough, since otherwise (\ref{extremalevs}) is trivial. We can also assume that $K \leq N$, since otherwise we can apply (\ref{eq:lm72}) to estimate
\[
\P \left(\lambda_1 \leq -2- \frac{K}{N^{{2/3}}}\right) \leq  \P \left(n\left(-\frac{K}{N^{{2/3}}}\right)\geq  \frac{1}{N}\right) \leq N \,\E \, n\left(-\frac{K}{N^{{2/3}}}\right) \leq e^{- N^\ve \log K } \leq
\frac{1}{K^q} 
\]
for all $q\leq N^\ve$.

\medskip

We fix $E_0 > 3$. {F}rom (\ref{eq:lm72}) we find that, for any small $\ve<0$,   $K\leq N$, $q\leq N^\ve$
\begin{equation}\label{eq:minE0} \P (\la_1 \leq -E_0) \leq \P \left(n(-E_0)\geq  \frac{1}{N}\right) \leq N \E n(-E_0)\leq e^{-N^{2\ve} \log E_0} \leq  e^{-N^{\ve} \log K} \leq \frac{1}{K^q}\,.\end{equation}

\medskip

We still have to bound the probability that $-E_0 < \la_1 \leq -2-K/N^{2/3}$. We define
\begin{equation}\label{eq:etajbds}
 \kappa_j = \frac{(K+j)}{N^{2/3}}\;;\qquad \eta_j=\frac{(K+j)^\frac18}{N^{2/3}}\,.
\end{equation}
 Moreover we define the intervals $I_j = \left[-2-\kappa_{j+1}, -2-\kappa_j\right]$ for $j=0, \dots , j_{max},$ where $j_{max}$ is the smallest integer  with $2+\kappa_{j+1}\geq E_0$ (clearly $j_{max}\leq E_0 N^{2/3}$). We denote by $x_j = -2 - \kappa_j$ the endpoints of the intervals $I_j$.

\medskip

We observe that
\begin{equation}\label{8.11}
 \P\left(-E_0\leq \la_1\leq - 2 -  \frac{K}{N^\frac{2}{3}} \right) \leq \sum_{j=0}^{j_{max}} \P \left(\la_1 \in I_j\right)\,.
\end{equation}
We note that $|I_j|=N^{-2/3}$, so that  if $\la_1\in I_j$ one has $|\la_1 - x_j|\leq |I_j| \leq \eta_j$.  Hence, setting $z_j=x_j+i\eta_j$, the event $\lambda_1 \in I_j$ implies that
\[
\Im m(z_j) = \frac1N \sum_\al \frac{\eta}{(\la_\al - x_j)^2 + \eta_j^2} \geq \frac{1}{2N\eta_j}\,.
\]
On the other hand, since  $2\leq|x_j|\leq 2 E_0$, the bound \eqref{msc1-2} gives
\[
\Im m_{sc}(z_j) \leq \frac{C\eta_j}{\sqrt \kappa_j}\,.
\]
Since, by (\ref{eq:etajbds}),
\[ \frac{C\eta_j}{\sqrt{\kappa_j}} \leq \frac{1}{4N \eta_j} \]
for $K > K_0$ large enough (depending only on the value of $C$), we conclude that, if $\lambda_1 \in I_j$,
\begin{equation}\label{eq:Im-dif} \Im m (z_j) - \Im m_{sc} (z_j) \geq \frac{1}{2N \eta_j} - \frac{C\eta_j}{\sqrt \kappa_j} \geq \frac{1}{4N\eta_j}\,. \end{equation}
Next we observe that, again by the definition of $\kappa_j$ and $\eta_j$, and for $K>K_0$ large enough,
\[ \frac{1}{N\eta_j} \geq \frac{C(K+j)^\frac12}{N \kappa_j} \geq \frac{ C(K+j)^\frac12 }{N (\kappa_j + \eta_j)} \]
and
\[ \frac{1}{N\eta_j} \geq  \frac{C(K+j)^\frac12}{(N\eta_j)^2 \sqrt{\kappa_j}}  \geq \frac{C(K+j)^\frac12}{(N\eta_j)^2 \sqrt{\kappa_j + \eta_j}} \, . \]
Hence, (\ref{eq:Im-dif}) implies that
\[  \Im m (z_j) - \Im m_{sc} (z_j) \geq C (K+j)^\frac12 \left( \frac{1}{N (\kappa_j + \eta_j)} + \frac{1}{(N\eta_j)^2 \sqrt{\kappa_j + \eta_j}} \right)\,. \]
With Eq. \eqref{8.11} and by the bound \eqref{stronger} we conclude that
\[
\begin{aligned}
\P\left(-E_0\leq \la_1\leq - 2 -  \frac{K}{N^\frac{2}{3}} \right) \leq & \sum_{j=0}^{j_{max}} \P \left(\la_1 \in I_j\right)  \\
\leq &
 \sum_{j=0}^{j_{max}}  \P \left(| \Im\Lambda (z_j)| \geq C (K+j)^\frac12 \left( \frac{1}{N (\kappa_j + \eta_j)} + \frac{1}{(N\eta_j)^2 \sqrt{\kappa_j + \eta_j}} \right) \right) \\
 \leq & \sum_{j=0}^{j_{max}}  \frac{(Cq)^{cq^2}}{(K+j)^{q/2}} \leq \frac{(Cq)^{cq^2}}{K^{q/2-2}}\sum_{j=0}^{j_{max}} \frac1{j^2} \leq \frac{(Cq)^{cq^2}}{K^{q/2-2}}\,.
\end{aligned} \]
Changing $q/2-2 \to q$, this concludes, together with \eqref{eq:minE0}, the proof of the lemma.
\end{proof}

We are now ready to prove Theorem \ref{t:rigidity}; we proceed here similarly as in the proof of Theorem~7.6 in \cite{ekyy-rxv12}. 
\begin{proof}[Proof of Theorem \ref{t:rigidity}]
We prove the theorem only for $\al\leq N/2$, i.e.,  $\hat \al = \al$, the case $\al>N/2$ is similar.
We will make use of the inequalities
\begin{equation}\label{going}
c(2+x)^{\frac32} \leq n_{sc}(x) \leq C(2+x)^{\frac32}, \quad \text{ and } \quad
c n_{sc}(x)^{\frac13} \leq \rho_{sc}(x) \leq C n_{sc}(x)^{\frac13}
\end{equation}
which hold true for every $x \in [-2,1]$ (see Eq. (7.31) in \cite{ekyy-rxv12}).
In particular, the second inequality implies that  \begin{equation}\label{eq:rho13} c (\al/N)^{\frac13}\leq\rho_{sc}(\ga_\al)\leq C(\al/N)^{\frac13}, \end{equation} for any $\alpha \leq N/2$.

\medskip

We have
\[ \begin{split} \P \left(|\lambda_\alpha - \gamma_\alpha| \geq \frac{K \log N}{N} \left( \frac{N}{\alpha} \right)^{1/3} \right) \leq \; &\P \left( |\lambda_\alpha- \gamma_\alpha| \geq  \frac{K \log N}{N} \left( \frac{N}{\alpha} \right)^{1/3} \; \text{and} \; \lambda_\alpha \geq \gamma_\alpha \right) \\ &+ \P  \left( |\lambda_\alpha- \gamma_\alpha| \geq \frac{K \log N}{N} \left( \frac{N}{\alpha} \right)^{1/3} \; \text{and} \; \lambda_\alpha < \gamma_\alpha \right) \\ = \; &\text{A} + \text{B}\,. \end{split} \]
We consider first the term $\text{A}$.  We set
\[
\ell = \frac{K\log N}{N} \left(\frac{N}{\alpha} \right)^{1/3}\,. 
\]
{F}rom $|\lambda_\alpha - \gamma_\alpha| \geq \ell$ and $\lambda_\alpha \geq  \gamma_\alpha$ we find $\lambda_\alpha \geq  \gamma_\alpha + \ell$. This implies that $n \left( \gamma_\alpha + \ell \right) \leq n_{sc} (\gamma_\alpha)$. Therefore, by the mean value theorem, there exists $x^* \in [\gamma_\alpha ; \gamma_\alpha + \ell ]$ such that
\begin{equation}\label{eq:n-n} \begin{split} n_{sc} \left(\gamma_\alpha + \ell \right) - n \left( \gamma_\alpha + \ell  \right) &= n_{sc} (\gamma_\alpha) + \rho_{sc} (x^*) \ell -  n \left( \gamma_\alpha + \ell \right) \geq \rho_{sc} (x^*) \frac{K \log N}{N} \left( \frac{N}{\alpha} \right)^{1/3}\,. \end{split} \end{equation}
If $\gamma_\alpha + \ell > 1$, then $\lambda_\alpha \geq \gamma_\alpha + \ell$ implies $\lambda_\alpha > 1$. On the other hand, if $\gamma_\alpha + \ell \leq 1$, then $x^* \in [\gamma_\alpha, 1]$ and (\ref{eq:rho13}) implies that
\[ \rho_{sc} (x^*) \geq  \min (\rho_{sc} (\gamma_\alpha) , \rho_{sc} (1)) \geq c \left( \frac{\alpha}{N} \right)^{1/3} \,. \]
{F}rom (\ref{eq:n-n}), we conclude that
\[ \text{A} \leq \P (\lambda_\alpha > 1) + \P \left( \left| n \left(\gamma_\alpha + \ell \right) - n_{sc} \left(\gamma_\alpha + \ell \right) \right| \geq  c \frac{K \log N}{N} \right)\,. \]
The event $\lambda_\alpha > 1$ implies that $n(1) < 1/2$ and therefore that $|n(1) - n_{sc} (1)| > n_{sc} (1) - 1/2 > c$ for some $c > 0$. Theorem \ref{t:counting} implies therefore that
\[\begin{split}  \text{A} &\leq \P (|n (1) - n_{sc} (1)| > c) + \P  \left( \left| n \left(\gamma_\alpha + \ell \right) - n_{sc} \left(\gamma_\alpha + \ell \right) \right| \geq c \frac{K \log N}{N} \right) \\ &\leq (Cq)^{cq^2} \left[  \left( \frac{\log N}{N} \right)^q + \frac{1}{ K^q} \right] \,.\end{split} \]
If, say, $K \leq 10 N/(\log N)$, this implies that
\begin{equation}\label{eq:A} \text{A} \leq \frac{(Cq)^{cq^2}}{K^q}\, . \end{equation}
If, on the other hand, $K > 10N / (\log N)$, then (\ref{eq:A}) follows from (\ref{eq:lm72}) since
\[\begin{split}
 \P \bigg(|\lambda_\alpha - &\gamma_\alpha| \geq  \frac{K \log N}{N} \left( \frac{N}{\alpha} \right)^{1/3}  \text{ and } \lambda_\alpha > \gamma_\alpha \bigg)  \\ &\leq \P (\lambda_N > 8) \leq \P (1-n (8) > 1/N) \leq N \, \left(1 - \E \, n (8) \right) \leq N e^{-C N^\varepsilon} \leq \frac{(Cq)^{cq}}{N^q} \end{split} \]
 for all $q \in \NA$.

\medskip

Next, we estimate the term $\text{B}$. We distinguish two cases, $\alpha \leq \tilde{c} K \log N$ and $\alpha > \tilde{c} K \log N$, for some sufficiently small constant $\tilde{c} > 0$.

\medskip

{\it Case 1.} We assume here that $\alpha \leq \tilde{c} K \log N$. {F}rom (\ref{going}), we get $\gamma_\alpha \leq -2 + C (\alpha/N)^{1/3}$
Hence, $|\lambda_\alpha - \gamma_\alpha| \geq \ell$ and $\lambda_\alpha < \gamma_\alpha$ imply that
\[ \begin{split} \lambda_\alpha &< \gamma_\alpha - \ell \\ &\leq -2 + C \left( \frac{\alpha}{N} \right)^{1/3} - \left( \frac{K \log N}{N} \right) \left(\frac{N}{\alpha} \right)^{1/3} \\ & \leq -2 + \frac{1}{N^{2/3}} \left(C \alpha^{2/3} - \frac{K \log N}{\alpha^{1/3}} \right) \leq -2 - \frac{1}{2} \left( \frac{K \log N}{N} \right)^{2/3} \end{split} \]
if the constant $\tilde{c}$ is small enough. {F}rom Lemma \ref{l:extremalevs}, we conclude that
\begin{equation}\label{eq:B1} \P (|\lambda_\alpha - \gamma_\alpha| >\ell, \; \lambda_\alpha < \gamma_\alpha \; \text{ and }
\alpha \leq \tilde{c} K \log N) \leq \frac{(Cq)^{cq^2}}{K^q} \,
\end{equation}
for all $q \leq N^\varepsilon$.
\medskip

{\it Case 2.} Assume now that $\alpha > \tilde{c} K \log N$. {F}rom (\ref{going}), we have $\gamma_\alpha\geq -2 + c \,  (\alpha/N)^{2/3}$, for some small constant $c > 0$. We define
\[ y = -2 + \frac{c}{2} \left( \frac{\alpha}{N} \right)^{2/3} \]
and consider the cases $\gamma_\alpha - \ell > y$ and $\gamma_\alpha - \ell \leq y$ separately. Let us first assume $\gamma_\alpha-\ell > y$. Then $\lambda_\alpha < \gamma_\alpha - \ell$ implies that $n (\gamma_\alpha -\ell) \geq n_{sc} (\gamma_\alpha)$. Hence, from the mean value theorem, we find $x^* \in [\gamma_\alpha - \ell ; \gamma_\alpha] \subset [y ; \gamma_\alpha]$ such that
\[ n (\gamma_\alpha - \ell) - n_{sc} (\gamma_\alpha - \ell) = n(\gamma_\alpha - \ell) - n_{sc}(\gamma _\alpha) + \rho_{sc} (x^*) \ell \geq \rho_{sc} (x^*) \frac{K \log N}{N} \left( \frac{N}{\alpha} \right)^{1/3}\,. \]
Since $\rho_{sc}$ is increasing on $(-\infty;0]$, we have
\[ \rho_{sc}(x^*) \geq \rho_{sc} (y) \geq c \sqrt{2+y} \geq c \left( \frac{\alpha}{N} \right)^{1/3}\,. \]
{F}rom Theorem \ref{t:counting}, we conclude that
\begin{equation}\label{eq:B2} \begin{split} \P \bigg( |\lambda_\alpha - \gamma_\alpha| > \ell , \; \lambda_\alpha &< \gamma_\alpha , \; \alpha > \tilde{c} K \log N \; \text{and } \gamma_\alpha - \ell > y \bigg) \\ & \leq \P \left( n(\gamma_\alpha - \ell) - n_{sc}(\gamma_\alpha - \ell) \geq c \frac{K \log N}{N} \right) \leq \frac{(Cq)^{cq^2}}{K^q} \end{split} \end{equation}
for any $q \in \NA$. Finally, we consider the case $\gamma_\alpha - \ell \leq y$. Then $\lambda_\alpha < \gamma_\alpha - \ell$ also implies $\lambda_\alpha < y$ and therefore $n(y) -n_{sc} (\gamma_\alpha) \geq 0$. Hence, from the mean value theorem, there exists $x^* \in [y, \gamma_\alpha]$ with
\begin{equation}\label{eq:nnsc-last} n(y) - n_{sc} (y) = n (y) - n_{sc} (\gamma_\alpha) + \rho_{sc}(x^*) (\gamma_\alpha - y) \geq c \rho_{sc} (x^*) \left( \frac{\alpha}{N} \right)^{2/3} \end{equation}
by the very definition of $y$. Since $x^* > y$, we find again
\[ \rho_{sc} (x^*) \geq c \left( \frac{\alpha}{N} \right)^{1/3} \, . \]
With (\ref{eq:nnsc-last}), we conclude from Theorem \ref{t:counting} that
\begin{equation}\label{eq:B3} \begin{split} \P \bigg( | \lambda_\alpha - &\gamma_\alpha | > \ell , \; \lambda_\alpha < \gamma_\alpha , \; \alpha > \tilde{c} K \log N \text{ and } \gamma_\alpha - \ell \leq y \bigg) \\ & \leq \P \left( n (y) - n_{sc} (y) > c \left( \frac{\alpha}{N} \right) \right) \leq \P \left(  n (y) - n_{sc} (y) > c  \frac{K \log N}{N} \right) \leq \frac{(Cq)^{cq^2}}{K^q} \, . \end{split} \end{equation}
Combining (\ref{eq:B1}), (\ref{eq:B2}) and (\ref{eq:B3}), we obtain that $\text{B} \leq (Cq)^{cq^2} K^{-q}$.
\end{proof}

\appendix

\section{Large deviation estimates for quadratic forms}

In the following proposition we recall an inequality for the fluctuations of quadratic forms, this is a well known result due to Hanson and Wright. For the proof we refer to \cite[Prop. 4.5]{ESY3}, see also \cite[App. B]{EYY1} and \cite{hw71}.
\begin{proposition}
\label{p:HW}
For $j=1,\dots ,N$ let $x_{j} = \Re  x_j + i \Im  x_j$, where $\{ \Re x_{j} , \Im  x_j \}_{j=1}^N$ is a sequence of $2N$ real iid random variables, whose common distribution $\nu$ has subgaussian decay. Let $A = (a_{ij})$ be a $N\times N$ complex  matrix. Then there exist constants
$c,C > 0$ such that, for any $\de >0$
\[
\P\left(\left|\sum_{i,j=1}^Na_{ij}\left(x_i\bar x_j - \E x_i \bar x_j\right) \right|\geq \de \sqrt{\Tr A^* A} \right) \leq C e^{-c\min\{\de,\de^2\}}\,.
\]
\end{proposition}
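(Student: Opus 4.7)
The bound is the (complex) Hanson--Wright concentration inequality for a quadratic form in subgaussian entries. I would prove it by the classical scheme of separating diagonal and off-diagonal contributions and handling each by concentration.

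Write $Z = Z_d + Z_o$, where
\[
Z_d = \sum_{i=1}^N a_{ii}\bigl(|x_i|^2 - \E|x_i|^2\bigr), \qquad Z_o = \sum_{i \neq j} a_{ij}\, x_i \bar x_j,
\]
and note that it is enough to prove the claimed tail for $Z_d$ and $Z_o$ separately (with $\delta$ replaced by $\delta/2$). For $Z_d$ the summands are independent, centered, and subexponential with a universal norm (a direct consequence of the subgaussian assumption \eqref{e:gd}), so Bernstein's inequality yields
\[
\P\bigl(|Z_d| \geq t\bigr) \leq 2\exp\!\left(-c\min\!\left(\frac{t^2}{\sum_i |a_{ii}|^2},\; \frac{t}{\max_i |a_{ii}|}\right)\right).
\]
Using $\sum_i|a_{ii}|^2 \leq \Tr A^*A$ and $\max_i|a_{ii}| \leq \sqrt{\Tr A^*A}$, the choice $t=(\delta/2)\sqrt{\Tr A^*A}$ gives the desired bound $C e^{-c\min(\delta,\delta^2)}$.

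For $Z_o$ I would apply the standard decoupling argument (handling real and imaginary parts separately so as to preserve independence in the complex setting): by a Rademacher splitting one reduces to controlling the decoupled form $\widetilde Z = \sum_{i \in I,\, j \in J} a_{ij}\, x_i \bar x'_j$, where $I$, $J$ partition $\{1,\dots,N\}$ and $(x'_j)$ is an independent copy of $(x_j)$. Conditional on $x'$, $\widetilde Z$ is a sum of independent centered subgaussian random variables in $(x_i)_{i \in I}$ with variance proxy $\|Ax'\|^2$, whence
\[
\P\bigl(|\widetilde Z| \geq t \bigm| x'\bigr) \leq 2\exp\!\left(-\frac{c t^2}{\|Ax'\|^2}\right).
\]
Integrating in $x'$ and controlling the Laplace transform of $\|Ax'\|^2$ (which is itself a quadratic form concentrating around $\Tr A^*A$ on the scale $\|A\|_{\mathrm{op}}$, by one further iteration of the same argument or a direct computation) produces
\[
\P\bigl(|Z_o| \geq t\bigr) \leq C\exp\!\left(-c\min\!\left(\frac{t^2}{\Tr A^*A},\; \frac{t}{\|A\|_{\mathrm{op}}}\right)\right).
\]
Substituting $t = (\delta/2)\sqrt{\Tr A^*A}$ and using the trivial bound $\|A\|_{\mathrm{op}} \leq \sqrt{\Tr A^*A}$ collapses the minimum to $c\min(\delta,\delta^2)$, completing the proof.

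The main technical hurdle is the off-diagonal step: the decoupling reduction must be implemented carefully in the complex setting, and the random prefactor $\|Ax'\|^2$ after conditioning must be handled by a short iteration of the same concentration argument. It is precisely because the statement only requests a bound in terms of $\sqrt{\Tr A^*A}$ (not of $\|A\|_{\mathrm{op}}$) that the final tail takes the mild form $\min(\delta,\delta^2)$: once $\|A\|_{\mathrm{op}} \leq \sqrt{\Tr A^*A}$ is invoked, the sharper mixed Hanson--Wright tail degrades to exactly this weaker expression, and no finer estimate on $A$ is needed.
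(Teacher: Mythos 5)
Your sketch is essentially correct, but it takes a genuinely different route from the paper: the paper does not prove Proposition \ref{p:HW} at all, it simply quotes it, referring to \cite[Prop. 4.5]{ESY3}, \cite[App. B]{EYY1} and the original paper \cite{hw71}. The cited arguments obtain the stated tail by bounding high moments (or the exponential moment) of the quadratic form directly in terms of $\Tr A^*A$ and then applying Markov's inequality; this never brings in $\|A\|_{\mathrm{op}}$, because the weak tail $Ce^{-c\min\{\de,\de^2\}}$ in Hilbert--Schmidt units is exactly what a subexponential moment bound $\E|Z|^{2p}\leq (Cp)^{2p}(\Tr A^*A)^p$ delivers after optimizing in $p$. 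You instead reprove the full mixed-tail Hanson--Wright inequality by the modern decoupling scheme (diagonal part by Bernstein, off-diagonal part by decoupling and conditional subgaussianity) and then degrade the mixed tail through $\|A\|_{\mathrm{op}}\leq\sqrt{\Tr A^*A}$; that buys a stronger intermediate statement at the cost of more machinery. The one thin point in your sketch is the conditional variance $\|Ax'\|^2$: ``one further iteration of the same argument'' is circular as phrased, since a tail bound for the positive quadratic form $x'^*A^*Ax'$ at scale $\Tr A^*A$ is itself a statement of Hanson--Wright type. The standard non-circular fix is the direct computation you mention as an alternative: bound $\E_{x'}\exp\left(\lambda\|Ax'\|^2\right)$ for $\lambda$ small compared with $\|A\|_{\mathrm{op}}^{-2}$ by inserting an independent Gaussian vector, exchanging expectations, and diagonalizing $A^*A$. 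With that step made explicit (and with the implicit assumption $\E x_j=0$, which both you and the paper's application use even though the statement does not spell it out), your argument is a complete, self-contained proof of the proposition.
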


\medskip

\subsection*{Acknowledgements.} The work of B.S. has been partially supported by ERC Starting Grant MAQD-240518.
C.C. acknowledges the support of the FIR 2013 project  ``Condensed Matter in Mathematical Physics''  (code RBFR13WAET) in the final stages of the development of the paper. A.M. acknowledges the support of the Leverhulme Trust Early Career Fellowship (ECF 2013-613). We would like to thank the referee for pointing out a mistake in a previous version of this paper.


\begin{thebibliography}{10}

\bibitem{BYY}
P.~Bourgade, H.-T. Yau, and J.~Yin, \emph{Local circular law for random
  matrices}, Probab. Theory Relat. Fields \textbf{159} (2014), no. 3-4, 545--595.

\bibitem{BYY2}
P.~Bourgade, H.-T. Yau, and J.~Yin, \emph{The local circular law {II}: {T}he
  edge case}, Probab. Theory Relat. Fields \textbf{159} (2014), no. 3-4, 619--660.

\bibitem{CMS}
C.~Cacciapuoti, A.~Maltsev, and B.~Schlein, \emph{Local {M}archenko-{P}astur
  law at the hard edge of sample covariance matrices}, J. Math. Phys.
  \textbf{54} (2013), no.~4, 043302.

\bibitem{E}
L.~Erd{\H{o}}s, \emph{Universality of {W}igner random matrices: {A} survey of
  recent results}, Russian Math. Surveys \textbf{66} (2011), no.~3, 507--626.

\bibitem{alt2}
L.~Erd{\H{o}}s and A.~Knowles, \emph{Quantum diffusion and delocalization for
  band matrices with general distribution}, Ann. Henri Poincar\'e \textbf{12}
  (2011), no.~7, 1227--1319.

\bibitem{alt1}
L.~Erd{\H{o}}s and A.~Knowles, \emph{Quantum diffusion and eigenfunction
  delocalization in a random band matrix model}, Comm. Math. Phys. \textbf{303}
  (2011), no.~2, 509--554.

\bibitem{ban2}
L.~Erd{\H{o}}s, A.~Knowles, and H.-T. Yau, \emph{Averaging fluctuations in
  resolvents of random band matrices}, Ann. Henri Poincar\'e \textbf{14} (2013), no. 8, 1837--1926.

\bibitem{ban}
L.~Erd{\H{o}}s, A.~Knowles, H.-T. Yau, and J.~Yin, \emph{Delocalization and
  diffusion profile for random band matrices}, Commun. Math. Phys. \textbf{323} (2013), no. 1, 367--416.

\bibitem{spa2}
L.~Erd{\H{o}}s, A.~Knowles, H.-T. Yau, and J.~Yin, \emph{Spectral statistics of
  {E}rd{\H o}s-{R}\'enyi {G}raphs {II}: {E}igenvalue spacing and the extreme
  eigenvalues}, Comm. Math. Phys. \textbf{314} (2012), no.~3, 587--640.

\bibitem{ekyy-rxv12}
L.~Erd{\H{o}}s, A.~Knowles, H.-T. Yau, and J.~Yin, \emph{The local semicircle
  law for a general class of random matrices}, Electron. J. Probab \textbf{18}
  (2013), no.~59, 1--58.

\bibitem{spa1}
L.~Erd{\H{o}}s, A.~Knowles, H.-T. Yau, and J.~Yin, \emph{Spectral statistics of
  {E}rd{\H{o}}s--{R}{\'e}nyi graphs {I}: {L}ocal semicircle law}, The Annals of
  Probability \textbf{41} (2013), no.~3B, 2279--2375.

\bibitem{EPRSY}
L.~Erd{\H{o}}s, S.~P{\'e}ch{\'e}, J.~A. Ram{\'{\i}}rez, B.~Schlein, and H.-T.
  Yau, \emph{Bulk universality for {W}igner matrices}, Comm. Pure Appl. Math.
  \textbf{63} (2010), no.~7, 895--925.

\bibitem{ersy-ejp10}
L.~Erd{\H{o}}s, J.~Ramirez, B.~Schlein, and H.-T. Yau, \emph{Universality of
  sine-kernel for {W}igner matrices with a small {G}aussian perturbation},
  Electr. J. Prob \textbf{15} (2010), no.~18, 526--604.

\bibitem{ERSTVY}
L.~Erd{\H{o}}s, J.~Ram{\'{\i}}rez, B.~Schlein, T.~Tao, V.~Vu, and H.-T. Yau,
  \emph{Bulk universality for {W}igner {H}ermitian matrices with subexponential
  decay}, Math. Res. Lett. \textbf{17} (2010), no.~4, 667--674.

\bibitem{ESY1}
L.~Erd{\H{o}}s, B.~Schlein, and H.-T. Yau, \emph{Semicircle law on short scales
  and delocalization of eigenvectors for {W}igner random matrices}, Ann.
  Probab. \textbf{37} (2009), no.~3, 815--852.

\bibitem{ESY2}
L.~Erd{\H{o}}s, B.~Schlein, and H.-T. Yau, \emph{Local semicircle law and
  complete delocalization for {W}igner random matrices}, Comm. Math. Phys.
  \textbf{287} (2009), no.~2, 641--655.

\bibitem{ESY3}
L.~Erd{\H{o}}s, B.~Schlein, and H.-T. Yau, \emph{Wegner estimate and level
  repulsion for {W}igner random matrices}, Int. Math. Res. Not. IMRN \textbf{3} (2010),
  no.~3, 436--479.

\bibitem{ESY6}
L.~Erd{\H{o}}s, B.~Schlein, and H.-T. Yau, \emph{Universality of random
  matrices and local relaxation flow}, Invent. Math. \textbf{185} (2011),
  no.~1, 75--119.

\bibitem{ESYY}
L.~Erd{\H{o}}s, B.~Schlein, H.-T. Yau, and J.~Yin, \emph{The local relaxation
  flow approach to universality of the local statistics for random matrices},
  Ann. Inst. H. Poincar\'e Probab. Statist. \textbf{48} (2012), 1--46.

\bibitem{EYY0}
L.~Erd{\H{o}}s, H.-T. Yau, and J.~Yin, \emph{Universality for generalized
  {W}igner matrices with {B}ernoulli distribution}, J. Comb. \textbf{2} (2011),
  no.~1, 15--81.

\bibitem{EYY1}
L.~Erd{\H{o}}s, H.-T. Yau, and J.~Yin, \emph{Bulk universality for generalized
  {W}igner matrices}, Probab. Theory Related Fields \textbf{154} (2012),
  no.~1-2, 341--407.

\bibitem{EYY}
L.~Erd{\H{o}}s, H.-T. Yau, and J.~Yin, \emph{Rigidity of eigenvalues of
  generalized {W}igner matrices}, Advances in Mathematics \textbf{229} (2012),
  no.~3, 1435--1515.

\bibitem{GT} F.~G\"{o}tze, A.~Tikhomirov, \emph{On the rate of convergence to the semi-circular law}, High Dimensional Probability VI. Progress in Probability, vol. 66, pp. 139--165. Springer, Basel (2013).

\bibitem{Gu}
J.~Gustavsson, \emph{Gaussian fluctuations of eigenvalues in the {GUE}}, Ann.
  Inst. H. Poincar\'e Probab. Statist. \textbf{41} (2005), no.~2, 151--178.

\bibitem{hw71}
D.~L. Hanson and F.~T. Wright, \emph{A bound on tail probabilities for
  quadratic forms in independent random variables}, Ann. Math. Statist.
  \textbf{42} (1971), 1079--1083.

\bibitem{Jo}
K.~Johansson, \emph{Universality of the local spacing distribution in certain
  ensembles of {H}ermitian {W}igner matrices}, Comm. Math. Phys. \textbf{215}
  (2001), no.~3, 683--705.

\bibitem{ly}
J.~O. Lee and J.~Yin, \emph{A necessary and sufficient condition for edge
  universality of {W}igner matrices}, Duke Math. J. \textbf{163} (2014), no. 1, 117--173.

 \bibitem{MS}
 A. Maltsev and B. Schlein, \emph{Average density of states of {H}ermitian {W}igner matrices}, Adv. Math.  {\bf 228} (2011), No. 5, 2797--2836.

\bibitem{MP67}
V.~A. Mar{\v{c}}enko and L.~A. Pastur, \emph{Distribution of eigenvalues in
  certain sets of random matrices}, Mat. Sb. (N.S.) \textbf{72 (114)} (1967),
  507--536.

\bibitem{py12}
N.~Pillai and J.~Yin, \emph{Universality of covariance matrices}, Ann. Appl. Probab. \textbf{24} (2014), no. 3, 935--1001. 

\bibitem{S}
A.~Soshnikov, \emph{Universality at the edge of the spectrum in {W}igner random
  matrices}, Comm. Math. Phys. \textbf{207} (1999), no.~3, 697--733.

\bibitem{tv09}
T.~Tao and V.~Vu, \emph{Random matrices: {U}niversality of local eigenvalue
  statistics}, Acta Math. \textbf{206} (2011), no.~1, 127--204.


\bibitem{tv10cmp}
T.~Tao and V.~Vu, \emph{Random matrices: {U}niversality of local eigenvalue
  statistics up to the edge}, Comm. Math. Phys. \textbf{298} (2010), no.~2,
  549--572.

\bibitem{tv-rxv09}
T.~Tao and V.~Vu, \emph{Random covariance matrices: {U}niversality of local
  statistics of eigenvalues}, Ann. Probab. \textbf{40} (2012), no.~3,
  1285--1315.

\bibitem{TV4}
T.~Tao and V.~Vu., \emph{Random matrices: {U}niversality of local spectral statistics of non-{H}ermitian matrices}, Ann. Probab. (2012, to appear). arXiv:1206.1893 [math.PR].

\bibitem{WangSoftEdge}
K.~Wang, \emph{Random covariance matrices: {U}niversality of local statistics
  of eigenvalues up to the edge}, Random Matrices Theory Appl. \textbf{1}
  (2012), no.~1, 1150005.

\bibitem{W}
E.~P. Wigner, \emph{Characteristic vectors of bordered matrices with infinite
  dimensions}, Ann. of Math. \textbf{62} (1955), no. 2, 548--564.

\bibitem{Y}
J.~Yin, \emph{The local circular law {III}: {G}eneral case}, Probab. Theory Relat. Fields. (2013). doi:10.1007/s00440-013-0539-3.

\end{thebibliography}
\end{document}